\providecommand{\catname}{\mathbf} 
\providecommand{\clsname}{\mathcal}
\providecommand{\oname}[1]{{\operatorname{\mathsf{#1}}}}
\def\defcatname#1{\expandafter\def\csname B#1\endcsname{\catname{#1}}}
\def\defcatnames#1{\ifx#1\defcatnames\else\defcatname#1\expandafter\defcatnames\fi}
\def\defclsname#1{\expandafter\def\csname C#1\endcsname{\clsname{#1}}}
\def\defclsnames#1{\ifx#1\defclsnames\else\defclsname#1\expandafter\defclsnames\fi}
\def\defbbname#1{\expandafter\def\csname BB#1\endcsname{{\bm{\mathsf{#1}}}}}
\def\defbbnames#1{\ifx#1\defbbnames\else\defbbname#1\expandafter\defbbnames\fi}
\DeclareOldFontCommand{\bf}{\normalfont\bfseries}{\mathbf}
\providecommand{\op}{\mathsf{op}}
\providecommand{\dar}{\kern-1.2pt\operatorname{\downarrow}}	
\providecommand{\uar}{\kern-1.2pt\operatorname{\uparrow}}	
\providecommand{\lsem}{\llbracket}
\providecommand{\rsem}{\rrbracket}
\providecommand{\sem}[1]{\lsem #1 \rsem}
\providecommand{\pacman}[1]{}					                     
\newcommand{\undefine}[1]{\let #1\relax}					                       
\providecommand{\mone}{{\text{\kern.5pt\rmfamily-}\mathsf{\kern-.5pt1}}}
\def\mfix#1{\oname{#1}\@ifnextchar\bgroup\@mfix{}}	       
\def\@mfix#1{#1\@ifnextchar\bgroup\mfix{}}			           
\providecommand{\case}[3]{\mfix{case}{\mathbin{}#1}{of}{#2}{\kern-1pt;}{\mathbin{}#3}}
\DeclareMathSymbol{\mathinvertedexclamationmark}{\mathord}{operators}{'074}
\DeclareMathSymbol{\mathexclamationmark}{\mathord}{operators}{'041}
\newcommand{\raisedmathinvertedexclamationmark}{%
  \mathord{\mathpalette\raised@mathinvertedexclamationmark\relax}%
}
\newcommand{\raised@mathinvertedexclamationmark}[2]{%
  \raisebox{\depth}{$\m@th#1\mathinvertedexclamationmark$}%
}
\newcommand{\Pow}{\mathcal{P}}
\renewcommand{\S}{{\mathcal{S}}}
\newcommand{\takeout}[1]{\empty}
\renewcommand{\rho}{\varrho}
\renewcommand{\inference}[3][]{\infer[#1]{~#3~}{~#2~}}
\setlist[enumerate,1]{label=(\arabic*),font=\normalfont,align=left,leftmargin=0pt,labelindent=0pt,listparindent=\parindent,labelwidth=0pt,itemindent=!,topsep=2pt,parsep=0pt,itemsep=2pt,start=1}
\setlist[enumerate,2]{label=(\alph*),font=\normalfont,labelindent=*,leftmargin=*,start=1}
\setlist[itemize]{labelindent=*,leftmargin=*}
\setlist[description]{labelindent=*,leftmargin=*,itemindent=-1 em}
\newcommand{\pushright}[1]{\ifmeasuring@#1\else\omit\hfill$\displaystyle#1$\fi\ignorespaces}
\newcommand{\pushleft}[1]{\ifmeasuring@#1\else\omit$\displaystyle#1$\hfill\fi\ignorespaces}
\tikzstyle{shiftarr}=[
\tikzset{
    commutative diagrams/.cd,
    arrow style=tikz,
    diagrams={>=Straight Barb},
    row sep=large,
    column sep = huge
}
\theoremstyle{definition}
\numberwithin{equation}{section}
\newcommand{\Var}{\mathop{\mathsf{Var}}}
\newcommand{\Expr}{\mathop{\mathsf{Expr}}}
\let\xmpsto=\xmapsto
\renewcommand{\xmapsto}[1]{\xmpsto{~#1~}}
\theoremstyle{definition}
\newtheorem{notation}[theorem]{Notation}
\newtheorem{assumptions}[theorem]{Assumptions}
\newcommand{\stepsto}{\longrightarrow}
\newcommand{\xstepsto}[1]{\xrightarrow{~ #1 ~}}
\newcommand{\xstepstostar}[1]{\xstepsto{#1}\mathrel{\vphantom{\to}^{\ast}}}
\renewcommand{\leadsto}{\,\begin{tikzpicture} \draw [
  line width=0.7pt,
  -{Classical TikZ Rightarrow[width=4.4pt,flex=1]},
  line join=round,
  line width=0.6pt,
  decorate, decoration={
    zigzag,
    segment length=5,
    amplitude=1.5,
    pre length=1.5pt,
    post=lineto,
    post length=2.5pt
}] (0,0) -- (0.50,0); \end{tikzpicture}
}
\newcommand{\xleadsto}[1]{%
    \mathrel{%
        \begin{tikzpicture}[%
            baseline={(current bounding box.south)}
            ]
        \node[%
            ,inner sep=.54ex
            ,align=center
            ] (tmp) {~$\scriptstyle #1$~};
        \path[%
            ,draw
            ,line width=0.6pt
            ,-{Classical TikZ Rightarrow[width=4.4pt,flex=1]}
            ,line join=round
            ,decorate,decoration={%
                ,zigzag
                ,segment length=5
                ,amplitude=1.5
                ,pre length=1.5pt
                ,post=lineto
                ,post length=2.5pt
                }
            ] 
        (tmp.south west) -- (tmp.south east);
        \end{tikzpicture}
        }
}
\newcommand{\leadstoalt}{\,\begin{tikzpicture} \draw [
      line width=0.7pt,
      -{Classical TikZ Rightarrow[width=4.4pt,flex=1]},
      line width=1.0pt,
      line join=round,
      decorate, decoration={
        zigzag,
        segment length=5,
        amplitude=1.5,
        pre length=1.5pt,
        post=lineto,
        post length=2.5pt
    }] (0,0) -- (0.50,0); \end{tikzpicture}
}
\newcommand{\xleadstoalt}[1]{%
    \mathrel{%
        \begin{tikzpicture}[%
            baseline={(current bounding box.south)}
            ]
        \node[%
            ,inner sep=.54ex
            ,align=center
            ] (tmp) {~$\scriptstyle #1$~};
        \path[%
            ,draw
            ,line width=1.0pt
            ,-{Classical TikZ Rightarrow[width=4.4pt,flex=1]}
            ,line join=round
            ,decorate,decoration={%
                ,zigzag
                ,segment length=5
                ,amplitude=1.5
                ,pre length=1.5pt
                ,post=lineto
                ,post length=2.5pt
                }
            ] 
        (tmp.south west) -- (tmp.south east);
        \end{tikzpicture}
        }
}
\newcommand{\xleadstostar}[1]{\xleadsto{#1}\mathrel{\vphantom{\to}^{\ast}}}
\newcommand{\stopcase}{$\hfill\blacktriangleleft$}
\renewcommand{\S}{\Sigma}
\newcommand{\G}{\Gamma}
\newcommand{\B}{\beta}
\newcommand{\Cfg}{\mathtt{C}}
\renewcommand{\CC}{\mathcal{C}}
\newcommand{\TT}{\mathcal{T}}
\newcommand{\QQ}{\mathcal{Q}}
\newcommand{\RR}{\mathcal{R}}
\newcommand{\OO}{\mathcal{O}}
\newcommand{\FF}{\mathcal{F}}
\newcommand{\GG}{\mathcal{G}}
\newcommand{\setOf}[1]{\{ #1 \}}
\newcommand{\defeq}{\overset{\mathrm{def}}{=}}
\renewcommand{\trace}{\mathsf{Trace}}
\newcommand{\wtrace}{\mathsf{WTrace}}
\newcommand{\subG}{\! {}_{\GG}}
\definecolor{Tawny}{rgb}{0.835, 0.369, 0.0}
\newcommand{\opsem}[1]{\boldsymbol{\textcolor{Tawny}{#1}}}
\newcommand{\opstepsto}{\opsem{\stepsto}}
\newcommand{\xopstepsto}[1]{\opsem{\xstepsto{#1}}}
\newcommand{\opleadsto}{\opsem{\leadstoalt}}
\newcommand{\xopleadsto}[1]{\opsem{\xleadstoalt{#1}}}
\newcommand{\opconfig}[3]{\opsem{\config{#1}{#2}{#3}}}
\newcommand{\opconfigA}{\opconfig{\G}{\S}{\B}}
\newcommand{\opconfigB}{\opconfig{\G'}{\S'}{\B'}}
\newcommand{\opupmap}{\opsem{\upmap}}
\newcommand{\opqmap}{\opsem{\qmap}}
\newcommand{\opemap}{\opsem{\emap}}
\newcommand{\opprmap}{\opsem{\prmap}}
\definecolor{Cerulean}{rgb}{0.0, 0.447, 0.698}
\newcommand{\stsem}[1]{
  \textcolor{Cerulean}{#1}
}
\newcommand{\ststepsto}{
  \stsem{\stepsto}
}
\newcommand{\xststepsto}[1]{
  \stsem{\xstepsto{#1}}
}
\newcommand{\stleadsto}{
  \stsem{\leadsto}
}
\newcommand{\xstleadsto}[1]{
  \stsem{\xleadsto{#1}}
}
\newcommand{\stconfig}[3]{
  \stsem{\config{#1}{#2}{#3}}
}
\newcommand{\stconfigA}{
  \stconfig{\G}{\S}{\B}
}
\newcommand{\stconfigB}{
  \stconfig{\G'}{\S'}{\B'}
}
\newcommand{\stconfigC}{
  \stconfig{\G''}{\S''}{\B''}
}
\newcommand{\stupmap}{
  \stsem{\upmap}
}
\newcommand{\stqmap}{
  \stsem{\qmap}
}
\newcommand{\stmmap}{
  \stsem{\mmap}
}
\newcommand{\stH}{
  \stsem{H}
}
\definecolor{Darkcyan}{rgb}{0.0, 0.62, 0.451}
\newcommand{\quiet}[1]{
  \textcolor{Darkcyan}{#1}
}
\newcommand{\quietleadsto}{
  \quiet{\xleadsto{\tau}}
}
\newcommand{\quietleadstostar}{
  \quiet{\xleadsto{\tau}\mathrel{\vphantom{\to}^{\ast}}}
}
\newcommand{\xquietleadsto}[1]{
  \quiet{\xleadsto{#1}}
}
\newcommand{\xquietstepsto}[1]{
  \quiet{\xstepsto{#1}}
}
\newcommand{\add}{\mathtt{add}}
\newcommand{\rid}{\mathtt{r}}
\newcommand{\dlvr}{\mathtt{dlvr}}
\newcommand{\resp}{\mathtt{ret}}
\newcommand{\qry}{\mathtt{qry}}
\newcommand{\upd}{\mathtt{upd}}
\newcommand{\send}{\mathtt{send}}
\renewcommand{\state}[1]{\langle #1 \rangle}
\newcommand{\opsf}{\mathtt{op}}
\newcommand{\stsf}{\mathsf{st}}
\newcommand{\config}[3]{\langle #1 \mid #2 \mid #3 \rangle}
\newcommand{\replicas}{\mathtt{RID}}
\newcommand{\initstate}{\mathtt{init}}
\newcommand{\darrow}{\downarrow}
\newcommand{\Prog}{\mathsf{Prog}}
\renewcommand{\eval}[1]{\mathcal{E} \sem{#1}}
\newcommand{\Op}{\mathsf{Oper}}
\renewcommand{\op}{\mathtt{op}}
\newcommand{\simulates}{\gtrsim}
\newcommand{\simulatedby}{\lesssim}
\newcommand{\wsimulates}{\gtrapprox}
\newcommand{\wsimulatedby}{\lessapprox}
\newcommand{\host}{\mathsf{host}}
\newcommand{\guest}{\mathsf{guest}}
\newcommand{\msg}{\mathtt{msg}}
\newcommand{\OpUpdate}{\hyperref[rule:OpUpdate]{\textsf{ OpUpdate }}}
\newcommand{\OpQuery}{\hyperref[rule:OpQuery]{\textsf{ OpQuery }}}
\newcommand{\OpDeliver}{\hyperref[rule:OpDeliver]{\textsf{ OpDeliver }}}
\newcommand{\StUpdate}{\hyperref[rule:StUpdate]{\textsf{ StUpdate }}}
\newcommand{\StQuery}{\hyperref[rule:StQuery]{\textsf{ StQuery }}}
\newcommand{\StDeliver}{\hyperref[rule:StDeliver]{\textsf{ StDeliver }}}
\newcommand{\StSend}{\hyperref[rule:StSend]{\textsf{ StSend }}}
\newcommand{\StUpdBC}{\hyperref[rule:StUpdBC]{\textsf{ StUpdBC }}}
\newcommand{\hb}{\prec_{\mathsf{hb}}}
\newcommand{\interp}{\stsem{\mathtt{interp}_{S}}}
\newcommand{\concurrent}[2]{#1\, \|\, #2}
\newcommand{\upmap}{\mathsf{update}}
\newcommand{\prmap}{\mathsf{prep}}
\newcommand{\emap}{\mathsf{effect}}
\newcommand{\qmap}{\mathsf{query}}
\newcommand{\mmap}{\mathsf{merge}}
\newcommand{\bcast}{\mathtt{bcast}}
\newcommand{\rmv}[1]{ \setminus \{ #1 \} }
\theoremstyle{plain}
\theoremstyle{definition}
\newtheorem{assumption}[theorem]{Assumption}
\begin{document}\allowdisplaybreaks

\title{CRDT Emulation, Simulation, and Representation Independence}

\author{Nathan Liittschwager}
\orcid{0009-0005-5602-8509}
\affiliation{%
  \institution{University of California, Santa Cruz}
  \city{Santa Cruz}
  \country{USA}
}
\email{nliittsc@ucsc.edu}

\author{Jonathan Castello}
\orcid{0000-0002-8548-3683}
\affiliation{%
  \institution{University of California, Santa Cruz}
  \city{Santa Cruz}
  \country{USA}
}
\email{jcaste14@ucsc.edu}

\author{Stelios Tsampas}
\orcid{0000-0001-8981-2328}
\affiliation{%
  \institution{University of Southern Denmark}
  \city{Odense}
  \country{Denmark}
}
\email{stelios@imada.sdu.dk}

\author{Lindsey Kuper}
\orcid{0000-0002-1374-7715}
\affiliation{%
  \institution{University of California, Santa Cruz}
  \city{Santa Cruz}
  \country{USA}
}
\email{lkuper@ucsc.edu}

\begin{abstract}
Conflict-free replicated data types (CRDTs) are distributed data structures designed for fault tolerance and high availability. CRDTs have historically been
taxonomized into \emph{state-based} CRDTs, in which replicas apply updates locally and periodically broadcast their state to other replicas over the network, and \emph{operation-based} (or \emph{op-based}) CRDTs, in which every state-updating operation is individually broadcast.  In the literature, state-based and op-based CRDTs are considered equivalent due to the existence of algorithms that let them emulate each other, and verification techniques and results that apply to one kind of CRDT are said to apply to the other thanks to this equivalence. However, what it means for state-based and op-based CRDTs to emulate each other has never been made fully precise.  Emulation is nontrivial since state-based and op-based CRDTs place different requirements on the underlying network with regard to both the causal ordering of message delivery, and the granularity of the messages themselves.

We specify and formalize CRDT emulation in terms of \emph{simulation} by modeling CRDTs and their interactions with the network as transition systems.  We show that emulation can be understood as \emph{weak simulations} between the transition systems of the original and emulating CRDT systems, thus closing a gap in the CRDT literature. We precisely characterize which properties of CRDT systems are preserved by our weak simulations, and therefore which properties can be said to be preserved by emulation algorithms.  Finally, we leverage our emulation results to obtain a general \emph{representation independence} result for CRDTs: intuitively, clients of a CRDT cannot tell whether they are interacting with a state-based or op-based CRDT in particular.

\end{abstract}

\begin{CCSXML}
<ccs2012>
<concept>
<concept_id>10003752.10010124</concept_id>
<concept_desc>Theory of computation~Semantics and reasoning</concept_desc>
<concept_significance>500</concept_significance>
</concept>
<concept>
<concept_id>10010147.10010919</concept_id>
<concept_desc>Computing methodologies~Distributed computing methodologies</concept_desc>
<concept_significance>500</concept_significance>
</concept>
</ccs2012>
\end{CCSXML}

\ccsdesc[500]{Theory of computation~Semantics and reasoning}
\ccsdesc[500]{Computing methodologies~Distributed computing methodologies}

\keywords{CRDTs, emulation}

\maketitle

\section{Introduction}\label{sec:introduction}

In distributed data storage systems, \emph{data replication} is a ubiquitous mechanism for guarding 
against machine failures and ensuring that data is physically close to far-flung clients.
Informally, replication copies a data object over $n$ spatially separated sites, or \emph{replicas}, 
with each replica acting as an independent copy of the original object.
With replication comes the challenge of ensuring that replicas remain
consistent with one another
in the face of inevitable network partitions and 
clients who demand ``always-on'' access to data.
The gold standard of consistency for such a replicated system is 
\emph{linearizability}~\cite{herlihy-wing-linearizability}, but it can be impractical to implement --- indeed, systems that are prone to network partitions and that prioritize \emph{high availability} of data must necessarily do so at the
expense of linearizability~\cite{gilbert-lynch-cap, gilbert-lynch-cap-perspectives}.

The quest for an optimal point in the availability/consistency trade-off space
has led to the development of \emph{conflict-free replicated data types}
(CRDTs)~\cite{shapiro-crdts, roh-radts, preguica-crdts}, which are data structures designed
for high availability through replication.
CRDTs sacrifice linearizability in favor of the weaker safety property
\emph{strong convergence}~\cite{shapiro-crdts}, which says that replicas that have received and 
applied the same \emph{set} of updates---in any order---will agree in their observable state.
When coupled with the liveness guarantee of \emph{eventual delivery} of updates to replicas, a system of CRDT replicas achieves 
\emph{strong eventual consistency}~\citep{shapiro-crdts}, which says that eventually the observable state of all replicas will agree.
CRDTs have been an active area of research, with considerable attention paid to the specification and verification (of various properties, but especially strong convergence)
of CRDT designs~\cite{burckhardt-rdts-svo,zeller-state-based-verification,gomes-verifying-sec,
gadduci-crdt-semantics,liu-lh-crdts,nair-state-based-verification,
liang-feng-acc,nieto-aneris-op-based,nieto-aneris-state-based}; recent work tackles
automated verification~\cite{nagar-jagannathan-automated-crdt-verification,deporre-verifx} 
and even synthesis of correct-by-construction CRDTs~\cite{laddad-crdt-synthesis}.

In their pioneering work on CRDTs, \citet{shapiro-crdts} taxonomized CRDTs into \emph{state-based} 
CRDTs, in which replicas apply updates locally and periodically broadcast their 
local state (which may be the result of multiple local updates) to other 
replicas over the network, and \emph{operation-based} (or \emph{op-based}) CRDTs, in which every 
state-updating operation is individually broadcast and applied at each replica.\footnote{More recently, 
\citet{almeida-delta-state-crdts} introduced \emph{delta state} CRDTs, an optimization of traditional 
state-based CRDTs in which only state \emph{changes}, rather than entire states, 
must be disseminated over the network.}
In state-based CRDTs, the replica state space is a
join-semilattice, and a replica receiving an update from a remote replica will apply 
the update locally by taking the least upper bound (join) of its local state and the received update.
As a result, the order of received updates is immaterial.
Op-based CRDTs, on the other hand, rely on stronger ordering guarantees 
(in particular, \emph{causal broadcast}~\cite{birman-reliable,birman-lightweight-cbcast}) 
from the underlying network, while requiring concurrently applied updates
to commute.
Both the state-based and op-based approaches result in strong convergence, the defining characteristic of CRDTs.

Most work on CRDT specification and verification focuses exclusively on either 
state-based~\cite{zeller-state-based-verification,gadduci-crdt-semantics,
nair-state-based-verification,timany-trillium,nieto-aneris-state-based,laddad-crdt-synthesis} 
or op-based~\cite{gomes-verifying-sec,nagar-jagannathan-automated-crdt-verification,liu-lh-crdts,
liang-feng-acc,nieto-aneris-op-based} CRDTs.
The justification for this choice is that state-based and op-based CRDTs can \emph{emulate} each other.
\citet{shapiro-crdts} give general algorithms by which one may
construct a state-based CRDT out of a given op-based CRDT, and vice versa.
However, \citeauthor{shapiro-crdts} stop short of
formally defining a notion of emulation and proving that their construction satisfies it.
In particular, \citeauthor{shapiro-crdts} do not formalize any relationship between the 
\emph{observable behaviors} of the original system and the newly constructed system,
beyond pointing out that if the original system exhibits strong eventual consistency, then so does the new one.
This property is insufficient for correctness, since a trivial CRDT which always
provides the same output in response to all inputs is strongly eventually consistent.
Yet the notion that state-based and op-based CRDTs can emulate each other
is frequently appealed to in the literature.
For instance, \citet{nagar-jagannathan-automated-crdt-verification}, in their work on verification of 
op-based CRDTs, write that ``our technique naturally extends to state-based CRDTs since they can be 
emulated by an op-based model,'' and \citet{laddad-crdt-synthesis}, in their work on synthesis of 
state-based CRDTs, write that state-based CRDTs ``can always be translated to op-based CRDTs if necessary.''
It has not been clear whether the emulation algorithms given by \citeauthor{shapiro-crdts} actually do preserve
the properties discussed in these works.
This makes the notion of emulation in CRDTs ``load-bearing'',
and therefore deserving of being made precise.

In this paper, we seek to close this gap in the CRDT literature and formalize
the notion of CRDT emulation. To do so, we model emulation as formal \emph{simulation}
of transition systems, where each transition system models a state-based or op-based
replicated object, with semantics for client-driven updates,
and message passing of updates between individual replicas.
Unlike \citeauthor{shapiro-crdts}, 
we formally model the network behavior as well as the semantics of individual replicas.
Our framework allows us to specify emulation as a kind of \emph{weak simulation} between the original
system (which we call the \emph{host} CRDT system) and a new system constructed from the original (which we call the \emph{guest} CRDT system).
This model of emulation lets us specify and prove that a guest CRDT exhibits all the observable behaviors of
the host CRDT, and vice versa.

The challenge in reasoning about CRDT emulation is that state-based and op-based CRDTs 
place different requirements on the behavior of the underlying network with regard to 
\emph{causal ordering} of messages.  Op-based CRDTs require causal 
message ordering, but state-based CRDTs do not, and emulating an op-based CRDT with a 
state-based CRDT requires confronting this mismatch in network behavior. \citet{shapiro-crdts}'s original construction works by 
essentially implementing the causal ordering mechanism inside the host CRDT itself.
Our simulation results, however, are about an abstract model of CRDTs operating in a network.
To ground our results, we show that a simple, yet expressive, stateful programming language
can interact with our model of CRDTs by invoking update and query commands, obtaining
observable values from the underlying CRDTs. By leveraging our simulation results,
we then show that from the programmer's point of view, one can interchange the underlying host CRDT
for the constructed guest CRDT with no change in observable behavior.

To summarize, we make the following specific contributions:
\begin{itemize}
      \item Using our formalism, we give a precise study and characterization of
            \emph{emulation} between op-based and state-based CRDTs (\Cref{sec:emulation}).
            Our main result says that a state-based (resp. op-based) host CRDT is
            \emph{weakly simulated} by the constructed op-based (resp. state-based) guest CRDT, and vice versa.
            What makes the simulation interesting
            and challenging is the handling of message delivery.
            In the state-based-to-op-based direction, 
            merging a state can be weakly simulated by a sequence of message deliveries,
            and in the op-based-to-state-based direction,
            delivering a message can be weakly simulated by merging a
            carefully chosen state. It turns out that
            asserting the existence of such a state is nontrivial.
      \item We characterize which properties of CRDT systems are preserved by our weak simulations, and therefore characterize the class of properties that are applicable to a state-based (resp. op-based) CRDT as long as they are applicable to an op-based (resp. state-based) CRDT (\Cref{sec:preserved-properties}).
      \item Finally, we leverage our main emulation result to obtain a 
            general \emph{representation independence} result for 
            CRDTs~(\Cref{sec:representation-independence}).
            Informally, our result says that clients of a
            CRDT cannot tell whether they are interacting with a 
            state-based (resp. op-based) host CRDT or its op-based (resp. state-based) guest CRDT.
\end{itemize}
\noindent Before getting into our contributions, we begin with background and motivating examples in~\Cref{sec:preliminaries}, and we formalize the semantics of CRDTs in~\Cref{sec:coalg}. We discuss related work in \Cref{sec:related} and conclude in \Cref{sec:conclusion}.

\section{Preliminaries and Motivation}\label{sec:preliminaries}

In this section, we give background on CRDTs~(\Cref{subsec:crdt-background}) and on simulation~(\Cref{subsec:simulations-background}). We then give two motivating examples~(\Cref{subsec:motivating-example}) to show why whether op-based and state-based CRDTs can emulate each other is more subtle than it might appear at first glance.

We use the following notational conventions:
      \begin{itemize}
        \item Op-based semantics will be typeset in $\opsem{\textbf{orange}}$,
              e.g., $\opleadsto$, $\opstepsto$, $\opconfigA$;
        \item State-based semantics will be typeset in $\stsem{\text{blue}}$,
              e.g., $\stleadsto$, $\ststepsto$, $\stconfigA$;
        \item Silent transitions (in either op-based or state-based CRDT systems) will be typeset in $\quiet{\text{green}}$: $\xquietleadsto{\tau}$.
      \end{itemize}

\subsection{Background on CRDTs}\label{subsec:crdt-background}

Both op-based and state-based CRDTs enjoy \emph{strong eventual consistency} (SEC for short)~\citep{shapiro-crdts}, which is a combination
of a safety guarantee (\emph{strong convergence}) and a liveness guarantee (\emph{eventual delivery}).
Strong convergence says that if any two replicas receive and apply the same set of updates, then they will have the same observable
state. Eventual delivery says that eventually all replicas receive and apply all updates.
Put together, SEC guarantees that each replica converges to the same 
state.
Op-based and state-based CRDTs accomplish this by placing different constraints on
the network, their allowed operations, and their state space.

\subsubsection{Op-Based CRDTs}\label{subsec:op-based-crdts}

Op-based CRDTs are implemented as a distributed system by having each node (a \emph{replica}) 
implement the same object (seen as a tuple consisting of: 
a replica ID $\rid$, a state $s$, and a set of methods).
Op-based CRDTs achieve strong convergence by requiring each replica $\rid$ 
to sequentially (and independently of other replicas)
apply client-supplied operations as part of a two-phase protocol. The first phase, called \emph{prepare-update},
has no side effects and consists of a replica $\rid$ taking a given operation $\op$ and generating a message $m$ 
that contains the effects of the operation, along with auxiliary metadata. The second phase,
called an \emph{effect-update}, applies that message locally. Immediately after, the replica $\rid$
propagates $m$ to all other replicas $\rid' \neq \rid$ by a \emph{reliable causal broadcast} mechanism.
The separate phases are necessary because in more complicated data structures, e.g., the op-based directed
graph CRDT~\citep{shapiro-crdts}, the effect of an operation can depend on the state.
It is typical to consider these two phases as a single atomic action called an \emph{update},
but logically delineate the two phases by modeling prepare-update and effect-update
as a pair of functions, $\prmap$ and $\emap$, 
respectively~\cite{shapiro-crdts,nieto-aneris-op-based,pure-op-based}.
The update procedure is expressed in \cref{fig:op-based-update}.

\algnewcommand\algorithmicon{\textbf{on}}
\algrenewtext{Function}[1]{\algorithmicon\; #1\, :}
\begin{algorithm}[H]
      \caption{The op-based update as described in \citet{shapiro-crdts}. The prepare phase is used
              so that replicas may attach any needed metadata to messages, e.g., a vector clock~\citep{mattern-vector-time,fidge1988timestamps,schmuck-dissertation}
              and the source replica $\rid$ ID to make messages unique, and ordered by
              potential causality~\cite{lamport-clocks}.}\label{fig:op-based-update}
      \begin{algorithmic}[1]
            \Procedure{Op-based Update}{$\rid : \replicas,  s : S, \op: \mathtt{Op}$}
            \State $m \gets \mathsf{prep}(\rid, s, \op)$\Comment{prepare phase}
            \State $s' \gets \emap(m, s)$ \Comment{effect phase}
            \State $\mathsf{put}(s')$
            \State $\mathsf{broadcast}(\rid, m)$ 
            \EndProcedure
      \end{algorithmic}
\end{algorithm}

Reliable causal broadcast must enforce that messages are reliably delivered (i.e., not lost), and are applied at each replica in an order consistent with the \emph{causal order},
in the specific sense of Lamport's \emph{happens-before} 
partial order~\citep{lamport-clocks}, which we denote $\prec_{\mathrm{hb}}$. Two messages $m$ and $m'$ are
related by $\prec_{\mathrm{hb}}$ if one is a potential cause for the other, 
e.g., $m \prec_{\mathrm{hb}} m'$ if the sending of $m$ is a potential cause
for the sending of $m'$ (we give a more precise description in \cref{sec:opsem}). Conversely, two messages $m, m'$ are
considered \emph{concurrent} (written $\concurrent{m}{m'}$) precisely when
they are unrelated by happens-before, i.e., 
$\neg(m \prec_{\mathrm{hb}} m') \land \neg(m' \prec_{\mathrm{hb}} m)$.
In this case, the effects of $m$ and $m'$ must commute.
\Cref{fig:op-based-object} summarizes the specification of an op-based CRDT replica.

\begin{remark}\label{rem:cbcast}
      If message effects are \emph{always} commutative,
      i.e.,
      \begin{equation}\label{eq:always-commutes}
            \forall m, m' \in M\, :\, (\emap(m') \circ \emap(m))(s) = (\emap(m) \circ \emap(m'))(s)
      \end{equation}
      then the requirement that an op-based CRDT be implemented on top of a reliable causal broadcast mechanism
      may be relaxed to simply reliable broadcast --- all replicas eventually receive
      all broadcasted messages, but with no requirements on their order.
\end{remark}

\begin{figure*}[h]
      \begin{align*}
            & \textbf{Parameters :} \\
            & \quad S : \text{states,}\;  \mathtt{Op} : \text{operations},\; 
            M : \text{messages,}\; Q : \text{queries},\; V : \text{values,}\;
             \\
            & \quad  s^{0} \in S : \text{ initial state } \\
            & \textbf{ Functions :}\\
            & \quad \prmap(\rid, \op, s) : 
                        \text{ replica $\rid$ prepares a message 
                              $m$ that contains the effects of $\op$} \\
            & \quad \emap(m,s) : \text{ applies the effects of $m$ to local state $s$}\\
            & \quad \qmap(q, s) : \text{ $q$ queries the state $s$ returning some value $v$} \\
            & \textbf{Assertions :} \\
            & \quad \text{Messages $m$ are handled by reliable causal broadcast mechanism (if needed)} \\
            & \quad \text{$m$, $m'$ are \emph{concurrent}} \implies 
            (\emap(m') \circ \emap(m))(s) = (\emap(m) \circ \emap(m'))(s)
      \end{align*}
      \caption{Specification of an Op-Based CRDT Object}\label{fig:op-based-object}
\end{figure*}

\subsubsection{State-Based CRDTs}

While op-based CRDTs require that updates are 
applied in an order consistent with the causal order,
state-based CRDTs instead require that (i) the local state space $S$
of each replica $\rid$ forms a \emph{join-semilattice} with join operator $\sqcup$,
and (ii) local updates can only make \emph{inflationary}
updates to the current state of the respective replica, in the sense that
$s \leq \upmap(\rid, \op, s)$ for each operation $\op$.
This means that state-based CRDTs do not need a special preparation phase for message passing.
Instead, it suffices to have replicas
copy their local state $s$ and send it to other replicas, which may invoke
a $\mmap$ method to join their local state with $s$ using the join operator $\sqcup$
that the state space $S$ is equipped with.
Since $\mmap$ is implemented in terms of $\sqcup$, it advances the replica's state toward the least upper bound of all replicas' states. \Cref{fig:st-based-object} summarizes the specification of a state-based CRDT replica.

\begin{figure*}[h]
      \begin{align*}
            & \textbf{Parameters :} \\
            & \quad S : \text{states,}\; \mathtt{Op} : \text{operations,}\; 
            Q : \text{queries},\; V : \text{values,}\; \\
            & \quad \sqcup : \text{a join, i.e., $(S, \sqcup)$ is a join-semilattice} \\
            & \quad  s^{0} \in S : \text{ initial state } \\
            & \textbf{ Functions :}\\
            & \quad \upmap(\rid,\op, s) : \text{ inflationary update } \\
            & \quad \mmap(s, s') : \text{ merges states $s$ and $s'$ together, defined in terms of $\sqcup$} \\
            & \quad \qmap(q, s) : \text{ $q$ queries the state $s$ returning some value $v$} \\
            & \textbf{Assertions :} \\
            & \quad \text{$\sqcup : S \times S \to S$ is associative, commutative, idempotent} \\
            & \quad \mmap(s, s') = s \sqcup s' \\
            & \quad \text{$\upmap$ is inflationary: 
            $\forall s, \op\, :\, \upmap(\rid, \op, s) \geq s$, i.e., $s \sqcup \upmap(\rid, \op, s) = \upmap(\rid, \op, s)$.}
      \end{align*}
      \caption{Specification of a State-Based CRDT Object}\label{fig:st-based-object}
\end{figure*}

\subsubsection{Emulation}

\citet{shapiro-crdts} argue that it is possible for an op-based
CRDT to be \emph{emulated} by a state-based CRDT and vice versa. To that end,
they define two transformations, one that constructs an op-based CRDT given
a state-based CRDT, and one that constructs a state-based CRDT given an op-based CRDT.
The transformations are given as a pair of algorithms, which we denote as
a pair of mappings $\GG^{\stsf \to \opsf}$ and $\mathcal{G}^{\opsf \to \stsf}$ 
for the state-to-op-based transformation and the op-to-state-based transformation respectively.
We defer the precise description of $\GG^{\stsf \to \opsf}$ and $\mathcal{G}^{\opsf \to \stsf}$
to \cref{sec:emulation}, but one can think of $\GG^{\stsf \to \opsf}$ as 
taking an object satisfying the specification in \cref{fig:st-based-object} 
and constructing an object that satisfies the specification in
\cref{fig:op-based-object}. Likewise, $\mathcal{G}^{\opsf \to \stsf}$ takes a given op-based
object and constructs a state-based one.
The original
description of the algorithms can be found in \citet{shapiro-crdts}.

We establish the following terminology with respect to emulation, used throughout the paper.

\begin{definition}[Host and Guest Objects]\label{def:host-guest}
      Let $\OO_{\kappa}$ and $\OO_{\lambda}$ denote two objects, and suppose there is a translation
      $\GG^{\kappa \to \lambda}$ between them which constructs $\OO_{\lambda}$ given
      $\OO_{\kappa}$, in the above sense. Then we say $\OO_{\kappa}$ is the \emph{host object},
      and $\OO_{\lambda}$ is the \emph{guest object}.
\end{definition}
Our use of ``host'' and ``guest'' is meant to evoke the fact that the guest object is implemented on 
top of the host object and can make use of the operations that the host object provides.

While \citet{shapiro-crdts} define their translation algorithms and argue that the
resulting objects satisfy SEC, they do not give a formal definition of
what is meant by ``emulation'' other than that the translated CRDT should satisfy SEC.
Indeed, any host CRDT object may be mapped to a trivial guest CRDT
that returns the identity for every input operation and received message. Such a CRDT is indeed SEC, 
but is clearly pathological, and not in the spirit that 
\citeauthor{shapiro-crdts} intend.
Rather, the idea seems to be that emulation should let the user
swap out a given op-based (resp. state-based) host CRDT for the
corresponding state-based (resp. op-based) guest CRDT.
With this apparent understanding, CRDT specification and verification work tends to focus exclusively on either 
state-based 
or op-based CRDTs, with the justification that 
results that apply to one kind of CRDT should 
straightforwardly ``transfer'' or ``generalize'' to the other.
With this in mind, 
we believe emulation is worth making precise.

\subsection{Simulations}\label{subsec:simulations-background}

The semantic model we use in this paper is a labelled transition system (LTS),
or tuples of the form $(S, \Lambda, \stepsto)$, where $S$ is the state space,
$\Lambda$ is a label set, and $\stepsto$ is a subset of $S \times \Lambda \times S$.
called the \emph{transition relation}. We say a state $s$ transitions to a state $s'$
by a $\alpha \in \Lambda$ if $(s, \alpha, s') \in \stepsto$, and we write $s \xstepsto{\alpha} s'$.
We use the letter $\tau$ (and sometimes $\bot$) to denote transitions that are
\emph{silent} or \emph{unobservable} from the point of view of an observer outside the system,
e.g., a client of a distributed replicated system does not observe the message-passing behavior
that takes place inside the system. We sometimes refer to an LTS purely by its
transition relation (e.g., $\stepsto$), if the state space and label set are clear.

We prefer to define transition relations informally using sets of small-step style rules,
and omit the precise definition of the label set $\Lambda$, instead letting $\Lambda$ be
implicitly defined by the rules. For example, given an LTS $\stepsto$, 
we define its \emph{saturation}, or \emph{weak} transition
relation as a starred arrow, e.g. $\stepsto^{\ast}$, as the least relation closed
under the following rules: 
\[
  \inference{}{s \xstepstostar{\tau} s}
  \qquad
  \inference{s \xstepstostar{\tau} s'' & s'' \xstepsto{\tau} s'}{s \xstepstostar{\tau} s'}
  \qquad
  \inference{s \xstepstostar{\tau} s_{1} & s_{1} \xstepsto{\alpha} s_{2} & s_{2} \xstepstostar{\tau} s'}{s \xstepstostar{\alpha} s'}
\]
Note that $\stepsto^{\ast}$ is the reflexive and transitive closure of $\stepsto$ under silent moves.

Given an LTS $(S, \Lambda, \stepsto)$ with a distinguished initial state $s_0 \in s$,
we say a sequence of labels $\alpha_1, \dots, \alpha_n \in \Lambda$ is a \emph{trace}
or a \emph{behavior} of the LTS if there exists a sequence of transitions
\[s_0 \xstepsto{\alpha_1} s_1 \cdots s_{n-1} \xstepsto{\alpha_n} s_n .\]
We refer to the above alternating sequence of
labels and successor states as an \emph{execution}.

From the perspective of an observer, the internal states in LTS's are immaterial,
and comparison of two LTS's is done by comparing their traces. Reasoning about traces can be done via \emph{simulation},
which is widely used to compare both models and implementations of distributed and 
concurrent systems~\citep{lamport-TLA,lynch1988,milner-ccs,burckhardt-rdts-svo,wilcox2015verdi}.

\begin{definition}[Simulation]
    \label{def:sim}
    Let $X$ and $Y$ be sets, and $\Lambda$ a set of labels.
    Let $\stepsto_{X}\,\subseteq X \times \Lambda \times X$ and 
    $\stepsto_{Y}\,\subseteq Y \times \Lambda \times Y$ be a pair of transition systems.
    We say a relation $R \subseteq X \times Y$ is a \emph{simulation} if for every pair $(x, y) \in R$,
    and $\alpha \in \Lambda$:
      \[ x \xstepsto{\alpha}_{X} x' \implies 
      \exists y' \in Y\, :\, y \xstepsto{\alpha}_{Y} y' \land (x', y') \in R. \]
    In this case, we say $y$ \emph{simulates} $x$, or $x$ is \emph{simulated by} $y$.
\end{definition}
We can express simulations diagrammatically as below.
\[\begin{tikzcd}[ampersand replacement=\&,cramped,sep=small]
	x \& y \\
	{x'}
	\arrow["R", dotted, no head, from=1-1, to=1-2]
	\arrow["\alpha"', from=1-1, to=2-1]
\end{tikzcd}
\quad
\textit{implies}
\quad
\begin{tikzcd}[ampersand replacement=\&,cramped,sep=small]
	x \& y \\
	{x'} \& {\exists y'}
	\arrow["R", dotted, no head, from=1-1, to=1-2]
	\arrow["\alpha"', from=1-1, to=2-1]
	\arrow["\alpha", from=1-2, to=2-2]
	\arrow["R", dotted, no head, from=2-1, to=2-2]
\end{tikzcd}\]

Simulations specifically require that each single step in system $X$ 
be matched by an equivalent single step in system $Y$.
Matching two systems in lock-step this way is usually too strict for realistic models of distributed systems.
Instead, one may want to simulate a single coarse-grained step with a larger number of fine-grained steps,
so we prefer \emph{weak simulation}.

\begin{definition}[Weak Simulation]
  \label{def:weaksim}
    We say a relation $R \subseteq X \times Y$ is a \emph{weak simulation} 
    if for every pair $(x, y) \in R$, and $\alpha \in \Lambda$:
    \[ x \xstepsto{\alpha}_{X} x' \implies \exists y' \in Y\, :\, y \xstepstostar{\alpha}_{Y} y' \land (x', y') \in R. \]
\end{definition}

The union of all (weak) simulations is itself a (weak) simulation, therefore it exists, and we define the following.

\begin{definition}[(Weak) Similarity]\label{def:similarity}
Given a pair of transition systems $\stepsto_{X}$, $\stepsto_{Y}$, and states $x \in X$, $y \in Y$, if there exists a (weak) simulation $\RR$ such that $\RR(x,y)$,
then we say $y$ \emph{(weakly) simulates} $x$ and write $x \simulatedby y$ ($x \wsimulatedby y$).
If also $y \simulatedby x$, then we say $x$ and $y$ are \emph{similar} and write $x \simulatedby \simulates y$
($x \wsimulatedby \wsimulates y$).
\end{definition}

\noindent A stronger form of (weak) simulation is (weak) \emph{bi}simulation.

\begin{definition}[Bisimulation]\label{def:bisimulation}
    If the relation $R$ is a (weak) simulation, and its converse $R^{T}$ is
    a (weak) simulation,
    then $R$ is called a (weak) \emph{bisimulation}.
\end{definition}

Like simulation, the union of all (weak) bisimulations is a (weak) bisimulation,
so it exists, and we denote it by $\sim$ ($\approx$).

\noindent Bisimilarity is in some sense ``symmetric'' in that, if $x \sim y$, and $x \xstepsto{\alpha} x'$,
then we can complete the above simulation diagram, \emph{and} if $y \xstepsto{\alpha} y'$, then we can still complete the simulation
diagram, where we interchange the roles of $x$ and $y$. This symmetry condition is quite strong,
since in general $\sim \neq \simulatedby \simulates$, e.g.,
\[\begin{tikzcd}[ampersand replacement=\&,cramped,column sep=small,row sep=tiny]
	\& {x_{1}'} \\
	x \& {x_{1}} \& {x_{2}} \& {\textit{and}} \& y \& {y_{1}} \& {y_{2}}
	\arrow["a", from=2-1, to=1-2]
	\arrow["a", from=2-1, to=2-2]
	\arrow["b", from=2-2, to=2-3]
	\arrow["a", from=2-5, to=2-6]
	\arrow["b", from=2-6, to=2-7]
\end{tikzcd}\]
has $x \simulatedby \simulates y$ but $x \not \sim y$. Indeed,
any sequence of actions in one system is simulated by a sequence of actions in the other system,
but if $x \xstepsto{a} x'_{1}$, then we match by $y \xstepsto{a} y_{1}$, but $x'_{1}$ has no answer
if $y_{1} \xstepsto{b} y_{2}$. This is what we mean by lack of symmetry in the simulation,
and this situation turns out to be the case when modeling op-based and state-based CRDTs.

\subsection{Motivating Examples}\label{subsec:motivating-example}

For an op-based CRDT and a state-based CRDT to be considered ``equivalent'' despite different implementations, we would expect them to be
indistinguishable in terms of their \emph{behavior}. 
A CRDT can be modeled with an LTS, where labels $\alpha \in \Lambda$ correspond to a client-facing 
API (e.g., $\upmap$ and $\qmap$ commands),
so emulation should mean that given an LTS $\stepsto$
modeling a system of op-based (resp. state-based) CRDT replicas,
we can use \citeauthor{shapiro-crdts}'s emulation algorithms to construct
a corresponding LTS $\stepsto'$ (with the same label set $\Lambda$) 
modeling a system of state-based (resp. op-based) CRDT replicas,
such that there is a (weak) bisimulation between their initial states.
However, it turns out that some care is needed, as this approach can fail in subtle ways, which we now demonstrate with two examples using an op-based \emph{grow-only set}
CRDT and its translation into a state-based CRDT using a \citeauthor{shapiro-crdts}-style emulation.

\paragraph{Op-Based Grow-Only Set}
We specify the op-based grow-only set as an \emph{object} in the sense of \cref{fig:op-based-object}. 
The replica states are sets of positive integers, with initial state $\opsem{s}^{0} = \varnothing$,
and the only operation is to add some $n \in \mathbb{N}$ to the
current state (a set). Queries simply sum the elements in the set, and return the result as a value. 
Messages are the operation written as a function: $m(k) \defeq \lambda(s : \text{state}) \, . \,  s \cup \{k\}$,
wrapped in a $\langle - \rangle$ constructor.
\cref{fig:op-based-gset}
defines the $\prmap$, $\emap$, and $\qmap$ functions.

\begin{figure*}[h]
\begin{gather*}
      \Pow(\mathbb{N}) : \text{states,} \qquad 
      \{\add[n] \mid n \in \mathbb{N}\}: \text{operations}, \qquad
      \mathbb{N} : \text{values}, \qquad
      \quad
      \{\mathsf{sum}\} : \text{queries} \\
      \{ \langle \lambda(s' : \text{state}) \, . \,  s \cup \{n\} \rangle \mid n \in \mathbb{N}, \rid \in \replicas\} : \text{messages}, \\
      \opsem{\prmap}(\rid, \add[k], \opsem{s}) = \langle m(k) \rangle \qquad
      \opsem{\emap}(\langle m \rangle,\opsem{s}) = m(\opsem{s}) \qquad
      \opsem{\qmap}(\opsem{s}) = \sum_{n \in \opsem{s}} n
\end{gather*}
\caption{Op-based grow-only set. Updates are defined as a composition of $\prmap$ and $\emap$,
  i.e., $\upmap(\rid, \op, s) = \emap(\prmap(\rid, \op, s), s)$.}\label{fig:op-based-gset}
\end{figure*}

\paragraph{Translation to State-Based Grow-Only Set}
To obtain the corresponding op-based CRDT, we use an emulation algorithm inspired by \citet{shapiro-crdts},
We defer the details to \cref{sec:op-to-state-details}, but the basic idea is
that we use \emph{sets} of messages to represent the internal state, equipped with an \emph{interpretation}
function $\interp$ that maps a set of op-based messages 
$\stsem{H} = \{\langle m_1 \rangle, \dots, \langle m_k \rangle\}$ to an op-based state $\opsem{s}$.
The state-space $(\Pow(M), \cup)$ is a join-semilattice, where $\cup$ is set-theoretic union, and thus acts as
our $\mmap$ function.\footnote{
      This is slightly different from how \citet{shapiro-crdts}
      define it. They maintain a tuple $(s, K, D)$ where $s$
      is the original op-based state, $K$ is the set of ``known'' messages, and $D$ is a set of ``delivered'' messages.
      Then $\sqcup$ is defined by $(s, K, D) \sqcup (s', K', D') = d(s, K \cup K', D)$,
      where $d$ recursively applies all yet unapplied messages in $(K \cup K') \setminus D$, updating $s$ and $D$ accordingly.
      Curiously, this is not actually a join-semilattice,
      since now $\sqcup$ is not commutative! Our modification, on the other hand, is.
}
\cref{fig:st-based-gset} lays out the definitions. The operations, values, and queries are the same, 
so we omit them. Note that all operations commute, so the application order of messages used by $\interp$ is immaterial, and if $\stsem{H} = \varnothing$,
then we just return $\opsem{s}^{0}$. To make it clear when $\stsem{H}$ is acting as a message,
we use a $\stsem{\msg}$ constructor.

\begin{figure*}[h]
\begin{gather*}
      \Pow(\Pow(\mathbb{N}) \to \Pow(\mathbb{N})) : \text{states}\\
      \stsem{\upmap}(\rid, \add[n], \stsem{H}) = \stsem{H} \cup \{\opsem{\prmap}(\rid, \add[n], \varnothing)\} \\
      \stsem{\mmap}(\stsem{H}, \stsem{\msg}\langle \stsem{H'} \rangle) = \stsem{H} \cup \stsem{H'} \qquad
      \stsem{\qmap}(\stsem{H}) = \opsem{\qmap}(\interp(\stsem{H})) \\
      \interp(\{\langle m_1 \rangle, \dots, \langle m_k \rangle\}) = (m_k \circ \cdots \circ m_1)(\opsem{s}^{0})
\end{gather*}
\caption{State-based grow-only set translated from the op-based grow-only set of \Cref{fig:op-based-gset}.}
  \label{fig:st-based-gset}
\end{figure*}

\paragraph{Transition Semantics}
To set up our simulation, we want to embed the objects of \cref{fig:op-based-gset} and \cref{fig:st-based-gset} into labeled transition
systems, using the same label set $\Lambda$ for both to make use of \cref{def:sim}. We
think of $\Lambda$ as a union of an external interface where we restrict the client to either update requests, or queries,
and some additional ``hidden'' labels to model internal communication unobservable to the client. We use the label set
      \[
            \Lambda \defeq \{(\upd[\add , n], \bot) \mid n \in \mathbb{N}\} \cup 
                              \{(\qry, \resp[v]) \mid v \in \mathbb{N}\} \cup \{\quiet{\dlvr}, \quiet{\bcast}\}. \\
      \]
The $(\upd[\add , n], \bot)$ label denotes updates (with $\bot$ indicating no return value), and the $(\qry, \resp[v])$ label denotes queries (with return value $v$); these make up the \emph{external interface}.
The $\quiet{\dlvr}$ and $\quiet{\bcast}$ labels denote message-passing transitions and make up the \emph{internal interface}. We write the state transitions
at the replica level, e.g., $\rid : s \xstepsto{\alpha} s'$ says that replica $\rid$ transitions from $s$ to $s'$, by action $\alpha \in \Lambda$.
We assume in each example a \emph{set} of replicas, and a \emph{message buffer} 
$\beta$ that contains messages generated by the $\quiet{\bcast}$
transitions. The $\quiet{\dlvr}$ transitions remove a message from the buffer. We omit explicit transitions on $\beta$,
preferring to leave them implicit for brevity.
In \cref{sec:coalg}, we develop these transition semantics in full detail. We now give the two examples.


\begin{example}[Message Granularity Breaks Bisimulation]\label{ex:non-bisim}
In this first example, we assume that op-based CRDTs broadcast the effects of their operations \emph{as part of the local update},
and state-based CRDTs broadcast their states \emph{independent of the updates}, i.e., as a separate step. This is a standard
interpretation of the semantics of op-based and state-based CRDTs~\citet{shapiro-crdts}.
With this semantics, a weak bisimulation is \emph{not} possible in general, even though
a \emph{pair} of (one-way) weak simulations is (as we show in \cref{sec:emulation}).

Suppose we have two replicas $\replicas = \{\rid_1, \rid_2\}$.
We denote the op-based replicas by $\opsem{\rid_1, \rid_2}$, and the state-based replicas by $\stsem{\rid_1, \rid_2}$.
Suppose a client has $\stsem{\rid_1}$ perform two updates: $\add[5]$ and $\add[42]$, each invoking $\stsem{\upmap}$ to update its state.
Since $\quiet{\bcast}$ is considered a \emph{separate action}, we could have $\stsem{\rid_1}$ broadcast its state \emph{after} the two updates. So we have
\begin{gather}
      \stsem{\rid_1} : \varnothing \xststepsto{(\upd[\add, 5], \bot)}
      \{ \langle m(5) \rangle \}  \xststepsto{(\upd[\add, 42], \bot)} 
      \{\langle m(5) \rangle, \langle m(42)\rangle \} \xquietstepsto{ \bcast } 
      \{ \langle m(5) \rangle , \langle m(42) \rangle\}, \label{steps1}
\end{gather}
where the message buffer $\beta(\stsem{\rid_2})$ contains only the transmitted state $\stsem{\msg}\,\{\langle m ( 5 ) \rangle, \langle m ( 42 ) \rangle\}$ afterwards.

The op-based replica $\opsem{\rid_2}$ can simulate this
by performing the same operations, but updates and broadcasts happen together as a single atomic step. We express
this as the composed transition $\xopstepsto{\upd} \circ \xquietstepsto{\bcast}$, and we have
\begin{gather}
      \opsem{\rid_1} : \varnothing \xopstepsto{(\upd[\add,5], \bot)} \circ \xquietstepsto{\bcast} 
      \{5\} \xopstepsto{(\upd[\add, 42], \bot)} \circ \xquietstepsto{\bcast} \{5, 42\}, \label{steps2}
\end{gather}
where the message buffer $\beta(\opsem{\rid_2})$ contains the \emph{two} messages $\langle m ( 5 ) \rangle$ and $\langle m ( 42 ) \rangle$.

This seems to work. If $\stsem{\rid_2}$ performs a $\quiet{\dlvr}$ action, removing $\stsem{\msg}\, \{\langle m ( 5 ) \rangle, \langle m ( 42 ) \rangle\}$ from the buffer,
and invoking $\stsem{\mmap}$, then $\opsem{\rid_2}$ can easily simulate this by delivering $\langle m ( 5 ) \rangle$, then $\langle m ( 42 ) \rangle$, invoking $\opsem{\emap}$ for each.
This gives us
\[ \stsem{\rid_2} : \varnothing \xquietstepsto{\dlvr} \{ \langle m ( 5 ) \rangle, \langle m ( 42 ) \rangle\} 
      \quad \implies \quad 
      \opsem{\rid_2} : \varnothing \xquietstepsto{\dlvr} \{5\} \xquietstepsto{\dlvr} \{5, 42\},\]
which is a simulation of $\stsem{\rid_2}$ by $\opsem{\rid_2}$ since a $\qry$ action on either gives the same
return value:
\begin{align*}
      \opsem{\rid_2} :  \{ 5, 42 \} \xopstepsto{(\qry, \resp[47])} \{ 5, 42 \}
      \quad \textit{and} \quad
      \stsem{\rid_2} : \{\langle m ( 5 ) \rangle, \langle m ( 42 ) \rangle\} \xststepsto{(\qry, \resp[47])} \{\langle m ( 5 ) \rangle, \langle m ( 42 ) \rangle\},
\end{align*}
since we have 
\[ \opsem{\qmap}(\{5, 42\}) = 5 + 42 = \opsem{\qmap}((m(42) \circ m(5))(\varnothing)) = \stsem{\qmap}(\{\langle m ( 5 ) \rangle, \langle m ( 42 ) \rangle\}). \]
So what is the problem? The problem is that bisimilar states should be indistinguishable \emph{in either direction}.
The transitions \eqref{steps1} and \eqref{steps2} do not lead to equivalent states, since $\opsem{\rid_2}$ can perform an
action that $\stsem{\rid_2}$ cannot. The issue is the \emph{granularity} of the messages. The op-based buffer contains \emph{two}
messages: $\langle m ( 5 ) \rangle$ and $\langle m ( 42 ) \rangle$, but the state-based buffer contains effectively \emph{one} message --- the state $\{\langle m ( 5 ) \rangle, \langle m ( 42 ) \rangle\}$. This means
$\opsem{\rid_2}$ could do the transition
\[ \opsem{\rid_2} : \varnothing\, \xquietstepsto{\dlvr}\; \opsem{\emap}(\langle m ( 5 ) \rangle, \varnothing) \quad = \quad \{5\},\]
but $\stsem{\rid_2}$ only has available the transition
\[ \stsem{\rid_2} : \varnothing\, \xquietstepsto{\dlvr}\; \stsem{\mmap}(\varnothing, \stsem{\msg}\, \{\langle m ( 5 ) \rangle, \langle m ( 42 ) \rangle\}) \quad = \quad \{\langle m ( 5 ) \rangle, \langle m ( 42 ) \rangle\},\]
and a $\qry$ action would give different results. Despite the fact that there is no bisimulation, note that strong convergence
is not in jeopardy --- if all messages are eventually delivered, then $\opsem{\rid_2}$ and $\stsem{\rid_2}$ do obtain the same observable state.
\end{example}


\begin{example}[Causal Order Cannot Be Ignored]\label{ex:causal-order-required}
      In their original presentation in \citet{shapiro-crdts}, op-based CRDTs are assumed to be built on top of
      reliable causal broadcast, which ensures that updates are propagated and delivered to each replica
      in an order consistent with the causal order. However, op-based CRDTs where \emph{all} updates commute do not require
      causal delivery order for correctness. In those cases, causal delivery would be overkill, and it is typical to relax the assumption of reliable causal broadcast
      to just reliable broadcast. The grow-only set we specified is one such example where all updates
      commute. It turns out that in this case, not even a one-way simulation is possible in general
      (as opposed to \emph{bi}simulation in \cref{ex:non-bisim}), even though both the op-based and state-based implementations can
      \emph{converge} to the same state.

      We now consider three replicas for the grow-only set, and allow messages to be delivered in any order.
      Suppose $\opsem{\rid_1}$ executes the transition 
            $\opsem{\rid_1} : \varnothing\; \xopstepsto{(\upd[\add, 1]), \bot} \circ \xquietstepsto{\bcast}\; \{1\}$.
      The buffers $\beta(\opsem{\rid_2})$ and $\beta(\opsem{\rid_3})$ both now contain the message $\langle m ( 1 ) \rangle$, 
      allowing $\opsem{\rid_2}$ to execute the transitions
      \begin{gather}
            \opsem{\rid_2} : \varnothing\; \xquietstepsto{\dlvr}\; \{1\}\;
            \xopstepsto{ (\upd[\add, 2], \bot)} \circ \xquietstepsto{\bcast}\; \{1, 2\}.
            \label{r2steps}
      \end{gather}
      Then $\stsem{\rid_1}$ and $\stsem{\rid_2}$ can try to simulate this with the transitions
        \begin{align}
            \stsem{\rid_1} & : \varnothing\, 
                \xststepsto{(\upd[\add, 1], \bot)} \{ \langle m ( 1 ) \rangle \} \xquietstepsto{\bcast} \{ \langle m ( 1 ) \rangle \} \\
            \stsem{\rid_2} & : \varnothing\;
                  \xquietstepsto{\dlvr}\; \{ \langle m ( 1 ) \rangle\}
                  \xststepsto{(\upd[\add, 2], \bot)}\; \{\langle m ( 1 ) \rangle, \langle m ( 2 ) \rangle\}\;
                  \xquietstepsto{\bcast}\; \{\langle m ( 1 ) \rangle, \langle m ( 2 ) \rangle\}.
            \label{attemp-sim}
        \end{align}
    
      It is here that we run into the issue. Notice that the sending of $\langle m ( 1 ) \rangle$ 
      causally precedes the sending of $\langle m ( 2 ) \rangle$
      since $\opsem{\rid_2}$ sequentially executes a $\quiet{\dlvr}$ action before the $\upd[\add, 2]$ action
      in \eqref{r2steps}.
      Now, replica $\opsem{\rid_3}$ has both the $\langle m ( 1 ) \rangle$ and $\langle m ( 2 ) \rangle$ messages
      in its input buffer $\beta(\opsem{\rid_3})$.
      Without enforcing causal delivery,
      $\opsem{\rid_3}$ is free to deliver $\langle m ( 2 ) \rangle$ before $\langle m ( 1 ) \rangle$, since it will converge to the same state as $\opsem{\rid_2}$ either way.

      However, $\stsem{\rid_3}$ cannot perfectly simulate what $\opsem{\rid_3}$ does.
      After \eqref{attemp-sim} executes, $\stsem{\rid_3}$'s buffer $\beta(\stsem{\rid_3})$ contains the two states 
      $\stsem{\msg}\, \{\langle m ( 1 ) \rangle\}$, and $\stsem{\msg}\, \{\langle m ( 1 ) \rangle, \langle m ( 2 ) \rangle\}$,
      and delivering either choice (or both) does not simulate $\stsem{\rid_3}$'s intermediate state of $\{2\}$,
      since $\opsem{\qmap}(\{ 2\}) = 2$, but
      \begin{gather*}
            \stsem{\qmap}(\{\langle m ( 1 ) \rangle\}) = 1
            \quad \textit{and} \quad
            \stsem{\qmap}(\{\langle m ( 1 ) \rangle, \langle m ( 2 ) \rangle\}) = 3.
      \end{gather*}

On the other hand, if we required causal delivery, then since the sending of $\langle m ( 1 ) \rangle$ causally precedes the sending of $\langle m ( 2 ) \rangle$,
replica $\opsem{\rid_3}$ would have to deliver $\langle m ( 1 ) \rangle$ before $\langle m ( 2 ) \rangle$. This rules out the problematic
execution, since now $\opsem{\rid_3}$ can only exhibit the behaviors $1$ or $3$, the same as $\stsem{\rid_3}$.
\end{example}


To summarize, what \cref{ex:non-bisim} shows is that a given host CRDT (the op-based grow-only set) and its guest (the constructed state-based grow-only set)
can both \emph{converge} to the same observable state, but are not \emph{indistinguishable}.
The simulation does not preserve the \emph{branching behavior}, since $\opsem{\rid_2}$
can exhibit more possible behaviors than $\stsem{\rid_2}$, even though they are in `equivalent' states.
A weak bisimulation, on the other hand, \emph{does} respect `branching behavior'\citep{Sangiorgi_2011,milner-ccs},
so \cref{ex:non-bisim} would be illegal under a bisimulation.
This difference arises from how we treat message passing --- state-based CRDTs decouple updates and broadcasts,
while op-based CRDTs execute both together (c.f. \cref{fig:op-based-update}).
In some sense, state-based CRDT replicas can accumulate a large piece of state that causes other replicas
to take a `large' transition step when they perform $\mmap$. 

\cref{ex:causal-order-required}, on the other hand, shows that
the causal delivery order assumption cannot be ignored \emph{even for CRDTs for which all operations commute}.
Abandoning the causal ordering requirement means that an op-based CRDT can express intermediate behaviors
that a state-based CRDT cannot simulate at all (in the sense of \cref{def:sim}).
Once again, though, strong convergence holds, since both CRDT systems can converge to the same state. We return to this point of reliable causal broadcast in later sections.

These two examples show how a difference in behavior can arise between CRDTs that `should' be considered equivalent. In the rest of the paper, we tackle this problem, 
laying out more precisely what classes of CRDTs are equivalent under the emulation algorithms.

\section{Formal System Semantics of CRDTs}
\label{sec:coalg}

We now properly formalize our operational semantics for CRDTs.
The semantics are similar to those of \cref{ex:non-bisim,ex:causal-order-required},
with some minor changes for technical reasons.
We start by describing the semantics in a CRDT-agnostic way,
starting with the low-level building blocks of replicas and events,
then higher-level components such as the global system state and configurations.
For the purposes of this section, we fix
a set $\replicas$ of replica identifiers.

\subsection{Replicas and Events}

We think of each replica in a CRDT system as independently implementing a \emph{state machine}
according to the object specification \cref{fig:op-based-object} in the op-based case,
and \cref{fig:st-based-object} in the state-based case.
The state machine is described by the \emph{state space} $S$ with an initial state $s^{0}$,
a set $I$ of \emph{inputs}, and a set $O$ outputs.
We think of $I$ as the union of client-supplied commands, and
the set of possible messages delivered from other replicas.
Outputs $O$ can be either \emph{values} to be returned to the client,
or generated messages intended for delivery at other replicas.
A replica is then simply a state machine $\delta : \replicas \times I \times S \to \{\bot\} + S \times O$
equipped with a current state $s \in S$.
Each replica implements the same $\delta$, and all begin with the initial state $s^{0}$.
When a replica $\rid$ with state $s$ transitions consumes input $i$ and transitions to state $s$, yielding output $o$, we record
the tuple $(\rid, i, o)$ as an \emph{event}.
\begin{definition}[Replica]\label{def:replica}
  A \emph{replica} (of a state machine $\delta$)
  is a fixed identifier $\rid \in \replicas$ equipped with a current state $s \in S$,
  and a labeled transition relation $\stepsto$ defined by the following rule,
  for all $s,s' \in S$, $i \in I$, and $o \in O$.
  \[ \inference[\mathsf{Step}]{
        \delta(\rid, i, s) \neq \bot &
        \delta(\rid, i, s) = (s', o)
      }{
        s \xstepsto{(\rid, i, o)} s'
      }\label{rule:Event} \]
\end{definition}

We simply refer to the identifier set $\replicas$ as a set of \emph{replicas} if each
$\rid \in \replicas$ is a replica of the same state machine. 
We note that replicas are deterministic in the sense that for a fixed $\rid$, and $i$, and $s$, we have
\[ \forall o', o'' \in O, \forall s', s'' \in S\, : \, (s \xstepsto{(\rid, i, o')} s' \land
   s \xstepsto{(\rid, i, o'')} s'') \implies s' = s'' \land o' = o'',\]
and so for that reason, we specify replicas entirely in terms of $\stepsto$ with no loss of generality.

Our events are parameterized by input/output sets $I$ and $O$,
but follow the same syntactic structure for replicas of any
state machine.
\begin{definition}[Inputs/Outputs]\label{def:events}
  Let $\mathsf{Op}$ be a set of operations, $Q$ be a set of queries,
  $V$ a set of values, and $M$ a set of messages, and $\bot$ a ``null'' symbol.
  We define the set of inputs $I$ and outputs
  $O$ according to the following grammar:
  \begin{equation}
    \label{eq:eventsyn}
    \begin{aligned}
      {\quad (\text{Inputs}\ I)\qquad}
      &  i \Coloneqq \bot \mid \upd[\op] \mid \qry[q] \mid \dlvr[m]
      \quad &&  \text{where } \op \in \mathsf{Op}, q \in Q, m \in M.\\
      {\quad (\text{Outputs}\ O)\qquad}
      &  o \Coloneqq \bot \mid \resp[v] \mid \send[m]
      \quad && \text{where } v \in V, m \in M.
    \end{aligned}
  \end{equation}
\end{definition}
We overload $\bot$ as an input to denote a computational step, as well as a lack of output.
The inputs $\upd[\op]$ and $\qry[q]$ denote client-given commands and thus make up the client-facing interface.
The $\dlvr[m]$ inputs denote the consumption of a message $m$ from an external buffer.

A response output $\resp[v]$ denotes a replica producing a value $v$ in response to some query.
The output $\send[m]$ denotes that the message $m$ should be pushed into the network by an external handler
  (e.g., $\mathtt{broadcast}$ as in \cref{fig:op-based-update}).

\subsection{System-Level Preliminaries}\label{subsec:system-preliminaries}

Here we describe the system-level semantics that are common to all CRDT implementations.
Just as individual replicas step following a transition system $\stepsto$,
  a system of replicas will step following a transition system $\leadsto$.
We prefer an interleaving model of concurrency, where a single $(\rid, i, o)$ event labels a single replica-level transition,
  though the choice of which replica steps next is non-deterministic.

We denote the global state of a system (called a \emph{configuration}) with $\config{\Gamma}{\Sigma}{\beta}$.
It consists of three components: the \emph{replica states} denoted by the function
$\Sigma : \replicas \to S$ that sends a replica $\rid$ to its state $s$;
a \emph{network state} or \emph{message buffer} $\beta$ that is a set of tuples
$(\rid, m) \in \replicas \times M$, representing messages $m$ that are currently in transit towards replica $\rid$;
and an \emph{event trace} $\Gamma$ that is a list of events $(\rid, i, o)$.
Since events $(\rid, i, o)$ abstractly describe what computation took place at replica $\rid$,
the event trace captures the execution of the system so far: if $(\rid, i, o)$ is the $n$th element
of $\Gamma$, then replica $\rid$ transitions according to input $i$,
producing output $o$ at the $n$th transition of the execution. 
For example, if $\rid$ (chosen non-deterministically) makes a transition $s \xstepsto{(\rid, \bot, \send[m])} s'$,
the event trace $\Gamma$ takes a step by appending $(\rid, \bot, \send[m])$.
Initially,  $\Sigma^{0} = \lambda (\rid : \replicas)\, .\, s^{0}$ maps each replica to the
same initial state, and both $\Gamma^{0}$ and $\beta^{0}$ are empty.

Notably, since individual replicas output messages as $\send[m]$,
the choice of recipients is up to the system-level semantics.
In a similar vein, the choice of when to \emph{deliver} an in-flight message is also delegated to the system-level semantics.
For the rest of the paper, we assume messages are \emph{broadcast}, since 
broadcast can always be simulated by an appropriate sequence of point-to-point messages.
In other words, transitions $s \xstepsto{(\rid, i, \send[m])} s'$ indicate that $m$
should be broadcast to all other replicas $\rid' (\neq \rid)$.
To model this behavior, we use the following ``broadcast'' function:
\begin{gather*}
  \mathsf{bcast}(\rid, m)(\beta) = \beta \cup \{(\rid', m) \mid  \rid' \in \replicas \land \rid' \neq \rid\}.
\end{gather*}
For example, if $\replicas = \{\rid_1, \rid_2, \rid_3, \rid_4\}$,
then $\mathsf{bcast}(\rid_3, m)$ creates a packet
$\{(\rid_{1}, m), (\rid_{2}, m), (\rid_{4}, m)\}$
of $3$ copies of $m$ tagged with destinations $\rid_1, \rid_2, \rid_4$.
The packet is then pushed into the network state $\beta$ on behalf of $\rid_{3}$.
Because clients of a CRDT do not interact with the internal network itself,
we render all message-passing transitions as ``silent'' transitions $\quietleadsto$.

For the most part, we only care about configurations $\config{\Gamma}{\Sigma}{\beta}$
where  $\Gamma$, $\Sigma$, and $\beta$ ``agree'' with each other in some sense,
which we define as a ``well-formedness'' condition.

\begin{definition}[Well-Formed Configurations]\label{def:well-formed}
  Given an initial configuration $\mathtt{init}$, we say a configuration $\config{\Gamma}{\Sigma}{\beta}$
  is \emph{well-formed} if there is an execution from $\mathtt{init}$ to $\config{\Gamma}{\Sigma}{\beta}$,
  i.e., a sequence
    \[ \mathtt{init} \xleadsto{\alpha_1} \config{\Gamma_{1}}{\Sigma_{1}}{\beta_{1}} \cdots
    \xleadsto{\alpha_{k}} \config{\Gamma}{\Sigma}{\beta}. \]
\end{definition}

\subsection{Op-Based CRDT Semantics}\label{sec:opsem}

\begin{assumption}
  In this section, we fix the sets $S$, $\mathtt{Op}$, $M$, $Q$, and $V$ of
  resp. states, operations, messages, queries, and values. We assume a distinguished
  initial state $s^{0} \in S$, and $\prmap$, $\emap$, and $\qmap$ as in \cref{fig:op-based-object},
  and an ordered set $\replicas$ of replica identifiers. We assume messages $m \in M$ contain a unique identifier and timestamp,
  so they are distinct.
\end{assumption}

\paragraph{(Causal Order Delivery)}
As noted in \cref{subsec:op-based-crdts}, an op-based CRDT is built on top of a causal broadcast protocol
which implements Lamport's \emph{happens-before} in order to ensure that 
updates are delivered to other replicas in an order consistent with causality, while allowing the concurrent
messages to be delivered in any order, since they commute. Similarly to \citet{lamport-clocks}, we
define the \emph{happens-before} (or \emph{causal}) relation $\hb$ on all pairs of events in an event trace $\Gamma$. 
\begin{enumerate}
  \item \emph{(Program Order).} If $(\rid, i, o)$ occurs before $(\rid, i', o')$ in $\Gamma$,
        then replica $\rid$ executed event $(\rid, i, o)$ before event $(\rid, i', o')$, and so $(\rid, i, o) \hb (\rid, i', o')$;
  \item if $(\rid, i, \send[m])$ and $(\rid', \dlvr[m], o')$ are events, then $(\rid, i, \send[m]) \hb (\rid', \dlvr[m], o')$;
  \item if $(\rid, i, o) \hb (\rid'', i'', o'')$ and $(\rid'', i'', o'') \hb (\rid', i', o')$,
        then $(\rid, i, o) \hb (\rid', i', o')$.
\end{enumerate}

There are known, practical methods of ensuring message ordering characterizes causality,
such as including in message metadata timestamps in the form of a 
\emph{vector clock}~\cite{mattern-vector-time,birman-lightweight-cbcast}.
For example, let $\mathsf{vc}$ map a message to its vector clock,
and consider any execution of a distributed system.
Then there is an irreflexive partial order relation $\prec$ on pairs $\mathsf{vc}(m'), \mathsf{vc}(m)$ 
which completely characterize Lamport's happens-before relation $\hb$ 
in the sense that
\[ \mathsf{vc}(m') \prec \mathsf{vc}(m) \iff (\rid, i, \send[m']) \hb (\rid', i', \send[m]), \]
where $(\rid, i, \send[m'])$ and $(\rid', i', \send[m])$ are events in the execution.
Since the messages themselves contain the vector clock as metadata, the 
$\prec$ relation induces a partial order on $M$. Thus, we assume $(M, \prec)$ is a partial order which characterizes $\hb$
without loss of generality, and we write $m' \prec m$ to mean that $m'$ is a causal predecessor of $m$, in the above sense.
Moreover, we consider the \emph{concurrent} messages to be those which are incomparable with respect to $\prec$, i.e.,
  \[ \concurrent{m'}{m} \defeq \neg (m' \prec m) \land \neg (m \prec m'). \]

We now describe the replicas of an op-based CRDT, instantiating
\cref{def:replica} using the op-based CRDT specification in \cref{fig:op-based-object}
and the algorithm for the op-based update in \cref{fig:op-based-update} as a guide.

\begin{definition}[Op-Based Replica Semantics]\label{def:oplocal}
  An op-based replica $\rid \in \replicas$ is defined by the initial state $s^{0}$
  and the smallest labeled transition relation $\stepsto$ closed under the rules
  in \cref{fig:op-local-rules}.
\end{definition}

\begin{figure}[h]
  \scalebox{0.88}{$%
  \begin{gathered}
    \inference{
      q \in \mathtt{Q} & v = \qmap(q, s)
    }{
      s \xstepsto{(\rid, \qry[q] , \resp[v])} s
    }
    \quad
    \inference{
      m \in M & s' = \emap(m, s)
    }{
      s \xstepsto{(\rid, \dlvr[m] , \bot)} s'
    } 
    \quad
    \inference{
      \op \in \mathtt{Op} & m = \prmap(\rid, \op, s) & s' = \emap(m, s)
    }{
      s \xstepsto{(\rid, \upd[\op] , \send[m])} s'
    }
  \end{gathered}%
  $}
  \caption{Op-based replica Semantics. Notice that update commands $\upd[\op]$ first invoke
  $\prmap$ to generate a message $m$, then apply the message locally with $\emap$, c.f. \cref{fig:op-based-update}.
  }\label{fig:op-local-rules}
\end{figure}

\begin{definition}[Op-Based System Semantics]\label{def:opglobal}
  Given a set of op-based replicas $\replicas$ (of the same op-based object),
  the system semantics of an op-based CRDT
  are defined by the initial configuration
  $\mathtt{init} = \config{\varepsilon}{\Sigma^{0}}{\varnothing}$
  and the smallest labeled transition relation $\leadsto$ on
  configurations closed under the rules in \cref{fig:op-global-rules}.
\end{definition}

\begin{figure}[h]
  \scalebox{0.88}{
  $\begin{gathered}
    \inference[\textsf{OpUpdate}]{
      \rid \in \replicas & \op \in \mathsf{Op} & s = \Sigma(\rid) & e = (\rid, \upd[\op],\send[m]) &
      s \xstepsto{ e } s'
    }{
      \config{\Gamma}{\Sigma}{\beta}
      \xleadsto{(\rid, \upd[\op], \bot)}
      \config{\Gamma \cdot e}{\Sigma [\rid \mapsto s']}{\bcast(\rid, m)(\beta)}
    }\label{rule:OpUpdate}
    \\[1ex]
    \inference[\textsf{OpQuery}]{
      \rid \in \replicas & q \in Q & s = \Sigma(\rid) & e = (\rid , \qry[q] , \resp[v]) &
      s \xstepsto{e} s
    }{
      \config{\Gamma}{\Sigma}{\beta}
      \xleadsto{~ e ~}
      \config{\Gamma \cdot e}{\Sigma}{\beta}
    }\label{rule:OpQuery}
    \\[1ex]
    \inference[\textsf{OpDeliver}]{
      (\rid, m) \in \beta & (\rid, \dlvr[m]) \in \mathsf{enabled}(\Gamma) &
      s = \Sigma(\rid) &
      e = (\rid, \dlvr[m], \bot) &
      s \xstepsto{e} s'
    }{
      \config{\Gamma}{\Sigma}{\beta}\;
      \quietleadsto\;
      \config{\Gamma \cdot e}{\Sigma[\rid \mapsto s']}{\beta \setminus \{(\rid, m)\}}
    }\label{rule:OpDeliver}
  \end{gathered}$
  }
  \caption{Op-based system semantics. We assume an initial configuration
  $\mathsf{init} = \config{\varepsilon}{\Sigma^{0}}{\varnothing}$.}\label{fig:op-global-rules}
\end{figure}

\Cref{fig:op-global-rules} gives the system-level transition system for op-based CRDTs.
To capture the fact that op-based CRDTs are built on top of a causal-broadcast mechanism,
we use the $\mathsf{enabled}(\Gamma)$ predicate in \hyperref[rule:OpDeliver]{\textsf{OpDeliver}}
to ensure that messages $m$ can only be delivered after all causally preceding $m' \prec m$ have been
delivered. \cref{def:causal-message-delivery} formalizes this as a safety condition, and
\cref{def:enabled} defines $\mathsf{enabled}(\Gamma)$.
\cref{prop:causal-order-enforcement} shows that this safety condition is implied by \cref{def:enabled}
in well-formed configurations $\config{\G}{\S}{\B}$. The proof proceeds by unfolding
definitions \cref{def:causal-message-delivery} and \cref{def:enabled}, and doing induction
on the execution implied by well-formedness.

\begin{definition}[Causal Delivery Order]\label{def:causal-message-delivery}
  Let $M$ be a set and $m, m' \in M$ a pair of messages. Let $(\rid, i, o)_{i}$ denote
  the $i$th event in $\Gamma$.
  We say $\Gamma$ satisfies
  \emph{causal delivery order} if, $\forall m,m' \in M$, and $\forall \rid \in \replicas$,
  \begin{gather}
    (m \prec m') \land ((\rid, \dlvr[m], \bot)_{i} \in \Gamma) \land ((\rid, \dlvr[m'], \bot)_{j} \in \Gamma)\\
    \implies \neg (\dlvr[m'] \textit{ ordered before } \dlvr[m] \textit{ in } \Gamma \, \textit{i.e., } i < j).
  \end{gather}
\end{definition}

\begin{definition}\label{def:enabled}
  Given an event trace $\Gamma$, we say $\dlvr[m]$ is \emph{enabled} at $\rid$
  and write $(\rid, \dlvr[m]) \in \mathsf{enabled}(\Gamma)$
  if and only if both the following hold:
  \begin{enumerate}
    \item $\forall o \in O$, $(\rid, \dlvr[m], o) \notin \Gamma$ (i.e., $m$ has not already been delivered);
    \item $\forall m' \in M$ such that $m' \prec m$,
            we have $(\rid, \dlvr[m'], \bot) \in \Gamma$ (i.e., causally preceding $m'$ have been delivered).
  \end{enumerate}
\end{definition}
Note that the second condition in \cref{def:enabled} enforces the causal delivery order.
It says that any message $m'$ that is a potential cause for $m$ must be delivered \emph{before}
$m$ is allowed to be delivered.

\begin{proposition}[Causal Order Enforcement]\label{prop:causal-order-enforcement}
  Suppose $\config{\Gamma}{\Sigma}{\beta}$ is a well-formed op-based configuration.
  Then $\Gamma$ satisfies causal delivery order.
\end{proposition}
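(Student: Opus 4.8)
The plan is to prove a stronger invariant — one that is preserved under extending an execution by a single step — and then read off causal delivery order as an immediate corollary. The invariant is: whenever $\config{\Gamma}{\Sigma}{\beta}$ is reachable from $\mathtt{init}$, for every delivery event $(\rid,\dlvr[m],\bot)$ occurring at index $i$ of $\Gamma$ and every $m'$ with $m' \prec m$, the event $(\rid,\dlvr[m'],\bot)$ also occurs in $\Gamma$, at some index $i' < i$. This invariant implies the property of \cref{def:causal-message-delivery}: first note that by condition (1) of \cref{def:enabled}, \OpDeliver never re-delivers a message at a replica, so a message is delivered at most once per replica and the indices $i,j$ in \cref{def:causal-message-delivery} are well-defined. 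Now given $m \prec m'$ with deliveries of $m$ and $m'$ at a common replica $\rid$ at indices $i$ and $j$, applying the invariant to the delivery of $m'$ (with its causal predecessor $m$) produces a delivery of $m$ at $\rid$ strictly before index $j$; by at-most-once delivery that is the delivery at index $i$, so $i < j$, i.e.\ in $\Gamma$ the delivery of $m$ precedes the delivery of $m'$, which is the conclusion of \cref{def:causal-message-delivery}.

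To prove the invariant, use \cref{def:well-formed} to fix an execution $\mathtt{init} \xleadsto{\alpha_1} \cdots \xleadsto{\alpha_k} \config{\Gamma}{\Sigma}{\beta}$ and induct on $k$. For $k = 0$, $\Gamma = \varepsilon$ contains no delivery events, so the invariant holds vacuously. For the inductive step, case on which rule of \cref{fig:op-global-rules} fired in the last transition. If it was \OpQuery or \OpUpdate, the event appended to $\Gamma$ is a query event or an event whose output is of the form $\send[m]$, not a $\dlvr$ event; hence the set of delivery events in $\Gamma$ is exactly that of $\Gamma_{k-1}$, and appending a fresh last element only enlarges indices, so every ``occurs-before'' relation required by the invariant is inherited from the inductive hypothesis. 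If it was \OpDeliver, then $\Gamma = \Gamma_{k-1} \cdot (\rid,\dlvr[m],\bot)$ with side condition $(\rid,\dlvr[m]) \in \mathsf{enabled}(\Gamma_{k-1})$. For the delivery events already present in $\Gamma_{k-1}$ the invariant is again inherited, since appending at the end preserves all prior orderings. For the newly appended event, condition (2) of \cref{def:enabled} states precisely that every $m' \prec m$ has $(\rid,\dlvr[m'],\bot) \in \Gamma_{k-1}$, and every element of $\Gamma_{k-1}$ occupies an index strictly below that of the appended event; so the invariant holds for it as well.

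I expect the only delicate point to be the choice of induction hypothesis. Inducting directly on \cref{def:causal-message-delivery}, phrased as ``if both $\dlvr[m]$ and $\dlvr[m']$ appear then they are correctly ordered'', is awkward: appending a new delivery of $m$ at the end of the trace could a priori land \emph{after} an already-present delivery of some causal \emph{successor} of $m$, which would violate the property, and excluding that scenario directly would require reaching back into the execution to the step where the successor was delivered and invoking condition (1) of \cref{def:enabled} there. The forward-looking strengthening sidesteps this: it propagates ``all causal predecessors have already been delivered earlier'' along every execution, and \OpDeliver's $\mathsf{enabled}$ precondition (condition (2) of \cref{def:enabled}) is exactly what re-establishes it after each delivery step. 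Everything else is the routine unfolding of \cref{def:causal-message-delivery} and \cref{def:enabled} that the surrounding text anticipates.
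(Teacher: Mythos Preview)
Your proposal is correct and follows precisely the approach the paper sketches: unfolding \cref{def:causal-message-delivery} and \cref{def:enabled} and inducting on the execution witnessing well-formedness. The paper does not spell out the induction hypothesis, but your forward-looking strengthening (every causal predecessor of a delivered message was delivered strictly earlier) is the natural way to make the induction go through, and your case analysis on the last rule, together with the at-most-once observation from condition~(1) of \cref{def:enabled}, is sound.
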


Implicit in \cref{def:causal-message-delivery} is that if $m'$ and $m$ are concurrent, i.e. $\concurrent{m'}{m}$,
then there is no constraint on the order of $\dlvr[m]$, and $\dlvr[m']$.
Recall from \cref{fig:op-based-object} that the effects of messages $m$ and $m'$ should
commute if $m$ and $m'$ are concurrent, i.e., neither is a cause for the other.
We encode this assumption in our semantics as follows. Let $\config{\Gamma}{\Sigma}{\beta}$ a configuration and $\rid$ a replica.
If $\concurrent{m'}{m}$, then
  \begin{gather*}
    \Sigma(\rid) \xstepsto{(\rid, \dlvr[m], \bot)} s_{1}' \xstepsto{(\rid, \dlvr[m'] , \bot)} s_{1}''\,
    \textit{  and  }\;
    \Sigma(\rid) \xstepsto{(\rid, \dlvr[m'] , \bot) } s_{2}' \xstepsto{(\rid, \dlvr[m] , \bot)} s_{2}'' \\
    \implies s_{1}'' = s_{2}''.
  \end{gather*}




\subsection{State-Based CRDT Semantics}\label{sec:stsemantics}

\begin{assumption}\label{op-based-assumptions}
  In this section, we fix the sets $S$, $\mathsf{Op}$, $Q$, and $V$ of
  resp. states, operations, queries, and values. 
  We assume $(S, \sqcup)$ is a join-semilattice with a distinguished
  initial state $s^{0}$, and we assume
  functions $\upmap$, $\mmap$, and $\qmap$ as in \cref{fig:st-based-object}.
\end{assumption}

The semantics of state-based CRDTs are straightforward, since there
are very few constraints on the behavior of the network other than reliable delivery.
Like before, state-based replicas can be thought of as an instantiation of \cref{def:replica}.

\begin{definition}[State-Based Replica Semantics]
  \label{def:stlocal}
  A replica $\rid \in \replicas$ of a state-based CRDT is defined by
  the initial state $s^{0}$ and the smallest labeled transition relation $\stepsto$
  closed under the rules in \cref{fig:st-local-rules}.
\end{definition}

\begin{figure}[h]
  \scalebox{0.88}{$
  \begin{gathered}
    \inference{
      q \in Q & v = \qmap(q, s)
    }{
      s \xstepsto{(\rid, \qry[q] , \resp[v])} s
    }
    \quad
    \inference{
      s'' \in S & s' = \mmap(s, s'')
    }{
      s \xstepsto{(\rid , \dlvr[s'] , \bot)} s'
    }
    \quad
    \inference{}{
      s \xstepsto{(\rid, \bot  , \send[s])} s
    }
    \quad
    \inference{
      \op \in \mathtt{Op} & s' = \upmap(\op, s)
    }{
      s \xstepsto{ (\rid , \upd[\op] , \bot)} s'
    }
  \end{gathered}
  $}
  \caption{State-based replica semantics. Note that $\send[s]$ outputs arise from a
    distinct transition step, c.f., \cref{def:oplocal}
  }\label{fig:st-local-rules}
\end{figure}

The semantics of state-based CRDTs are quite similar to op-based CRDTs at a local level,
only now replicas transmit their state separately from their local updates,
and this is considered a distinct (but silent!) computational event.
The state is still transmitted via reliable broadcast, but now there are no restrictions the delivery order,
thanks to the join-semilattice structure of the state space.
This manifests in the\StSend rule, where instead of an $\mathsf{enabled}$ predicate,
we use the premise $(\rid, \dlvr[s''], -) \notin \Gamma$, which is just the first condition \cref{def:enabled}.
The system-wide semantics are otherwise a straightforward lifting of the rules in \cref{def:stlocal}
to global configurations $\config{\Gamma}{\Sigma}{\beta}$.

\begin{definition}[State-Based CRDT Systems]
  \label{def:stglobal}
  Given a set of state-based replicas $\replicas$ (of the same state-based object),
  the system semantics of the state-based CRDT
  are defined by the initial configuration
  $\mathtt{init} = \config{\varepsilon}{\Sigma^{0}}{\varnothing}$
  and the smallest labeled transition relation $\leadsto$ on
  configurations closed under the rules in \cref{fig:st-global-rules}.
\end{definition}

\begin{figure}
  \scalebox{0.88}{$
  \begin{gathered}
    \inference[\textsf{StUpdate}]{
      \rid \in \replicas & \op \in \mathtt{Op} & s = \Sigma(\rid) & e = (\rid, \upd[\op], \bot) &
      s \xstepsto{~ e ~} s'
    }{%
      \config{\Gamma}{\Sigma}{\beta}
      \xleadsto{~ e ~}
      \config{\Gamma \cdot e}{\Sigma[\rid \mapsto s]'}{\beta}
    }\label{rule:StUpdate}
    \\[1ex]
    \inference[\textsf{StQuery}]{
      \rid \in \replicas & q \in Q & s = \Sigma(\rid) & e = (\rid, \qry[q], \resp[v]) & s \xstepsto{e} s'
    }{%
    \config{\Gamma}{\Sigma}{\beta}
    \xleadsto{~ e ~}
    \config{\Gamma \cdot e}{\Sigma}{\beta}
    }\label{rule:StQuery}
    \\[1ex]
    \inference[\textsf{StSend}]{
      \rid \in \replicas &  s = \Sigma(\rid) & e = (\rid, \bot, \send[s]) &
      s \xstepsto{ e } s
    }{%
      \config{\Gamma}{\Sigma}{\beta} \;
      \quietleadsto \;
      \config{\Gamma \cdot e}{\Sigma}{\bcast(\rid, s)(\beta)}
    }\label{rule:StSend}
    \\[1ex]
    \inference[\textsf{StDeliver}]{
      (\rid, s'') \in \beta & (\rid, \dlvr[s''], -) \notin \Gamma & s = \Sigma(\rid) & e = (\rid, \dlvr[s''], \bot) &
      s \xstepsto{ e } s'
    }{%
      \config{\Gamma}{\Sigma}{\beta}\;
      \quietleadsto \;
      \config{\Gamma \cdot e}{\Sigma[\rid \mapsto s']}{\beta \setminus \{(\rid, s'')\}}
    }\label{rule:StDeliver}
  \end{gathered}
  $}
  \caption{State-based system semantics. We assume an initial
  configuration
  $\mathtt{init} = \config{\varepsilon}{\Sigma^{0}}{\varnothing}$.}\label{fig:st-global-rules}
\end{figure}

\section{CRDT Emulation as (Weak) Simulation}
\label{sec:emulation}

An \emph{emulator} is a piece of software that allows one system to behave as a different system.
In this sense, the translation algorithms of \citeauthor{shapiro-crdts} really are `emulators',
in that they allow an op-based CRDT to behave as a state-based CRDT, and vice versa. 
Since op-based and state-based CRDTs are also systems that are understood to be `equivalent'
in some sense, we formally argue here that they are `behaviorally equivalent' in terms of
weak simulation. \cref{def:emulation} essentially defines the proof obligation for this section.

\begin{definition}[CRDT Emulation]\label{def:emulation}
  We say a mapping $\GG$ is a \emph{CRDT emulation} (or a \emph{CRDT emulator}) if, 
  given any \emph{host} CRDT object ${\OO^{\host}_{\kappa}}$ with system semantics 
  ${(\mathsf{Config}_{\kappa}, \initstate_{\kappa}, \leadsto_{\kappa})}$,
  there is a \emph{guest} CRDT object ${\OO^{\guest}_{\lambda}}$ with semantics
  $(\mathsf{Config}_{\lambda}, \initstate_{\lambda}, \leadsto_{\lambda})$
  such that:
  \begin{itemize}
    \item ${(\mathsf{Config}_{\kappa}, \initstate_{\kappa}, \leadsto_{\kappa})} \xmapsto{\GG} {(\mathsf{Config}_{\lambda}, \initstate_{\lambda}, \leadsto_{\lambda})}$; and
    \item the initial configurations weakly simulate each other, e.g., 
      $\initstate_{\kappa} \wsimulatedby \wsimulates \initstate_{\lambda}$.
  \end{itemize}
\end{definition}

In words, we formalize CRDT emulation as two components: (i) the \emph{map} $\GG$
which translates the CRDT objects; and (2) a pair of weak simulations $(\RR_1, \RR_2)$.
The relation $\RR_1$ shows the guest system weakly simulates the host system, and $\RR_2$ shows
the host system weakly simulates the guest system. Thus, a single `emulation' $\GG$ forms two proof obligations.
Recall that the emulation algorithms of \citeauthor{shapiro-crdts} offer
two such emulations: one from op-based CRDTs to state-based CRDTs, and one from state-based CRDTs to op-based CRDTS.
Here we focus on the emulation $\GG$ from \emph{\textbf{op-based}} CRDTs to \emph{\textbf{state-based}} CRDTs,
which we consider to be the more difficult and interesting of the two. 
It is also a good exemplar of our techniques. The emulation from state-based to op-based
CRDTs largely follows the same pattern, and can be found in \cref{appendix:other-emulation}.

\Cref{sec:op-to-state-emulation} describes the candidate mapping $\mathcal{G}$ used by the op-based to state-based direction, 
and \Cref{sec:op-to-state-details} describes the two weak simulations that make $\mathcal{G}$ an emulation. 
In \Cref{sec:preserved-properties} we make good on our earlier assertion by giving an example of transferring
a \emph{strong convergence} property, and also describing a subset of properties that transfer `for free'.
\Cref{subsec:refinement} concludes with a comparison of our techniques to some other methods which have appeared in the literature.

\subsection{The Mapping $\mathcal{G}$ from Op-Based to State-Based CRDTs}\label{sec:op-to-state-emulation}

\begin{assumption}
  For \cref{sec:op-to-state-emulation,sec:op-to-state-emulation}
  we assume an op-based CRDT object with its corresponding sets $S, \mathtt{Op}, M, Q, V$, related functions,
  and semantics as in \cref{fig:op-global-rules}. We further assume the set of messages
  is a partial order $(M, \prec)$ as explained in \cref{sec:opsem}. For notational simplicity,
  we at times apply messages $m$ as functions, e.g., $(m' \circ m)(s) \defeq \emap(m', \emap(m, s))$.
\end{assumption}

The mapping $\GG$ we consider is inspired by the original algorithm
described by \citet{shapiro-crdts}, and we direct interested readers there to see it in its original
presentation. Note that in both \citet{shapiro-crdts} and here, there is an underlying
assumption that op-based CRDTs are built on top of reliable causal broadcast.

In simple terms, op-based-to-state-based emulation relies on
what has classically been described as the \emph{semantic characterization of state machines},
which says that the current state of a state machine is defined only by
the sequence of requests it processes, and by nothing else~\cite{schneider-state-machine-approach}.
Replicas are indeed state machines, so
instead of maintaining the current op-based state $\opsem{s}$, a replica can instead maintain
the initial state $\opsem{s^{0}}$ and the set of messages \emph{that led to} $\opsem{s}$, i.e., an
ordered set of messages $\{m_{1}, \dots, m_{k-1}, m_{k}\}$ such that
\[ \opsem{s} = (m_k \circ m_{k-1} \circ \cdots \circ m_1)(\opsem{s^{0}}) \defeq \emap(m_k, \emap(m_{k-1}, ... \emap(m_1, \opsem{s^{0}}) ...)) .\]

The fact that each set of messages $\{m_1, \dots, m_k\}$
is an element of the join-semilattice $(\Pow(M), \cup)$ (where $\cup$ is set-theoretic union)
is even better --- there is a representation of each op-based state $\opsem{s}$ 
as a state-based state $\stsem{\{m_1, \dots, m_k\}}$.
We just need each replica to have access to: (1) the initial state $\opsem{s^{0}}$, and
(2) a function $\interp : \Pow(M) \to S$ that can interpret the set of messages $\stsem{\{m_1, \dots, m_k\}}$ into $\opsem{s}$. 
The first point is trivial --- we can always assume replicas have access to some additional metadata. 
For the second point, $\interp$ is roughly defined by taking a message set $\stsem{H} \in \Pow(M)$,
choosing some linear ordering $\langle m_{1}, \dots, m_{|\stsem{H}|} \rangle$ of $\stsem{H}$ that is consistent with 
the casual order $\prec$, then computing $(m_{|\stsem{H}|} \circ \cdots \circ m_1)(\opsem{s^{0}})$.

It turns out that since $\prec$ is an irreflexive, transitive relation that preserves and reflects causal order, 
we can use the fact that the effects of concurrent messages commute (\cref{fig:op-based-object}) to define $\interp$ recursively.
The key idea is to define $\mathsf{Max}(\stsem{H}) \defeq \{m \in \stsem{H} \mid \nexists m' \in \stsem{H} \text{ s.t. } m \prec m'\}$,
which is a set of `maximal' messages, with respect to $\prec$.
Critically, each pair of messages $m,m' \in \mathsf{Max}(\stsem{H})$ are concurrent, i.e., $\concurrent{m}{m'},$
so we can `peel off' a message $m \in \mathsf{Max}(\stsem{H})$ and recurse:
\begin{gather}
  \interp(\stsem{H}) \defeq
    \begin{cases}
      \opsem{s^{0}} & \text{if $H = \varnothing$} \\
      \opemap(\interp(\stsem{H} \setminus \{m\}), m) & \text{if $m \in \mathsf{Max}(\stsem{H})$}. \label{interp-def}
    \end{cases}
\end{gather}
This function is well-defined, and terminates whenever $\stH$ is finite (which is the case).
It essentially recursively picks an arbitrary order of the messages in $\stH$, but in such a
way as to be consistent with the order $\prec$. The proof follows from the fact that
each $m \in \mathsf{Max}(\stH)$ must be removed before the recursion can proceed with
the $\interp(\mathsf{Max}(\stH \setminus \mathsf{Max}(\stH)))$ call.

With these ingredients in place, the candidate emulation map $\GG$ is the one that takes
an op-based CRDT object as in \cref{fig:op-based-object} and constructs the state-based CRDT object shown in \Cref{fig:state-based-guest}.
We already know that $(\Pow(M), \cup)$ is a join-semilattice, so it remains to check that $\stupmap$ is inflationary ---
but it clearly is, since it's defined in terms of $\cup$.
\begin{figure*}[h!]
  \begin{align*}
        & \textbf{Parameters :} \\
        & \quad \Pow(M) : \text{states,}\; \mathtt{Op} : \text{operations,}\; 
        Q : \text{queries},\; V : \text{values,}\; \\
        & \quad  \varnothing \in \Pow(M) : \text{ initial state } \\
        & \textbf{ Functions :}\\
        & \quad \stupmap(\rid, \mathtt{op}, \stH) \defeq \stH \cup \{\opprmap(\rid, \op, \interp(\stH))\}\\
        & \quad \stmmap(\stH, \stsem{H'}) \overset{\mathrm{def}}{=}  \stsem{H} \cup \stsem{H'} \\
        & \quad \stqmap(q, \stsem{H}) \overset{\mathrm{def}}{=} \opqmap(q, \interp(\stsem{H}))
  \end{align*}
  \caption{The state-based guest CRDT constructed by $\GG$ from a given op-based host CRDT.}
  \label{fig:state-based-guest}
\end{figure*}

\subsection{The Weak Simulations}\label{sec:op-to-state-details}

We now proceed with defining the pair of weak simulations that make $\mathcal{G}$ an emulation in the sense of \Cref{def:emulation}.

\begin{notation}
  Let $\mathsf{Sent}(\G)$ to denote the set of `sent messages`, that is the messages with a corresponding $(\rid, i, \send[m])$ event
  in $\Gamma$. Likewise, let $\mathsf{Delivered}(\rid, \opsem{\G})$ be the set of `delivered' messages at
  replica $\opsem{\rid}$, i.e., the set of messages $\{m_1, \dots, m_k\}$ which $\opsem{\rid}$ either delivered with a
  $\dlvr[m]$ event, or generated with $\opsem{\prmap}$ and then consumed via $\opemap$ during an $\upd[\opsf]$ event.
  We use $(m)^{\darrow \G}$ to denote the set \emph{downward closure} of $m$ in the event trace $\G$. That is,
  $(m)^{\darrow \G} \defeq \{m' \in \mathsf{Sent}(\G) \mid  m' \prec m \} \cup \{m\}$.
  We write $\stleadsto{\subG}$ to denote the fact that the state-based CRDT system is a result of $\GG$.
\end{notation}

We assume all configurations $\opconfigA$, $\stconfigA$ are \emph{well-formed}
(\cref{def:well-formed}), and we fix a set of replicas $\{ \rid_1, \dots, \rid_{n}\} = \replicas$.
For space reasons, we sketch proofs by showing archetypical cases since
the omitted cases follow a similar pattern.
Complete proofs can be found in \cref{appendix:proofs}.

\subsubsection{The Guest CRDT Weakly Simulates the Host CRDT}
The weak simulation of $\opleadsto$ by $\stleadsto\subG$ is rather straightforward, but with one technical
hurdle that forms the heart of \cref{ex:non-bisim}, and thus of our simulation arguments.
The crucial detail is that whenever an op-based replica $\opsem{r}$ executes a $\opsem{\dlvr[m]}$
event, the state-based replica $\stsem{r}$ needs to pick a \emph{set} of messages $\stsem{H}$
whose merge (i.e., $\stsem{\dlvr[H]}$) \emph{simulates} the $\opsem{\dlvr[m]}$ deliver event.
This means our weak simulation needs to guarantee that a ``small enough'' $\stsem{H}$ is always available.
The right choice of $\stsem{H}$ is actually the down set $(m)^{\darrow \opsem{\G}}$, which
actually \emph{is} available so long as we always match \OpUpdate with \StUpdate followed by \StSend (broadcast).
The state-based system then broadcasts just as often as the op-based system, ensuring there is always a `small enough' state.

\begin{figure*}[h]
  \begin{subfigure}[c]{\textwidth}
    \centering
    \begin{align*}
        \RR_{1} = \{ & (\opconfig{\G}{\S}{\B} , \stconfig{\G}{\S}{\B}) \mid \\
                    & \textit{ --- configurations agree on messages, and replicas agree on delivered/merged messages} \\
                    & \quad (\mathsf{Sent}(\opsem{\G}) = \textstyle \bigcup \mathsf{Sent}(\stsem{\G}))
                      \land
                      \forall \rid \in \replicas\, :\, \mathsf{Delivered}(\rid, \opsem{\G}) = \stsem{\S}(\rid) \\
                    & \textit{ --- replicas agree on their local states} \\
                    & \land \quad \forall \rid \in \replicas\, :\, \opsem{\S}(\rid) = \interp(\stsem{\S}(\rid)) \\
                    & \textit{--- every pending message in the op-system has a pending downset in the state-system} \\
                    & \land \quad \forall (\rid, m) \in \opsem{\B} \implies (\rid, (m)^{\darrow \opsem{\G}}) \in \stsem{\B}
                  \}.
    \end{align*}
    \caption{Weak simulation of the op-based host CRDT by the state-based guest CRDT.}\label{fig:R1-sim}
  \end{subfigure}
  \\
  \begin{subfigure}[c]{\textwidth}
    \centering
    \begin{align*}
      \RR_{2} = \{ & (\stconfig{\Gamma}{\Sigma}{\beta} , \opconfig{\Gamma}{\Sigma}{\beta}) \mid \\
                   & \textit{ --- configurations agree on messages, and replicas agree on delivered/merged messages} \\
                   & \quad (\textstyle \bigcup \mathsf{Sent}(\stsem{\Gamma}) = \mathsf{Sent}(\opsem{\Gamma}))
                     \land 
                     \forall \rid \in \replicas\, :\,  \stsem{\Sigma}(\rid) = \mathsf{Delivered}(\rid, \opsem{\Gamma}) \\
                   & \textit{ --- replicas agree on their local states} \\
                   & \land \quad \forall \rid \in \replicas\, :\, \interp(\stsem{\Sigma}(\rid)) = \opsem{\Sigma}(\rid)\\
                   & \textit{--- every merge is a merge of a downset, and op-based replica can simulate that merge} \\
                   & \land \quad \forall (\rid, \stsem{H}) \in \stsem{\beta} \,:\, \exists U \in \mathsf{deliverable}_{\opsem{\Gamma}}(\rid) \,:\, \stsem{H} = (U)^{\darrow \opsem{\Gamma}}
                \}.
    \end{align*}
    \caption{Weak simulation of the state-based guest CRDT by the op-based host CRDT.}\label{fig:R2-sim}
  \end{subfigure}
  \caption{The relation $\RR_{1} \subseteq \opsem{\mathsf{Config}} \times \stsem{\mathsf{Config}}$ is a weak simulation
          of op-based \emph{host} CRDT is simulated by the state-based \emph{guest} CRDT. Dually, the
          relation $\RR_{2} \subseteq \stsem{\mathsf{Config}} \times \opsem{\mathsf{Config}}$ is a weak simulation
          of the state-based \emph{guest} CRDT by the op-based \emph{host}. These relations,
          while very similar to each other, are not
          converses of each other --- indeed, they cannot be (as \cref{ex:non-bisim} shows).
  }
  \label{fig:R1-R2-sim}
\end{figure*}

\begin{theorem}\label{thm:main1}
  $\mathcal{R}_{1}$ (\Cref{fig:R1-R2-sim}) is a weak simulation of $\opleadsto$ by $\stleadsto{\subG}$
  that relates the initial configurations $\opconfig{\varepsilon}{\S^{0}}{\varnothing}$
  and $\stconfig{\varepsilon}{\S^{0}}{\varnothing}$.
\end{theorem}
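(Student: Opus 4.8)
The plan is to verify the two clauses of \cref{def:weaksim} for $\RR_1$ directly: that the initial configurations are $\RR_1$-related, and that each single op-based transition out of an $\RR_1$-related pair is matched by a (possibly multi-step) state-based weak transition carrying the same label and again landing in $\RR_1$. The base case is immediate: for the pair $\bigl(\opconfig{\varepsilon}{\S^{0}}{\varnothing},\,\stconfig{\varepsilon}{\S^{0}}{\varnothing}\bigr)$ we have $\mathsf{Sent}(\varepsilon)=\varnothing=\bigcup\mathsf{Sent}(\varepsilon)$ and $\mathsf{Delivered}(\rid,\varepsilon)=\varnothing=\stsem{\S^{0}}(\rid)$, while $\opsem{\S^{0}}(\rid)=\opsem{s^{0}}=\interp(\varnothing)=\interp(\stsem{\S^{0}}(\rid))$, and the downset conjunct is vacuous because both buffers are empty. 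For the inductive step, fix $(\opconfigA,\stconfigA)\in\RR_1$ and case on which rule of \cref{fig:op-global-rules} fires. An \OpQuery step on $\rid$ emits $(\rid,\qry[q],\resp[v])$ with $v=\opqmap(q,\opsem{\S}(\rid))$ and touches neither $\S$ nor $\B$; I match it by a single \StQuery on $\rid$, whose returned value is $\stqmap(q,\stsem{\S}(\rid))=\opqmap(q,\interp(\stsem{\S}(\rid)))$, which equals $v$ by the $\interp$-conjunct of $\RR_1$, so the labels agree and no $\RR_1$-relevant component changes.

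For an \OpUpdate step on $\rid$, the op-trace gains $e=(\rid,\upd[\op],\send[m])$ with $m=\opprmap(\rid,\op,\opsem{\S}(\rid))$, the state becomes $\opsem{\S}[\rid\mapsto\opemap(m,\opsem{\S}(\rid))]$, the buffer becomes $\bcast(\rid,m)(\opsem{\B})$, and the visible label is $(\rid,\upd[\op],\bot)$; I match it by \StUpdate on $\rid$ immediately followed by the silent \StSend on $\rid$, which together form a weak $(\rid,\upd[\op],\bot)$-transition. Since $\prmap$ depends only on $\rid,\op$ and the state, and $\interp(\stsem{\S}(\rid))=\opsem{\S}(\rid)$, the message produced by \StUpdate is the very same $m$, so $\rid$'s new state-side value is $H\defeq\stsem{\S}(\rid)\cup\{m\}$, which \StSend then broadcasts. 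To re-establish $\RR_1$: because $m$ is fresh, every $\prec$-predecessor of $m$ was already processed by $\rid$ and hence already lies in $\stsem{\S}(\rid)$, so $m$ is $\prec$-maximal in $H$ and \cref{interp-def} gives $\interp(H)=\opemap(m,\interp(\stsem{\S}(\rid)))=\opemap(m,\opsem{\S}(\rid))$, the new op-state; the delivered-set agreement at $\rid$ is re-established since both sides now record $m$ (on the op side $m$ is generated-and-effected during the update); the sent-set agreement follows because $\bigcup\mathsf{Sent}$ grows on the state side by $H=\stsem{\S}(\rid)\cup\{m\}$ while $\stsem{\S}(\rid)=\mathsf{Delivered}(\rid,\opsem{\G})\subseteq\mathsf{Sent}(\opsem{\G})$; and for the downset conjunct an old pending $(\rid',m'')$ keeps its downset (unchanged, since a fresh $m$ cannot causally precede an earlier message), while each newly broadcast $(\rid',m)$ with $\rid'\neq\rid$ needs its downset to lie in $\bcast(\rid,H)(\stsem{\B})$, which holds because that downset is exactly $\mathsf{Delivered}(\rid,\opsem{\G})\cup\{m\}=\stsem{\S}(\rid)\cup\{m\}=H$.

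The \OpDeliver case is the crux. Here $(\rid,m)\in\opsem{\B}$, $(\rid,\dlvr[m])\in\mathsf{enabled}(\opsem{\G})$, the op-step is silent, and it sets $\opsem{\S}(\rid)$ to $\opemap(m,\opsem{\S}(\rid))$ and drops $(\rid,m)$ from the buffer. By the downset conjunct, $\bigl(\rid,(m)^{\darrow\opsem{\G}}\bigr)\in\stsem{\B}$, so I match by a single silent \StDeliver delivering the state $s''\defeq(m)^{\darrow\opsem{\G}}$ at $\rid$. This is a legal \StDeliver step: $s''$ cannot already have been delivered at $\rid$, for otherwise $s''\subseteq\stsem{\S}(\rid)=\mathsf{Delivered}(\rid,\opsem{\G})$ would force $m\in\mathsf{Delivered}(\rid,\opsem{\G})$, contradicting clause~(1) of $\mathsf{enabled}$ (\cref{def:enabled}). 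After the merge, $\rid$'s state-side value is $\stsem{\S}(\rid)\cup s''$, and clause~(2) of $\mathsf{enabled}$ puts all of $s''\smin\{m\}$ already inside $\stsem{\S}(\rid)$, so this equals $\stsem{\S}(\rid)\cup\{m\}=\mathsf{Delivered}(\rid,\opsem{\G})\cup\{m\}$, matching the op side's new delivered set; $\mathsf{Sent}$ is unchanged on both sides. For the $\interp$-conjunct, $m$ is again $\prec$-maximal in $\stsem{\S}(\rid)\cup\{m\}$ — by \cref{prop:causal-order-enforcement}, no already-delivered message can be a $\prec$-successor of the still-undelivered $m$ — so \cref{interp-def} gives $\interp(\stsem{\S}(\rid)\cup\{m\})=\opemap(m,\opsem{\S}(\rid))$, the new op-state. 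Removing $(\rid,s'')$ from $\stsem{\B}$ preserves the downset conjunct for the remaining pending op-messages, since distinct messages have distinct downsets (each downset contains its own top element) and downsets do not change when a $\dlvr$ event is appended.

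The main obstacle is precisely the downset bookkeeping that powers the \OpUpdate and \OpDeliver cases: one must show that the downset $(m)^{\darrow\opsem{\G}}$ of a message is exactly the set of messages its originating replica had already processed when it generated $m$, and that a freshly generated or not-yet-delivered message is $\prec$-maximal among the messages a replica holds — both facts rest on the characterization of $\prec$ as Lamport's happens-before (\cref{sec:opsem}), on well-formedness, and on \cref{prop:causal-order-enforcement}. This is the technical content behind \cref{ex:non-bisim}: it is exactly why the simulation must re-broadcast after \emph{every} update, so that the ``small enough'' state $(m)^{\darrow\opsem{\G}}$ is always waiting in the guest's buffer whenever the host delivers $m$.
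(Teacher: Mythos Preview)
Your proof is correct and takes essentially the same route as the paper: case-analyze the firing op-based rule, match \OpQuery by \StQuery, \OpUpdate by \StUpdate followed by the silent \StSend, and \OpDeliver by a single \StDeliver of the downset $(m)^{\darrow\opsem{\G}}$, then re-establish each $\RR_1$ conjunct using $\prec$-maximality of the fresh (resp.\ just-delivered) message and the definition of $\interp$. If anything you are more careful than the paper's sketch on side-conditions --- e.g., arguing that the matching \StDeliver is actually enabled, and that removing $(\rid,(m)^{\darrow\opsem{\G}})$ from $\stsem{\B}$ cannot break the buffer conjunct because distinct messages have distinct downsets --- points the paper leaves implicit.
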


$\RR_{1}(\opconfig{\varepsilon}{\S^{0}}{\varnothing}, \stconfig{\varepsilon}{\S^{0}}{\varnothing})$
is immediate. We only need to show that $\RR_{1}$ is indeed a weak simulation.
We prove the following archetypical case. The other cases follow a similar pattern.

\begin{proof}[Sketch of \Cref{thm:main1}]
  Assume the hypothesis:
    \[ \RR_1(\opconfigA, \stconfigB) \textit{ and for any $\alpha$, we have} \opconfigA \xopleadsto{\alpha} \opconfigB.\]
    The archetypical case is
    \begin{itemize}
      \item (\OpDeliver). Then $\alpha = \quiet{\tau}$, and we have $\opconfigA \xquietleadsto{\tau} \opconfigB$,
            where
            \[ \opsem{\G'} = \opsem{\G \cdot (\rid, \dlvr[m], \bot)}, \quad 
               \opsem{\S'} = \opsem{\S[\rid \mapsto \emap(m, \S(\rid))]}, \quad
               \opsem{\B'} = \opsem{\B \rmv{ (\rid, m) }},
            \]
            which means $\opsem{\rid}$ delivered an enabled message $(\rid, m) \in \opsem{\B}$.
            The hypothesis implies $\exists (\rid, (m)^{\darrow \opsem{\G}}) \in \stsem{\B}$. Therefore
            we can invoke\StDeliver and match with
              \begin{gather*}
                \stconfigA \quietleadsto \subG \stconfig{\G'}{\S'}{\B'}, \\
                \stsem{\G'} = \stsem{\G \cdot (\rid, \dlvr[(m)^{\darrow \opsem{\G}}], \bot)}, \quad
                \stsem{\S'} = \stsem{\S[\rid \mapsto \S(\rid) \cup (m)^{\darrow \opsem{\G}}]}, \quad
                \stsem{\B'} = \stsem{\B \rmv{ (\rid, (m)^{\darrow \opsem{\G}}) }}.
              \end{gather*}
            To finish, we need to show $\RR_1(\opconfigB, \stconfigB)$ by checking the conditions in \cref{fig:R1-R2-sim}.
            Since no messages were sent, the last condition is immediate from the hypothesis. We focus
            on the changes that happened at replicas $\opsem{\rid}$, $\stsem{\rid}$. 
            Since $\opsem{\rid}$ delivered $m$, by \cref{prop:causal-order-enforcement}, 
            all causally preceding messages (i.e., $m' \prec m$) had to have already been delivered to at replica $\opsem{\rid}$.
            But since $\mathsf{Delivered}(\rid, \opsem{\G}) = \stsem{\Sigma(\rid)}$ (by hypothesis), all such $m' \in \stsem{\Sigma(\rid)}$
            as well. Thus, 
              \[ \stsem{\S'(\rid)} = \stsem{\S(\rid)} \cup (m)^{\darrow \opsem{\G}} = 
                \stsem{\S(\rid)} \cup \{m\} = \mathsf{Delivered}(\rid, \opsem{\G'}),\]
            which establishes the first condition in $\RR_1$. It remains to show that
            $\opsem{\S'}(\opsem{\rid}) = \interp(\stsem{\S'}(\stsem{\rid}))$. Indeed, since $m$
            satisfies either $\concurrent{m'}{m}$ or $m' \prec m$ for each $m' \in \stsem{\S(\rid)}$,
            \cref{interp-def} implies key equality
              \[ \interp(\stsem{\S'(\rid)}) = \interp(\stsem{\S(\rid)} \cup \{m\}) = 
                \opemap(\interp(\stsem{\S(\rid)}), m) = \opsem{\S'(\rid)}.\]
    \end{itemize}
\end{proof}

\subsubsection{The Host CRDT Weakly Simulates the Guest CRDT}

The weak simulation of $\stleadsto\subG$ by $\opleadsto$ proceeds similarly.
The only real technical difficulty is maintaining the invariant:
if the state-based replica $\stsem{\rid}$ invokes $\stmmap$, the op-based system can pick a sequence
of messages $(\opsem{\rid}, m_1), \dots, (\opsem{\rid}, m_k) \in \opsem{\B}$ whose sequential delivery simulates that merge.
In principle, this is easy, since the states $\stsem{H}$ are sets of messages, though one needs to keep
the deliverability condition of\OpDeliver in mind. To that end, we give the auxiliary \cref{def:deliverable-set}.

\begin{definition}\label{def:deliverable-set}
  In an op-based configuration $\opconfigA$, we say a
  subset of messages
  \[ U = \{m_1, \dots, m_k \} \subseteq M \]
  is \emph{deliverable at a replica} $\rid$ (and write $U \in \mathsf{deliverable}_{\opsem{\Gamma}}(\rid)$)
  if for all $j \in 1..k$,
  \begin{enumerate}
    \item $(\rid, m_j) \in \opsem{\beta}$ and
    \item $(\rid, \dlvr[m_j]) \in \mathsf{enabled}(\opsem{\Gamma} \cdot (\rid, \dlvr[m_1], \bot) \cdots (\rid, \dlvr[m_{j-1}], \bot))$.
  \end{enumerate}
\end{definition}

\begin{theorem}\label{thm:main2}
  $\mathcal{R}_{2}$ (shown in \cref{fig:R1-R2-sim}) is a weak simulation of $\stleadsto\subG$ by $\opleadsto$
  which relates the initial configurations $\stconfig{\varepsilon}{\S^{0}}{\varnothing}$
  and $\opconfig{\varepsilon}{\S^{0}}{\varnothing}$.
\end{theorem}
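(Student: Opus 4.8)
The plan is to check directly that $\RR_{2}$ is a weak simulation in the sense of \Cref{def:weaksim} and that it relates the two initial configurations. The latter is immediate: when $\stsem{\G}=\opsem{\G}=\varepsilon$, every $\mathsf{Sent}$ and $\mathsf{Delivered}(\rid,{-})$ set is empty and both message buffers are empty, so the first and last clauses of $\RR_{2}$ hold (the last vacuously), and the state-agreement clause $\interp(\varnothing)=\opsem{s^{0}}$ is exactly the base case of~\eqref{interp-def}. For the simulation condition, fix $(\stconfigA,\opconfigA)\in\RR_{2}$ and a move $\stconfigA\xstleadsto{\alpha}\subG\stconfigB$, and case-split on which state-based rule fired; in each case we exhibit a weak op-based run on $\alpha$ from $\opconfigA$ to some $\opconfigC$ with $(\stconfigB,\opconfigC)\in\RR_{2}$. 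As with \Cref{thm:main1}, I would write out one or two archetypical cases in full and remark that the rest follow the same pattern.

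The two routine cases are \StQuery, matched step-for-step by \OpQuery, and \StUpdate, matched by \OpUpdate. For a query the returned value agrees since $\stqmap(q,\stsem{\S}(\rid))=\opqmap(q,\interp(\stsem{\S}(\rid)))=\opqmap(q,\opsem{\S}(\rid))$ by the state-agreement clause, and nothing relevant to $\RR_{2}$ changes. For an update the message built by $\stupmap$ is $m=\opprmap(\rid,\op,\interp(\stsem{\S}(\rid)))=\opprmap(\rid,\op,\opsem{\S}(\rid))$, i.e.\ precisely the message that \OpUpdate creates, applies, and broadcasts; $m$ is fresh and hence $\prec$-maximal among all sent messages, so $\interp(\stsem{\S}(\rid)\cup\{m\})=\opemap(\interp(\stsem{\S}(\rid)),m)$ by the recursive clause of~\eqref{interp-def}, which keeps the state-agreement and delivered-set clauses intact. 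The only new op-side buffer entries are the broadcast copies of $m$; being $\prec$-maximal, $m$ enlarges no existing $\darrow$-closure and cannot invalidate the deliverability of any already-pending set, so the last clause of $\RR_{2}$ survives.

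The substantive cases are the two silent moves \StSend and \StDeliver, where merges are handled and where the argument mirrors the \OpDeliver case of \Cref{thm:main1}. A \StSend step broadcasts $\stsem{\S}(\rid)$, which by the first two clauses equals $\mathsf{Delivered}(\rid,\opsem{\G})$ --- a set that is causally downward closed by \Cref{prop:causal-order-enforcement} together with the $\mathsf{enabled}$ discipline of \OpDeliver. The matching op-based run is empty; re-establishing the last clause for each fresh buffered copy $(\rid',\stsem{\S}(\rid))$ amounts to presenting $\stsem{\S}(\rid)$ as the $\darrow$-closure of a set the recipient can still legally deliver: the messages $\rid$ has delivered but $\rid'$ has not are still in $\opsem{\B}$ at $\rid'$ and, by downward closure, can be ordered along a linear extension of $\prec$ so as to be deliverable in the sense of \Cref{def:deliverable-set} (already-delivered messages requiring no action). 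The \StDeliver step, in which $\stsem{\rid}$ merges a buffered state $\stsem{H}$, is the crux: the last clause of $\RR_{2}$ hands us $U=\{m_{1},\dots,m_{k}\}\in\mathsf{deliverable}_{\opsem{\G}}(\rid)$ with $\stsem{H}=(U)^{\darrow\opsem{\G}}$, and by \Cref{def:deliverable-set} we can order $U$ along a linear extension of $\prec$ so that $\opsem{\rid}$ fires $k$ consecutive silent \OpDeliver steps on $m_{1},\dots,m_{k}$ --- that sequence is the required weak run. Restoring $\RR_{2}$ then rests on two facts: $(U)^{\darrow\opsem{\G}}\setminus U$ consists of causal predecessors already delivered at $\rid$, so the op-delivered set becomes $\mathsf{Delivered}(\rid,\opsem{\G})\cup U=\stsem{\S}(\rid)\cup\stsem{H}$, matching the merge; and, since $\interp$ of a union is computed by peeling off $\prec$-maximal messages (\eqref{interp-def}) while concurrent effects commute (\cref{fig:op-based-object}), applying $m_{1},\dots,m_{k}$ in that order to $\opsem{\S}(\rid)=\interp(\stsem{\S}(\rid))$ yields $\interp(\stsem{\S}(\rid)\cup\stsem{H})$, preserving state-agreement. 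The last clause is preserved because deliveries change neither $\mathsf{Sent}$ nor $\prec$, hence leave all $\darrow$-closures and the deliverability of every other pending set untouched (a witness at $\rid$ itself being re-chosen as in \StSend if needed).

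I expect the main obstacle to be maintaining the last clause of $\RR_{2}$ across \StDeliver (and \StSend): one must be certain that every state ever enqueued in the guest system is the $\darrow$-closure of some set of host messages that the recipient can still legally deliver under the causal-delivery constraint of \OpDeliver, and that replaying those deliveries along some linear extension of $\prec$ reproduces exactly the merged state. This is where \Cref{prop:causal-order-enforcement} (forcing delivered-sets to be causally closed) and the order-independence baked into~\eqref{interp-def} via commutativity of concurrent effects do the real work; the remaining bookkeeping --- $\prec$-maximality of freshly generated messages, and the fact that deliveries leave $\mathsf{Sent}$ and $\darrow$-closures fixed --- is routine.
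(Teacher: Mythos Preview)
Your proposal is correct and follows essentially the same approach as the paper: the same case split on the four state-based rules, the same matches (\OpQuery for \StQuery, \OpUpdate for \StUpdate, the empty run for \StSend, and a sequence of \OpDeliver steps for \StDeliver), and the same use of causal downward closure and \eqref{interp-def} to re-establish the invariants. The only presentational difference is in \StDeliver, where the paper first partitions $\stsem{H}$ into delivered and undelivered messages and then proves the deliverable witness $W$ coincides with the undelivered part, whereas you take the witness $U$ directly from the last clause of $\RR_{2}$ and argue that $(U)^{\darrow\opsem{\G}}\setminus U$ is already delivered---equivalent reasoning, arguably a shade more direct.
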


\begin{proof}[Sketch of \cref{thm:main2}]
  Assume as our hypothesis:
  \[ \RR_{2}(\stconfigA, \opconfigA)\; \textit{and for any $\alpha$, we have}\; \stconfigA \xstleadsto{\alpha} \subG \stconfigB. \]
  The interesting case is,
  \begin{itemize}
    \item (\StDeliver). Then $\alpha = \quiet{\tau}$ and we have $\stconfigA \quietleadsto \subG \stconfigB$ where
          \[  \stsem{\G'} = \stsem{\G \cdot (\rid, \dlvr[H], \bot)}, \quad
              \stsem{\S'} = \stsem{\S[ \rid \mapsto \mmap(\S(\rid), H)]}, \quad
              \stsem{\B'} = \stsem{\B \rmv{ (\rid, H) }},
          \]
          which means a replica $\rid$ merged a $(\rid,\stsem{H}) \in \stsem{\B}$ that was sent by some other replica.
          If $\stsem{\S(\rid)} \cup \stsem{H} = \stsem{\S(\rid)}$, we match by the reflexive step $\opconfigA \quietleadstostar \opconfigA$,
          and there is nothing to show, so we suppose there is a number $k > 0$, and messages $m_1, \dots, m_k \notin \stsem{\S(\rid)},$
          so that $\stsem{\S(\rid)} \cup \stsem{H} = \stsem{\S(\rid)} \cup \setOf{m_1, \dots, m_k}$.

          \paragraph{(Claim)}
          We can pick as $U \in \mathsf{deliverable}_{\opsem{\G}}(\rid)$ the set $\setOf{m_1, m_2, \dots, m_k}$,
          therefore deliver the sequence of messages $m_1 m_2 \cdots m_k$ at replica $\opsem{\rid}$ so that
            \begin{gather*}
              \opsem{\S(\rid)} \xopstepsto{(\rid , \dlvr[m_1] , \bot)} \opsem{s_1} 
                \xopstepsto{(\rid , \dlvr[m_2] , \bot)} \opsem{s_2} \cdots \xopstepsto{(\rid , \dlvr[m_k] , \bot)} \opsem{s_k}
            \end{gather*}
          and thereby match with $k$ applications of (\OpDeliver) with respect to replica $\opsem{\rid}$ giving:
          \begin{gather}\label{eq:deliveries1}
            \begin{gathered}
              \opconfigA \quietleadstostar \opconfigB,\\
              \opsem{\G'} = \opsem{\G} \cdot \opsem{\langle (\rid, \dlvr[m_1], \bot), \dots, (\rid, \dlvr[m_k], \bot) \rangle},\\
              \opsem{\S'} = \opsem{\S[\rid \mapsto s_k]}, \quad
              \opsem{\B'} = \opsem{\B} \opsem{\rmv{(\rid, m_1), \dots, (\rid, m_k)}}.
            \end{gathered}
          \end{gather}
      
          \paragraph{(Proof of Claim)}
          We can partition $\stH$ into two sets: a set of delivered messages $D \subseteq \stsem{\S(\rid)}$,
          and a set $U = \{m_1, \dots, m_k\}$ of $k \geq 0$ undelivered message, i.e., $U \cap \stsem{\S(\rid)} = \varnothing$.
          We need to show that $\opsem{\rid}$ can pick and deliver $U$.
          
          From the hypothesis, $\exists W \in \mathsf{deliverable}_{\opsem{\G}}(\rid)$ such that $\stsem{H} = (W)^{\darrow \opsem{\G}}.$
          Since $(W)^{\darrow \opsem{\G}}$ is the downwards closure of $W$,
          we immediately have $W \subseteq (W)^{\darrow \opsem{\G}} = \stH$. Thus we can similarly partition $W$ into
          a set $D'$ of delivered messages,
          and a set $U'$ of undelivered messages. By construction, $U' \subseteq U$.
          From \cref{def:deliverable-set,def:enabled}, it must be the case that $D' = \varnothing$.
          So, $W = U'$. If we can show $U \subseteq W$, we are done. Let $m \in U$.
          Note that $m$ must live in $\beta(\opsem{\rid})$ somewhere, since if not, 
          then we have $m \in \bigcup \mathsf{Sent}(\stsem{\G})$ and $m \notin \mathsf{Sent}(\opsem{\G})$, which is a contradiction.
          Suppose $m \notin W$. In that case, $m$ must be either `below'
          all messages $W$, or `above' all messages in $W$, with respect to the order $\prec$. In the former case,
          we have a contradiction, since $W$ is a deliverable set. In the latter case, we can always extend $W$ with
          all the undelivered causal ancestors of $m$, and $W$ is still a deliverable set. It follows that $m \in W$, hence $U \subseteq W$.
          Therefore, we can deliver the messages $W = U = \{m_1, \dots, m_k\}$ (reindexing if necessary), and we have the claim.
          \stopcase

          It remains to show that $\RR_{2}(\stconfigB, \opconfigB)$. 
          For that, note the hypothesis $\stsem{\Sigma}(\rid) = \mathsf{Delivered}(\rid, \opsem{\Gamma})$ immediately yields,
              \begin{gather}
                \stsem{\S'(\rid)} = \stsem{\S(\rid)} \cup U  = \mathsf{Delivered}(\rid, \opsem{\G}) \cup U = \mathsf{Delivered}(\rid, \opsem{\G'}). \label{equal-deliveries}
              \end{gather}
          The only other interesting condition is showing that 
          replicas agree in their local states, where it suffices to prove $\opsem{s_k} = \interp(\stmmap(\stsem{\S(\rid)}, \stsem{H}))$.
            But this has to be the case: since $\setOf{m_1, \dots, m_k} = U \in \mathsf{deliverable}_{\opsem{\G}}(\rid)$ means each pair 
            $m_i, m_j \in U$ $(i < j)$ satisfies either $m_i \prec m_j$ or $\concurrent{m_i}{m_j}$,
            and \eqref{equal-deliveries} means $\opsem{\rid}$ and $\stsem{\rid}$ have delivered the same set of messages, we obtain
              \[ \opsem{s_k} = (m_k \circ \cdots \circ m_1)(\opsem{\S(\rid)}) = \interp(\stsem{\S(\rid)} \cup U) = \interp(\stmmap(\S(\rid), \stsem{H})).\]
  \end{itemize}
\end{proof}

\subsection{Consequences of CRDT Emulation as Simulation}\label{sec:preserved-properties}

\Cref{thm:main1,thm:main2} show that CRDTs related by an emulation $\GG$ are behaviorally related.
In this section, we discuss some consequences of this relationship, and --- in particular ---
demonstrate how we can transfer \emph{properties} of the behavior of a CRDT across this relationship.

\subsubsection{Trace Properties}

We first
recall the definition of a weak trace set. Let $(X, \Lambda, \stepsto)$
be a labeled transition system. The \emph{weak trace set} of $x \in X$
is defined as
\[
    \wtrace(x) \defeq \{ \langle \alpha_1, \dots, \alpha_k \rangle \in (\Lambda \setminus \{\quiet{\tau}\})^{\ast} \mid
        \exists x_1,\dots,x_k \in X\, :\, (x \xstepstostar{\alpha_1} x_1 \cdots \xstepstostar{\alpha_k} x_k)
    \}.
\]
In words, it is the set of all finite (and unbounded) observable behaviors starting from a state $x$.
Critically, weak simulations are \emph{sound} with respect to \emph{weak trace inclusion} in the following sense:
if $x \in X$ and $y \in Y$ are states, then
\[ x \wsimulatedby y \implies \wtrace(x) \subseteq \wtrace(y).\]
It follows that if both $x \wsimulatedby y$ and $y \wsimulatedby x$, then $\wtrace(x) = \wtrace(y)$,
which is called \emph{trace equivalence.} For us, traces are in terms of the labels of
our transition semantics in \cref{sec:coalg}:
  \[
    \Lambda \setminus \{\quiet{\tau}\} \defeq \{(\rid, \upd[\op], \bot) \mid \rid \in \replicas, \op \in \mathtt{Op}\}
    \cup
    \{(\rid, \qry[\op], \resp[v]) \mid \rid \in \replicas, q \in Q, v \in V\}.
  \]
Our results \cref{sec:emulation} thus imply the following two corollaries.

\begin{corollary}[CRDT Emulation Implies Trace Equivalence]\label{consequence1}
    Let $\OO_{\host}$ an op-based (state-based) CRDT with transition semantics $\leadsto_{\host}$
    and initial configuration $\initstate_{\host}$. Let $\OO_{\guest}$
    a state-based (op-based) CRDT, with corresponding semantics $\leadsto_{\guest}$ and initial
    configuration $\initstate_{\guest}$. Suppose $\GG$ is an emulation such that,
    $\OO_{\guest} = \GG(\OO_{host})$. Then $\initstate_{\guest} \wsimulatedby \wsimulates \initstate_{\host}$,
    and therefore
      \[ \wtrace(\initstate_{\host}) = \wtrace(\initstate_{\guest}).\]
\end{corollary}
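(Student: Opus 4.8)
The corollary is a direct consequence of the two weak-simulation theorems together with the stated soundness of weak simulation for weak trace inclusion, so the plan is simply to assemble these ingredients. First I would unpack \Cref{def:emulation}: an emulator $\GG$ is either the op-to-state construction of \cref{sec:op-to-state-emulation} or the state-to-op construction of \cref{appendix:other-emulation}, so I would split on the direction. In the op-to-state case, $\OO_{\host}$ is op-based and $\OO_{\guest} = \GG(\OO_{\host})$ is the state-based object of \Cref{fig:state-based-guest}; then \Cref{thm:main1} supplies a weak simulation $\RR_1$ of the host by the guest relating $\initstate_{\host}$ and $\initstate_{\guest}$, so $\initstate_{\host} \wsimulatedby \initstate_{\guest}$, and \Cref{thm:main2} supplies a weak simulation $\RR_2$ of the guest by the host relating the same two initial configurations in the other direction, so $\initstate_{\guest} \wsimulatedby \initstate_{\host}$. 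Hence $\initstate_{\host} \wsimulatedby \wsimulates \initstate_{\guest}$ by \cref{def:similarity}. The state-based-to-op-based case is symmetric, invoking the dual theorems of \cref{appendix:other-emulation}.

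\textbf{From mutual weak simulation to trace equivalence.} Next I would apply soundness of weak simulation with respect to weak trace inclusion (stated just above the corollary): from $\initstate_{\host} \wsimulatedby \initstate_{\guest}$ one gets $\wtrace(\initstate_{\host}) \subseteq \wtrace(\initstate_{\guest})$, and from $\initstate_{\guest} \wsimulatedby \initstate_{\host}$ the reverse inclusion, so $\wtrace(\initstate_{\host}) = \wtrace(\initstate_{\guest})$. If that soundness step itself needs spelling out, it is a routine induction on the length $k$ of a weak trace $\langle \alpha_1, \dots, \alpha_k \rangle$: starting from a related pair $(x,y) \in \RR$ for a weak simulation $\RR$ and a witnessing run $x \xstepstostar{\alpha_1} x_1 \cdots \xstepstostar{\alpha_k} x_k$, one repeatedly applies the defining clause of weak simulation, together with the transitivity and silent-absorption rules for the weak closure, to build a matching run from $y$ while keeping every intermediate pair in $\RR$.

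\textbf{Main obstacle.} There is essentially no obstacle in the corollary itself --- all the real content lives in \Cref{thm:main1} and \Cref{thm:main2} (and their duals), in particular the choice of the downset $(m)^{\darrow \opsem{\G}}$ and the deliverability argument of \cref{thm:main2}. The only points needing care here are bookkeeping: (i) that the \emph{same} standard initial configuration $\config{\varepsilon}{\S^{0}}{\varnothing}$ is the one related by both $\RR_1$ and $\RR_2$, which holds because each theorem explicitly relates the standard initial configurations; and (ii) that weak traces for host and guest are taken over the \emph{same} observable label set $\Lambda \setminus \{\quiet{\tau}\}$, which holds because $\GG$ leaves the operation set $\mathtt{Op}$, the query set $Q$, and the value set $V$ unchanged, so the client-facing interfaces of host and guest coincide. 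It is worth emphasizing that trace equivalence follows from the \emph{pair} of one-way weak simulations alone; no weak bisimulation is available or needed, consistent with \cref{ex:non-bisim}.
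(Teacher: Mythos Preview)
Your proposal is correct and matches the paper's approach: the paper treats the corollary as an immediate consequence of the definition of emulation (\Cref{def:emulation}), which already packages the mutual weak simulation $\initstate_{\host} \wsimulatedby\wsimulates \initstate_{\guest}$, together with the soundness of weak simulation for weak trace inclusion stated just before the corollary. Your case split on the direction of $\GG$ and explicit invocation of \Cref{thm:main1,thm:main2} (and their duals) is more detailed than strictly necessary---since ``$\GG$ is an emulation'' already entails $\initstate_{\host} \wsimulatedby\wsimulates \initstate_{\guest}$ by definition---but it is not wrong, and your bookkeeping checks (same initial configuration, same observable label set) are apt.
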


\begin{corollary}[Emulation Preserves Weak Trace Properties]\label{consequence3}
    Suppose $P$ is a predicate on weak traces 
    (i.e., a $P \subseteq \TT_{\Lambda}$, where $\TT_{\Lambda}$ is the set of all weak traces).
    If $\OO_{\host}$ and  $\OO_{\guest}$ are two CRDTs related by an emulation $\GG$,
    then we have, for all configurations $\Cfg_{\host}$ and $\Cfg_{\guest}$,
    \[ \Cfg_{\host} \wsimulatedby \wsimulates \Cfg_{\guest} \implies 
    (\wtrace(\Cfg_{\host}) \subseteq P \iff \wtrace(\Cfg_{\guest}) \subseteq P),\]
    and in particular, the following is true: $\wtrace(\initstate_{\host}) \subseteq P \iff \wtrace(\initstate_{\guest}) \subseteq P.$
\end{corollary}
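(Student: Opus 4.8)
The plan is to derive this directly from \Cref{consequence1} and the soundness of weak simulation for weak trace inclusion recalled just above. Unfolding $\Cfg_{\host} \wsimulatedby \wsimulates \Cfg_{\guest}$ (\Cref{def:similarity}) yields two weak simulations: one relating $\Cfg_{\host}$ to $\Cfg_{\guest}$ and one relating $\Cfg_{\guest}$ to $\Cfg_{\host}$ --- i.e.\ weak \emph{similarity}, not weak bisimulation, which is all one can hope for in general by \Cref{ex:non-bisim}, but it is exactly enough here. Applying soundness to each direction gives $\wtrace(\Cfg_{\host}) \subseteq \wtrace(\Cfg_{\guest})$ and $\wtrace(\Cfg_{\guest}) \subseteq \wtrace(\Cfg_{\host})$, hence $\wtrace(\Cfg_{\host}) = \wtrace(\Cfg_{\guest})$. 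Once the two weak trace sets are literally equal, the biconditional $\wtrace(\Cfg_{\host}) \subseteq P \iff \wtrace(\Cfg_{\guest}) \subseteq P$ is immediate for any $P \subseteq \TT_{\Lambda}$, since both sides refer to the same set.

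For the ``in particular'' clause I would instantiate $\Cfg_{\host} := \initstate_{\host}$ and $\Cfg_{\guest} := \initstate_{\guest}$. \Cref{consequence1} already establishes $\initstate_{\guest} \wsimulatedby \wsimulates \initstate_{\host}$ whenever $\OO_{\guest} = \GG(\OO_{\host})$ for an emulation $\GG$ --- this is precisely where \Cref{thm:main1,thm:main2} (and their state-to-op-based counterparts in \Cref{appendix:other-emulation}) get used --- so the general case just proved applies verbatim to the initial configurations.

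The only substantive ingredient is the soundness statement ``$x \wsimulatedby y \implies \wtrace(x) \subseteq \wtrace(y)$''. If it is not already established, I would prove it by induction on the length $k$ of a weak trace $\langle \alpha_1, \dots, \alpha_k \rangle \in \wtrace(x)$: the case $k = 0$ is trivial, and in the inductive step one uses the defining clause of a weak simulation (\Cref{def:weaksim}) to match each $x_{i-1} \xstepstostar{\alpha_i} x_i$ in $X$ by some $y_{i-1} \xstepstostar{\alpha_i} y_i$ in $Y$ with $(x_i, y_i)$ still related, then composes the matched saturated transitions along $y$ to realize the same observable label sequence. Because the host and guest systems share the label set $\Lambda$ by construction (\Cref{sec:coalg}), no relabelling is needed. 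This bookkeeping is the only place any real work happens; given it, the corollary is a routine repackaging of \Cref{consequence1}, so I do not anticipate a genuine obstacle.
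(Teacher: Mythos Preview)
Your proposal is correct and matches the paper's approach exactly: the paper states the soundness of weak simulation for weak trace inclusion just before \Cref{consequence1}, and both corollaries are presented as immediate consequences without further proof. Your extra sketch of the inductive argument for soundness is sound but unnecessary here, since the paper treats that fact as standard.
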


\cref{consequence1} implies op-based and state-based CRDTs are equivalent precisely
in terms of the traces of \emph{update events} $(\rid, \upd[\op], \bot)$,
and \emph{query} events $(\rid, \qry[q], \resp[v])$.
\cref{consequence3} characterizes a set of properties that can be safely verified on one kind of CRDT (e.g., op-based),
and then `transferred' to the other kind of CRDT (e.g., state-based) `for free' --- the \emph{weak trace properties}.
Implicit in this formalism is reliable causal broadcast, but so long as that holds, state-based and op-based
CRDTs really are interchangeable in terms of their trace behaviors.

\subsubsection{Transfer of Other Properties}

Since simulation is stronger than trace inclusion, one might wonder if we can `transfer'
stronger properties than the trace properties. 
The answer is `yes', though unlike the weak trace properties, it is not completely free.
We show how one might do this with an example using \emph{strong convergence}, described in \cref{subsec:crdt-background}.
Although it is known that any well-formed CRDT satisfies strong convergence~\citep{shapiro-crdts},
it affords a good minimal example of how to transfer more complex properties across an emulation.

\begin{example}[Transfer of Strong Convergence]\label{ex:transfer-sec}

    In general, CRDTs do not need to keep track of the set of operations that have already been
    applied. As such, for any particular object $\OO$, we will
    construct an \emph{augmented}, history-carrying object $\OO'$ that is otherwise equivalent.

    Suppose that $\OO$ is an op-based CRDT with sets $S, \mathsf{Op}, M, Q, V$, initial state $s^{0}$,
    and functions $\mathsf{prep}, \mathsf{effect}, \mathsf{query}$. (Without loss of generality, we will assume
    that no operation occurs more than once, since we can always attach unique identifiers if necessary.)
    We define $\OO'$ via $S' = S \times \mathcal{P}(\mathsf{Op})$, $M' = M \times \mathcal{P}(\mathsf{Op})$,
    and $V' = V \times \mathcal{P}(\mathsf{Op})$ (leaving $\mathsf{Op}$ and $Q$ unchanged), with initial state
    $s'^{0} = (s^{0}, \varnothing)$, and functions
    \begin{gather*}
        \mathsf{prep}'(\rid, \mathsf{op}, (s, -)) = (\mathsf{prep}(\rid, \mathsf{op}, s), \{\mathsf{op}\})
        \qquad 
	      \mathsf{effect}'((m, h'), (s, h)) = (\mathsf{effect}(m, s), h \cup h') \\
        \mathsf{query}'(q, (s, h)) = (\mathsf{query}(q, s), h).
    \end{gather*}
    If $\OO$ is a state-based CRDT, we proceed similarly, noting that $(\mathcal{P}(\mathsf{Op}), \cup)$
    is a join semilattice. Intuitively, $\OO'$ is a kind of product of $\OO$ together with a CRDT that simply records
    and reports the operations accrued at a replica. The key invariant to observe is that two replicas have
    applied the same set of updates \emph{if and only if} they report the same set of updates under $\mathsf{query}'$.
   
    We can now consider \emph{executions} of $\OO'$, following our development in \Cref{sec:coalg}. Notice that
    every execution of $\OO'$ can be reduced to a valid execution of $\OO$ by \emph{forgetting} all history-related
    augmentations; and every execution of $\OO$ can be \emph{extended} to a valid execution of $\OO'$ by including the
    extra history-tracking information. Thus, the augmented semantics of $\OO'$ is \emph{sound and complete} with
    respect to the semantics of $\OO$, so we lose nothing by working with the executions of $\OO'$.
    In particular, the pair of weak simulations obtained in \Cref{sec:op-to-state-details} between $\OO$ and
    $\mathcal{G}(\OO)$ extends to a pair of weak simulations between their history-carrying augmentations.

    We now show how strong convergence `transfers'. With the augmented semantics out of the way, the proof is direct.
    Suppose we have an object $\OO_{\host}$, and an emulation 
    $\GG$,
    so that $\OO_{\guest} = \GG(\OO_{\host})$. Let $\initstate_{\host}$ and $\initstate_{\guest}$ be the initial
    configurations. Suppose that $\OO_{\guest}$ satisfies strong convergence, i.e.,
    if $\config{\G'}{\S'}{\B'} \in \mathsf{Config}'$ is a well-formed configuration, then
    \begin{gather}
        \forall \rid, \rid' : q(\Sigma'(\rid)) = (v_1, h_1) \land \Sigma(\rid') = (v_2, h_2) \land h_1 = h_2 \implies v_1 = v_2
        \label{SC}
    \end{gather}
    
    Let $\Sigma'_{\host}$ be the replica states of some well-formed configuration $\CC_{\host}$, such that
    replicas $\rid$ and $\rid'$ have the same set of updates, i.e., 
    \begin{gather}
        q(\Sigma'_{\host}(\rid)) = (v_1, h_1) \land q(\Sigma'_{\host}(\rid') = (v_2, h_2)) \implies h_1 = h_2.
        \label{premise}
    \end{gather}
    Since $\GG$ is an emulation, we have $\mathsf{init}_{\host} \wsimulatedby \wsimulates \mathsf{init}_{\guest}$,
    and therefore there are replica states $\Sigma'_{\guest}$ from a well-formed configuration $\CC_{\guest}$ 
    such that $\CC_{\host} \wsimulatedby \CC_{\guest}$,
    which implies agreement on queries:
    \begin{gather}
        q(\Sigma'_{\host}(\rid)) = (v_1, h_1) = q(\Sigma'_{\guest}(\rid))
        \quad \textit{and} \quad
       q(\Sigma'_{\host}(\rid')) = (v_2, h_2) = q(\Sigma'_{\guest}(\rid'))
       \label{simresult}
    \end{gather}
    But $\OO_{\guest}$ satisfies strong convergence, and so by \eqref{premise} and \eqref{simresult}, 
    we can apply \eqref{SC} with $\Sigma' = \Sigma'_{\guest}$, thus obtaining $v_1 = v_2$, and therefore,
    \[ q(\Sigma'_{\host}(\rid)) = (v_1, h_1) = (v_2, h_2) = q(\Sigma'_{\host}(\rid')).\]\
    Since this works for any well-formed configuration of the host object, we conclude that $\OO_{\host}$
    satisfies strong convergence.
\end{example}

\subsubsection{What of Bisimulation?}
Recall in \cref{ex:non-bisim} that \emph{simulations} do not necessarily respect \emph{branching behavior},
unlike (weak) bisimulation.
More generally, lack of bisimulation can present issues when we are concerned with \emph{deadlock},
\emph{termination}, and arbitrary `looping' behavior.
Simulation is well known to be insensitive to these concepts.
In particular, if one only has a pair of simulations $(\RR_1, \RR_2)$ which relate non-deterministic programs $p_{x}$ and $p_{y}$,
then one cannot prove statements of the form: \textit{``if \textbf{\emph{every}} execution of $p_x$ terminates, then \textbf{\emph{every}} execution of $p_y$ terminates.''}
There may yet be other $p_y$-executions that do not terminate!
In this case, the best we can do is show that $p_x$ and $p_y$ `approximate' each other in the sense that:
if $p_x$ \emph{can} terminate, \emph{then so can} $p_y$.
Our results in \cref{sec:representation-independence} show, among other things, a mechanically verified proof of this fact.

That said, the lack of a weak bisimulation is somewhat artificial.
It turns out one can obtain a weak bisimulation by modifying the state-based CRDT to broadcast just as often as the
op-based CRDT.
\begin{theorem}\label{thm:bisim-possible}
    In \cref{def:stglobal}, delete the \StSend rule, and in place of \StUpdate, use the following rule:
    \begin{gather*}
    \inference[\textsf{StUpdBC}]{
      \rid \in \replicas & \op \in \mathtt{Op} & s = \Sigma(\rid) &
      s \xstepsto{ (\rid, \upd[\op] , \send[s']) } s'
    }{
      \config{\Gamma}{\Sigma}{\beta}
        \xleadsto{ (\rid , \upd[\op] , \bot) }
      \config{\Gamma \cdot (\rid, \upd[\op], \send[m])}{\Sigma [\rid \mapsto s']}{\bcast (\rid, s')(\beta)}
    }\label{rule:StUpdBC}
    \end{gather*}
    Then, if $\opsem{\OO}$ is an op-based CRDT, and $\stsem{\OO}$ is a state-based CRDT
    which are related by some emulation $\GG$ (in either direction), $\opsem{\initstate}$
    and $\stsem{\initstate}$ are weakly \emph{bisimilar}, i.e.,
    $\opsem{\initstate} \approx \stsem{\initstate}$.
\end{theorem}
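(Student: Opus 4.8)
The plan is to reuse the case analyses behind \Cref{thm:main1,thm:main2} but to assemble them into a \emph{single} relation that works as a weak bisimulation, exploiting the fact that \StUpdBC makes the state-based guest broadcast exactly once per update---just as the op-based host does. First I would record the structural consequences of the modification: with \StSend deleted, the only states ever pushed into $\stsem{\B}$ are the ones emitted by \StUpdBC, and when replica $\stsem{\rid}$ performs an update that generates the op-message $m$, the broadcast state is $\stsem{\S}(\rid)\cup\{m\}$, which---using that reliable causal delivery makes $\mathsf{Delivered}(\rid,\opsem{\G})$ downward closed under $\prec$, and that the timestamp of the freshly generated $m$ dominates everything $\rid$ has delivered---equals exactly the downset $(m)^{\darrow\opsem{\G}}$. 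Hence the state buffer always consists of pairs $(\rid,(m)^{\darrow\opsem{\G}})$ for sent messages $m$, mirroring the op buffer's pairs $(\rid,m)$.

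Next I would define the bisimulation $\RR$ as a tightening of $\RR_{1}$: it relates $\opconfigA$ and $\stconfigA$ when they agree on $\Gamma$, $\Sigma$ and $\mathsf{Sent}$; when $\opsem{\S}(\rid)=\interp(\stsem{\S}(\rid))$ and $\mathsf{Delivered}(\rid,\opsem{\G})=\stsem{\S}(\rid)$ for every $\rid\in\replicas$; and when the buffers are in the correspondence ``$(\rid,m)\in\opsem{\B}$ iff $m\notin\stsem{\S}(\rid)$ and $(\rid,H)\in\stsem{\B}$ for some $H\ni m$'', together with ``every element of $\stsem{\B}$ is of the form $(\rid,(m)^{\darrow\opsem{\G}})$ for some sent $m$''. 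This buffer condition is chosen at exactly the right coarseness: a single state-side merge of a downset can correspond to several op-side deliveries, which leaves ``used-up'' downsets sitting in $\stsem{\B}$ that are now merge-no-ops, and phrasing pendingness \emph{relative to the current replica state} absorbs precisely this slack.

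Then I would check that $\RR$ is a weak simulation of $\opleadsto$ by $\stleadsto\subG$ and that $\RR^{T}$ is a weak simulation of $\stleadsto\subG$ by $\opleadsto$, going case by case. In the op-to-state direction: \OpQuery matches \StQuery (equal return values by the $\interp$ invariant); \OpUpdate now matches a \emph{single} \StUpdBC step (same generated message since $\opsem{\S}(\rid)=\interp(\stsem{\S}(\rid))$, and the broadcast downset is maintained as above); and \OpDeliver of an enabled $m$ at $\rid$ matches \StDeliver of $(m)^{\darrow\opsem{\G}}$---enabledness forces all $m'\prec m$ already into $\mathsf{Delivered}(\rid,\opsem{\G})=\stsem{\S}(\rid)$, so the merge adds exactly $\{m\}$, and \eqref{interp-def} gives $\interp(\stsem{\S}(\rid)\cup\{m\})=\opemap(\interp(\stsem{\S}(\rid)),m)=\opsem{\S'(\rid)}$. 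In the state-to-op direction: \StQuery and \StUpdBC are matched by \OpQuery and \OpUpdate respectively, and \StDeliver of $(\rid,(m)^{\darrow\opsem{\G}})$ is matched either by the empty computation $\opconfigA\quietleadstostar\opconfigA$ when $(m)^{\darrow\opsem{\G}}\subseteq\stsem{\S}(\rid)$, or otherwise by delivering the set $U=(m)^{\darrow\opsem{\G}}\setminus\stsem{\S}(\rid)$ at $\opsem{\rid}$ in a $\prec$-respecting order---$U$ is a deliverable set in the sense of \cref{def:deliverable-set} by exactly the argument in the ``Proof of Claim'' inside \Cref{thm:main2}, since every still-pending causal ancestor of $m$ at $\rid$ sits in $\opsem{\B}$ and lies inside the downward-closed $(m)^{\darrow\opsem{\G}}$. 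In each case I would re-establish the four conjuncts of $\RR$; the only nonroutine checks are the delivered-set and $\interp$ equalities (handled above) and the buffer correspondence (where the slack-absorbing phrasing does its work). Since $\RR$ relates the initial configurations $\opconfig{\varepsilon}{\S^{0}}{\varnothing}$ and $\stconfig{\varepsilon}{\S^{0}}{\varnothing}$ and is a weak simulation in both directions, it is a weak bisimulation, giving $\opsem{\initstate}\approx\stsem{\initstate}$; the state-based-to-op-based emulation direction is handled by the symmetric construction and deferred to the appendix.

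The step I expect to be the main obstacle is pinning down the buffer correspondence so that it is genuinely an invariant in \emph{both} directions: a state-based merge of one downset is strictly coarser than one op-based delivery, so after matching such a merge by a burst of op-deliveries the buffers are no longer in literal bijection, and the relation must be stated (as ``pending up to current local state'') so that the leftover no-op downsets do not spoil it---yet it must still be tight enough that an op-based single delivery can be matched by a state-based merge, which in turn hinges essentially on reliable causal delivery (enabled $m$ $\Rightarrow$ all $m'\prec m$ already delivered at $\rid$) so that merging $(m)^{\darrow\opsem{\G}}$ really does add only $m$. The rest---query agreement, $\interp$ bookkeeping, and the $\mathsf{deliverable}$-set construction---is a minor repackaging of what already appears in the proofs of \Cref{thm:main1,thm:main2}.
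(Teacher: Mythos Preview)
Your approach is correct and, in the buffer invariant, genuinely more careful than the paper's. The paper's candidate bisimulation $\bowtie$ is $\RR_{1}$ with its last conjunct strengthened to the strict biconditional $(\rid,m)\in\opsem{\B}\iff(\rid,(m)^{\darrow\opsem{\G}})\in\stsem{\B}$; the appendix then writes out only the \StUpdBC case and defers \StDeliver to ``the same pattern as the proof of \cref{thm:main2}''. But that biconditional is \emph{not} preserved in the state-to-op \StDeliver case: when $\stsem{\rid}$ merges a downset $(m')^{\darrow\opsem{\G}}$ containing more than one message undelivered at $\opsem{\rid}$, the only matching op-side response is to deliver the whole set $U$, which strips $|U|>1$ entries from $\opsem{\B}$ while only the single downset $(m')^{\darrow\opsem{\G}}$ leaves $\stsem{\B}$; the leftover $(\rid,(m'')^{\darrow\opsem{\G}})$ for $m''\in U\setminus\{m'\}$ then falsify the iff. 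Your ``pending relative to the current local state'' phrasing is exactly the slack-absorbing weakening needed, and your diagnosis of this as the crux is on the mark.

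One point to tighten: in the op-to-state \OpDeliver case you match by \StDeliver of $(m)^{\darrow\opsem{\G}}$, but your stated buffer condition only furnishes \emph{some} $H\ni m$ in $\stsem{\B}$, not that particular downset. The gap closes once well-formedness is in the relation (as the paper assumes for $\RR_{1},\RR_{2}$): the downset $(m)^{\darrow\opsem{\G}}$ entered $\stsem{\B}$ when $m$ was generated and can leave only via \StDeliver at $\rid$, which would put $m$ into $\stsem{\S}(\rid)$ and contradict your conjunct $m\notin\stsem{\S}(\rid)$. Either make this derivation explicit, or simply add the one-way implication $(\rid,m)\in\opsem{\B}\Rightarrow(\rid,(m)^{\darrow\opsem{\G}})\in\stsem{\B}$ as a further conjunct of your $\RR$; it is preserved by all four cases and makes the \OpDeliver match immediate.
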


\subsection{Comparison with Data Refinement}\label{subsec:refinement}

Our approach is largely inspired by general theories of bisimulation and coinduction~\citep{milner-ccs,jacobs-book},
which place a premium on the behavioral equivalence of systems.
However, our approach concerns not two completely different systems, but rather a given system
and an `emulating' system, and so our methods also closely aligns with \emph{data refinement},
which also uses formal simulation to verify if two related implementations of a datatype satisfy
the same \emph{specification}~\cite{lynch1988,lamport-TLA,burckhardt-rdts-svo}.
In data refinement, one typically assumes an \emph{abstract} implementation that satisfies the specification, and then shows
that the concrete implementation is \emph{simulated} by the abstract implementation (\citet{lynch1988} give a good introduction).
Since we characterize systems by their behaviors, i.e., \emph{traces} (\cref{subsec:simulations-background}),
and simulation implies \emph{trace inclusion}, correctness is formalized as follows.

Let $\OO_{\mathtt{conc}}$, and $\OO_{\mathtt{abs}}$ be (resp.) concrete and abstract implementations of 
some object and let $\FF$ denote a \emph{correctness property}, seen as a subset of possible traces.
If $y^{\initstate}_{\mathtt{conc}}$ and $x^{\initstate}_{\mathtt{abs}}$ denote the (resp.) concrete and abstract initial states,
then we say $\OO_{\mathtt{conc}}$ \emph{refines} $\OO_{\mathtt{abs}}$ if $x^{\initstate}_{\mathtt{abs}}$ simulates $y^{\initstate}_{\mathtt{conc}}$.
In other words, we have the proof obligation,
\begin{gather}
  y^{\initstate}_{\mathtt{conc}} \simulatedby x^{\initstate}_{\mathtt{abs}} 
  \quad \textit{and therefore}
  \quad \trace(y^{\initstate}_{\mathtt{conc}}) \subseteq \trace(x^{\initstate}_{\mathtt{abs}}) \subseteq \FF.
  \label{data-refinement}
\end{gather}
Trace inclusion means the implementation $\OO_{\mathtt{conc}}$ is correct.
One of the original presentations of the above approach was given in \citet{lynch1988}.

\citep{burckhardt-rdts-svo}'s approach modifies the above idea to work for CRDT verification.
For one, they do not assume the existence of the abstract implementation
$\OO_{\mathtt{abs}}$, but rather begin with the concrete implementation $\OO_{\mathtt{conc}}$,
described in terms of a state space, sets of messages and operations, and transition
functions $\mathtt{do}$ (execute operations), $\mathtt{send}$ and $\mathtt{receive}$ (message handling).
To show $\OO_{\mathtt{conc}}$ satisfies the specification $\FF$,
they \emph{construct} a set of abstract executions satisfying $\FF$ (as opposed to assuming $\OO_{\mathtt{abs}}$ is given). 
This is done with an \emph{abstraction function} $\mathtt{abs}$ that sends every concrete execution $C$ of $\OO_{\mathtt{conc}}$
to an abstract execution $A$, which is essentially a trace $\mathtt{do}$ events, operations, return values, and some additional metadata.
Finally, to show that $A$ satisfies $\FF$, they use \emph{replication-aware simulations}. We direct the reader
to \citet{burckhardt-rdts-svo} for details.

If one squints, one can see the resemblance of \citeauthor{burckhardt-rdts-svo}'s approach and \eqref{data-refinement}
if instead of assuming the existence of $\OO_{\mathtt{abs}}$ and $x^{\initstate}_{\mathtt{abs}}$,
we \emph{generate it} by the abstraction map $\mathtt{abs}$,
e.g.,
\begin{gather}
  y^{\initstate}_{\mathtt{conc}} \simulatedby \mathtt{abs}( y^{\initstate}_{\mathtt{conc}})
  \quad \textit{and therefore}
  \quad \trace(y^{\initstate}_{\mathtt{conc}}) \subseteq \trace(\mathtt{abs}( y^{\initstate}_{\mathtt{conc}})) \subseteq \FF.
\end{gather}

It is here that our approach aligns closest, if one imagines the emulation map $\GG$ as playing the same role of the abstraction
function $\mathtt{abs}$. But there are subtle semantic differences between \citeauthor{burckhardt-rdts-svo}'s approach and ours.
For one, we do not assume a specification of correct behavior, because our work relates ``concrete executions'' of one object
$\OO_{\kappa}$ to the \emph{concrete executions} of another object $\OO_{\lambda}$ (as related by $\GG$).
Data refinement is a tool to prove correctness, whereas we are concerned with \emph{behavioral equivalence}.
In full generality, emulation is not required to preserve
behavior, or satisfy the same specification in the sense of data refinement\footnote{Indeed, consider a Nintendo 64 emulator running on a Windows PC. 
In this case, which is the concrete implementation, and which is the abstract
implementation? Should they both satisfy the same spec? We think data refinement does not capture the potentially complex relationship here.},
and it is not clear which object is the `concrete' implementation, and which is the `abstract' implementation,
since it just so happens that when it comes to CRDT emulation, behaviors \emph{are} expected to align.

In any case, the methods of \citeauthor{burckhardt-rdts-svo} (and data refinement more generally) can embed into our model.
Recall \cref{ex:transfer-sec}, where we transported via emulation a strong convergence property from one object to another.
If one models the strong convergence property as a \emph{specification} $\FF_{\mathtt{SC}}$,
and we view $\OO_{\host}$ as the `concrete' implementation, and $\OO_{\guest}$ as the `abstract' implementation,
then the emulation $\GG$ plays the role of the `abstraction function', since it 
completely determines a function $\mathtt{abs}_{\GG}$ which sends the `concrete' executions of $\OO_{\host}$
to the `abstract' executions of $\OO_{\guest}$. Since $\OO_{\guest}$ is assumed to satisfy $\FF_{\mathtt{SC}}$,
we obtain that $\OO_{\host}$ satisfies $\FF_{\mathtt{SC}}$ as well, in a manner similar to \citeauthor{burckhardt-rdts-svo}.

\def\tri{\triangleright}
\def\Gk{\G_{\kappa}}
\def\Gl{\G_{\lambda}}
\def\Sk{\S_{\kappa}}
\def\Sl{\S_{\lambda}}
\def\Bk{\B_{\kappa}}
\def\Bl{\B_{\lambda}}
\def\Gammak{\Gamma_{\kappa}}
\def\Gammal{\Gamma_{\lambda}}
\def\Sigmak{\Sigma_{\kappa}}
\def\Sigmal{\Sigma_{\lambda}}
\def\stepsk{\stepsto_{\kappa}}
\def\stepsl{\stepsto_{\lambda}}
\def\mstepsto{\stepsto^{\ast}}
\def\mstepsk{\mstepsto_{\kappa}}
\def\mstepsl{\mstepsto_{\lambda}}
\def\leadstok{\leadsto_{\kappa}}
\def\leadstol{\leadsto_{\lambda}}

\def\ruleone{
  \inference[%
      \textsf{[CStep]}
    ]{
      \CC \xleadsto{~\tau~} \mathrel{\vphantom{\to}_{\kappa}} \CC'
  }{
      \CC \rhd \state{\mu, p} \stepsto_{\kappa} \CC' \rhd \state{\mu, p}
    }
}

\def\ruletwo{
  \inference[
      \textsf{[Upd]}
    ]{
      r \in \replicas \quad \CC 
      \xleadsto{(\rid ,\upd[\op] , o)} {\! {}_{\kappa}} \CC'
  }{
      \CC \rhd \state{\mu ,\mathsf{upd}[\op]} \stepsto_{\kappa} \CC' \rhd \state{\mu ,\mathsf{skip}}
  }
}

\def\rulethree{
  \inference[%
    \textsf{[Qry]}
  ]{
      r \in \replicas & 
      \CC \xleadsto{(\rid , \qry[q] , \resp[v])} {\! {}_{\kappa}} \CC &
      \mu' = \mu[x \mapsto v]
  }{
      \CC \rhd \state{\mu , \mathsf{qry}[x](q)} \stepsto_{\kappa} \CC \rhd \state{\mu' ,\mathsf{skip}}
  }
}

\def\rulefour{
  \inference[%
    \mathsf{[\textsf{Comp}_1]}
    ]{
      \CC \rhd \state{\mu, p} \stepsto_{\kappa} \CC' \rhd  \state{\mu', p'}
  }{
      \CC  \rhd \state{\mu, p \fatsemi q} \stepsto_{\kappa} \CC'  \rhd \state{\mu' , p' \fatsemi q}
  }
}

\def\rulefive{
  \inference[%
    \mathsf{[\textsf{Comp}_2]}
    ]{
      \CC \rhd \state{\mu, p} \Downarrow_{\kappa}
  }{
      \CC \rhd \state{\mu, p \fatsemi q} \stepsto_{\kappa} \CC \rhd \state{\mu , q}
  }
}

\def\rulesix{
  \inference[%
    \textsf{[WDone]}
    ]{
      \eval{e}(\mu) = 0
  }{
    \CC \rhd \state{\mu , \mathsf{while}(e,p)} \Downarrow_{\kappa}
  }
}

\def\ruleseven{
  \inference[%
   \textsf{[WStep]}
  ]{
      \eval{e}(\mu) \neq 0
    }{
      \CC \rhd \state{\mu ,\mathsf{while}(e,p)} \stepsto_{\kappa} \CC \rhd \state{\mu , p \fatsemi \mathsf{while}(e,p)}
    }
}

\def\ruleeight{
  \inference[%
    \textsf{[Skip]}
    ]{
  }{
    \CC \rhd \state{\mu, \mathsf{skip}} \Downarrow_{\kappa}
  }
}

\def\rulenine{
  \inference[%
    \textsf{[Asn]}
    ]{
      \eval{e}(\mu) = v    &     \mu' = \mu[x \mapsto v]
  }{
      \CC \rhd \state{\mu , \mathsf{asn}[x](e)} \stepsto_{\kappa} \CC  \rhd \state{ \mu' , \mathsf{skip}}
  }
}

\section{A Representation-Independent Client Interface to CRDTs}
\label{sec:representation-independence}

In \cref{sec:emulation}, we view CRDTs as open systems,
in the sense that the client using a CRDT is an abstract entity
living outside the CRDT semantics, interfacing with them by
the labels, e.g., $\upd[\opsf]$, $\qry[q]$. In this section,
we `close' our system model by designing a small imperative
programming language of client programs, which maintains a store
that is modified by interactions with an underlying CRDT.
We use our results in \cref{sec:emulation}
to show how we can achieve a `contextual approximation' result: one can swap out an underlying host CRDT implementation
for its corresponding guest CRDT implementation, without seriously
affecting the results of the program.
We formalized our client language and proved our key theorem correct using Agda.\footnote{The code can be found at \url{https://zenodo.org/records/15866358}.}

\begin{assumptions}
  For the purposes of this section, we fix the following data:
  \begin{enumerate}
  \item A countably infinite set $\Var$ of program variables, and a set $\Expr$ of arithmetic expressions.
  \item The replica IDs $\replicas$, along with their operations $\Op$, and internal state space $S$.
  \item A query function $q : S \to \mathbb{N}$, and an evaluation function $\eval{-} : \Expr \to \mathbb{N}^{\Var} \to \mathbb{N}$
  \end{enumerate}
\end{assumptions}

The set of \emph{client programs} $\Prog$ is generated by the following grammar for
$x \in \Var$, $e \in \Expr$, $\op \in \Op$, $q \in Q$:
\begin{equation}
  \Prog \ni p,q \Coloneqq \mathsf{skip}
  \mid \mathsf{asn}[x](e) \mid \mathsf{while}(e,p) \mid p \fatsemi q
  \mid \mathsf{upd}[\op]
  \mid \mathsf{qry}[x](q)
\end{equation}

We think of $p \in \Prog$ as a client program running on top of a given CRDT system.
In $p$, the choice of replica that serves client requests
($\upd$ and $\qry$)
is made by external factors outside the client's control
(e.g., latency, availability, phases of the moon, etc.).
In that light, the client is served by a replica chosen in a non-deterministic
fashion.

\Cref{fig:eval-rules} gives the operational semantics of client programs.
To summarize, each rule in \cref{fig:eval-rules} is parameterized
by a configuration $\CC$ of the underlying CRDT system
(e.g., $\CC = \config{\Gamma}{\Sigma}{\beta}$, as in \cref{sec:coalg,sec:emulation}), and a 
variable store $\mu \in \mathbb{N}^{\Var}$. The configuration $\CC$
represents the \emph{execution environment} of the program state $\state{\mu, p}$.
We write $\CC \xleadsto{\alpha} {\! {}_{\kappa}} \CC'$ to denote a transition
of configurations by action $\alpha$, under CRDT system $\kappa$.
The transitions the CRDT system $\kappa$ may take are those defined in \cref{sec:coalg}.
On the other hand, client programs have two types of transitions,
namely
\[\CC \rhd \state{\mu, p} \stepsto_{\kappa} \CC' \rhd \state{\mu' , p'}
  \qquad \text{and} \qquad
  \CC \rhd \state{\mu, p} \Downarrow_{\kappa}.\]

The first arrow is read as \emph{on environment $\CC$ (and under
system $\kappa$), client program state $\state{\mu, p}$ progresses to 
$\state{\mu', p'}$ producing a new
environment $\CC'$}; the second arrow is read as \emph{on
environment $\CC$ (and under system $\kappa$), client program state $\state{\mu, p}$ terminates}.
On each computation step, the program
produces a new execution environment, reflecting changes in the internal
state of some replica $\rid$ or in the local store $\mu$ of the program state.
We also define the arrow $\Downarrow^{\ast}_{\kappa}$ so that
\[ \CC \rhd \state{\mu, p} \Downarrow^{\ast}_{\kappa} \iff
\exists \CC', \mu',  p'\, :\, 
( \CC \rhd \state{\mu, p} \mstepsk \CC' \rhd \state{\mu',p'}) 
\land (\CC' \rhd \state{\mu', p'} \Downarrow_{\kappa}).\]

\begin{figure}
  \scalebox{0.88}{$
    \begin{gathered}
      \ruleone \quad \ruletwo
      \\[1ex]
      \rulethree
      \\[1ex]
      \rulefour \quad \rulefive
      \\[1ex]
      \rulesix \quad \ruleseven
      \\[1ex]
      \ruleeight \quad \rulenine
    \end{gathered}
  $}
  \caption{Operational semantics of client programs.}
  \label{fig:eval-rules}
\end{figure}

We now define a notion of what it means for two (potentially different)
CRDT systems to ``approximate'' each other from the point of view of client programs.

\begin{definition}[CRDT Approximation]
\label{def:ctx-approx}
  Let $\kappa$ and $\lambda$ be two CRDT systems with resp.
  configurations $\CC_{\kappa}$ and $\CC_{\lambda}$. Let $\state{\mu_{1}, p_{1}}$
  and $\state{\mu_{2}, p_{2}}$
  be program states. We
  say $\CC_{\lambda} \rhd \state{\mu_1, p_1}$ \emph{approximates} 
  $\CC_{\kappa} \rhd \state{\mu_2, p_2}$ 
  (and we write $\CC_{\kappa} \rhd \state{\mu_1, p_1} \sqsubseteq \CC_{\lambda} \rhd \state{\mu_2, p_2}$ in that case)
  if we have
  \[
    \CC_{\kappa} \rhd \state{\mu_1, p_1} \Downarrow^{\ast}_{\kappa}
      \implies
    \CC_{\lambda} \rhd \state{\mu_2, p_2} \Downarrow^{\ast}_{\lambda}.
  \]
\end{definition}

\cref{def:ctx-approx} is essentially the coarsest form of comparison one would expect
from two program states executed in different CRDT environments. Indeed, it is coarser than at least weak simulation.

\begin{theorem}[Soundness]\label{thm:soundness}
  Let $\kappa$ and $\lambda$ be two arbitrary CRDT systems, and let
  $\CC_{\kappa}, \CC_{\lambda}$ be configurations of these respective systems. Then, for all program states $\state{\mu, p}$,
  \[ \CC_{\kappa} \wsimulatedby \CC_{\lambda} \implies 
  \CC_{\kappa} \rhd \state{\mu, p} \sqsubseteq \CC_{\lambda} \rhd \state{\mu, p}.\]
\end{theorem}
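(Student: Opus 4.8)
The plan is to lift the given weak simulation on CRDT configurations to a relation on client \emph{program states}, show that this lifted relation is again a weak simulation --- now for the client transition relation $\stepsto$ of \Cref{fig:eval-rules} --- and observe that it also preserves the termination judgement $\Downarrow$. Fix a weak simulation $R$ with $(\CC_{\kappa}, \CC_{\lambda}) \in R$ (such an $R$ exists since $\CC_{\kappa} \wsimulatedby \CC_{\lambda}$), and define a relation $\hatR$ on client program states by $(\CC \rhd \state{\mu, p}) \mathrel{\hatR} (\CC' \rhd \state{\mu, p}) \iff (\CC, \CC') \in R$; that is, $\hatR$-related program states carry the \emph{same} store $\mu$, the \emph{same} program $p$, and $R$-related execution environments. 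I will prove two claims. \textbf{(A)} If $(\CC \rhd \state{\mu, p}) \mathrel{\hatR} (\CC' \rhd \state{\mu, p})$ and $\CC \rhd \state{\mu, p} \stepsk \CC_{1} \rhd \state{\mu_{1}, p_{1}}$, then there is a $\CC_{1}'$ with $\CC' \rhd \state{\mu, p} \mstepsl \CC_{1}' \rhd \state{\mu_{1}, p_{1}}$ --- reaching \emph{syntactically the same} program-state component --- and $(\CC_{1} \rhd \state{\mu_{1}, p_{1}}) \mathrel{\hatR} (\CC_{1}' \rhd \state{\mu_{1}, p_{1}})$. \textbf{(B)} The judgement $\Downarrow$ holds of $\CC \rhd \state{\mu, p}$ exactly when $p = \mathsf{skip}$ or $p = \mathsf{while}(e, p')$ with $\eval{e}(\mu) = 0$, independently of $\CC$ and of which CRDT system indexes it: the only rules deriving $\Downarrow$ are \textsf{[Skip]} and \textsf{[WDone]}, and neither constrains $\CC$ or the CRDT transition relation. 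Given \textbf{(A)} and \textbf{(B)}, the theorem follows: unfold $\CC_{\kappa} \rhd \state{\mu, p} \Downarrow^{\ast}_{\kappa}$ into a finite run $\CC_{\kappa} \rhd \state{\mu, p} \mstepsk \CC' \rhd \state{\mu', p'}$ with $\CC' \rhd \state{\mu', p'} \Downarrow_{\kappa}$; iterate \textbf{(A)} along this run, using transitivity of $\mstepsl$ and starting from $(\CC_{\kappa} \rhd \state{\mu, p}) \mathrel{\hatR} (\CC_{\lambda} \rhd \state{\mu, p})$, to get $\CC_{\lambda}'$ with $\CC_{\lambda} \rhd \state{\mu, p} \mstepsl \CC_{\lambda}' \rhd \state{\mu', p'}$ and $(\CC' \rhd \state{\mu', p'}) \mathrel{\hatR} (\CC_{\lambda}' \rhd \state{\mu', p'})$; by \textbf{(B)} then $\CC_{\lambda}' \rhd \state{\mu', p'} \Downarrow_{\lambda}$, hence $\CC_{\lambda} \rhd \state{\mu, p} \Downarrow^{\ast}_{\lambda}$.

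The proof of \textbf{(A)} is by induction on the derivation of the client step, with one case per $\stepsto$-rule of \Cref{fig:eval-rules}. For \textsf{[CStep]} the underlying CRDT move is silent, and \Cref{def:weaksim} applied to $R$ with $\alpha = \tau$ yields a run of $\lambda$-side silent CRDT transitions from $\CC'$ to some $\CC_{1}'$ with $(\CC_{1}, \CC_{1}') \in R$, which we replay as that many \textsf{[CStep]} transitions (the program is untouched). For \textsf{[Upd]} and \textsf{[Qry]} the underlying move carries a visible label, $(\rid, \upd[\op], o)$ respectively $(\rid, \qry[q], \resp[v])$; \Cref{def:weaksim} then supplies a $\lambda$-side run consisting of possibly-empty silent stretches around a single step bearing the \emph{same} label, ending at some $\CC_{1}'$ with $(\CC_{1}, \CC_{1}') \in R$, and we replay it as \textsf{[CStep]} transitions bracketing one \textsf{[Upd]} (resp.\ \textsf{[Qry]}) transition --- for \textsf{[Qry]} the visible step is an environment self-loop, as that rule requires. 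Because weak simulation matches the label \emph{exactly}, in the \textsf{[Qry]} case the returned value $v$ is identical on both sides, so the induced store update $\mu[x \mapsto v]$ agrees; this is the one place where the full strength of \Cref{def:weaksim}, rather than mere trace inclusion, is used. Rules \textsf{[Asn]} and \textsf{[WStep]} leave the environment fixed and have premises about the store and $\eval{-}$ only (the same evaluation function for both systems), so they are replayed by firing the identical rule on the $\lambda$-side with the environment unchanged. Rule \textsf{[Comp}$_2$\textsf{]} fires because $\CC \rhd \state{\mu, p_{1}} \Downarrow_{\kappa}$; by \textbf{(B)} also $\CC' \rhd \state{\mu, p_{1}} \Downarrow_{\lambda}$, so it fires on the $\lambda$-side with the environment unchanged. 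Finally, \textsf{[Comp}$_1$\textsf{]} is handled by the induction hypothesis on the sub-derivation for $p_{1}$, together with a routine congruence lemma --- $\mstepsl$ is closed under the context $[\,\cdot\,] \fatsemi p_{2}$, proved by applying \textsf{[Comp}$_1$\textsf{]} to each step of a run --- which turns the matching run for $p_{1}$ into one for $p_{1} \fatsemi p_{2}$.

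There is no deep obstacle: claim \textbf{(B)} is immediate from inspecting two rules, and the rest is bookkeeping in the style of a standard simulation-closure argument. The two points that require care are (i) keeping the $\lambda$-run \emph{synchronized with the $\kappa$-run in its program-state component} at every stage, which is exactly what $\hatR$ enforces and what makes the final transfer of $\Downarrow$ go through; and (ii) the \textsf{[Qry]} case, which relies on weak simulation preserving labels (and hence returned query values) so that client stores never diverge between the two executions --- this is precisely why the theorem is stated with $\wsimulatedby$ rather than with trace inclusion. Since $\sqsubseteq$ is coarser than weak similarity of the client LTS (as remarked after \Cref{def:ctx-approx}), $\hatR$ in fact witnesses $\CC_{\kappa} \rhd \state{\mu, p} \wsimulatedby \CC_{\lambda} \rhd \state{\mu, p}$ at the level of client programs, and \Cref{def:ctx-approx} follows a fortiori. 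This is the argument we have mechanized in Agda.
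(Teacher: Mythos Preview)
Your proof is correct and follows the natural approach: lift the weak simulation $R$ on CRDT configurations to a relation $\hatR$ on client program states by insisting the store and program coincide, prove by rule induction that $\hatR$ is preserved along $\stepsto_{\kappa}$ (your claim \textbf{(A)}), observe that $\Downarrow$ depends only on $(\mu,p)$ (your claim \textbf{(B)}), and iterate. The paper does not spell out a proof in the text---it only states that the result was mechanized in Agda---so there is no written argument to compare against; your decomposition is the obvious one and is almost certainly what the formalization does.

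Two minor remarks. First, your aside that the theorem \emph{needs} weak simulation rather than weak trace inclusion slightly overstates the case: because the client's store evolution is a deterministic function of the sequence of visible CRDT labels it observes, weak trace inclusion of the CRDT layer would in fact suffice to establish $\sqsubseteq$, though the argument then requires first extracting the full terminating trace and replaying it, which is less compositional than your step-by-step induction. The simulation formulation is what makes your structural induction on derivations go through cleanly, so it is the right hypothesis for \emph{this} proof even if not the weakest possible one. Second, you correctly flag the oddity that \textsf{[Qry]} is written with a self-loop on the environment; your handling is consistent with the rule as stated, and the mismatch with the configuration-level query rules of \Cref{sec:coalg} (which append to $\Gamma$) is a presentational inconsistency in the paper rather than a gap in your argument.
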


This key theorem tells us that if a configuration of a CRDT system $\kappa$ weakly simulates a configuration of a CRDT system $\lambda$, then a given client program running in the execution environment of the former is an approximation of that client program running in the execution environment of the latter.

We formalized the client program semantics shown in \cref{fig:eval-rules} and proved \cref{thm:soundness} correct in Agda.
Since our definition of emulation (\cref{def:emulation}) means that there exist a pair of emulations $\GG_1$, $\GG_2$
from resp. op-based to state-based and state-based to op-based CRDTs (\cref{sec:emulation}),
we know that CRDT emulation is sound with respect to \cref{def:ctx-approx}.
We summarize as a corollary to \cref{thm:soundness}.

\begin{corollary}[CRDT Emulation Is Sound]\label{corr:final}
  Let $\opsem{\OO}_1$ and $\stsem{\OO}_2$ resp. be an op-based CRDT and state-based CRDT with
  resp. transition semantics $\opleadsto_1$, $\stleadsto_2$ and
  resp. initial configurations $\opsem{\initstate}_1$ and $\stsem{\initstate}_2$.
  
  Then there are a pair of emulations $\GG_{1}$ and $\GG_2$,
  and a pair of resp. state-based CRDT $\stsem{\OO}'_{1}$ and op-based CRDT $\opsem{\OO}'_{2}$
  with resp. transition semantics $\stleadsto_{\GG_1}$, $\opleadsto_{\GG_2}$,
  and resp. initial configurations $\stsem{\initstate}_{\GG_1}$, $\opsem{\initstate}_{\GG_2}$,
  such that:
  \[
    \stsem{\OO}'_{1} = \GG_{1}(\opsem{\OO}_1) \textit{ and } \opsem{\initstate}_1 \wsimulatedby \wsimulates \stsem{\initstate}_{\GG_1}
    \quad \textit{and} \quad
    \opsem{\OO}'_{2} = \GG_2(\stsem{\OO}_2) \textit{ and } \stsem{\initstate}_2 \wsimulatedby \wsimulates \opsem{\initstate}_{\GG_2}.
  \]
  Therefore, for all program states $\state{\mu, p}$, we have
  \begin{enumerate}
    \item $\opsem{\initstate}_1 \rhd \state{\mu, p} \sqsubseteq \stsem{\initstate}_{\GG_1} \rhd \state{\mu, p}$
          and
          $ \stsem{\initstate}_{\GG_1} \rhd \state{\mu, p} \sqsubseteq  \opsem{\initstate}_1 \rhd \state{\mu, p}$;
    \item  $\stsem{\initstate}_2 \rhd \state{\mu, p} \sqsubseteq  \opsem{\initstate}_{\GG_2} \rhd \state{\mu, p}$ and
          $ \opsem{\initstate}_{\GG_2} \rhd \state{\mu, p} \sqsubseteq   \stsem{\initstate}_2 \rhd \state{\mu, p}$.
  \end{enumerate}
\end{corollary}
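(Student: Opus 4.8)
The plan is to obtain this corollary by assembling pieces that are already in place: the soundness theorem (\Cref{thm:soundness}) and the emulation results of \Cref{sec:emulation}. No new machinery is needed; the proof is essentially a matter of instantiating \Cref{thm:soundness} in the right places and checking that the hypotheses line up.

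First I would establish the first displayed line of the statement, i.e.\ the existence of the emulations $\GG_1,\GG_2$ and the weak similarities of the initial configurations. For $\GG_1$, take $\GG_1$ to be the op-to-state mapping $\GG$ of \Cref{sec:op-to-state-emulation}, so that $\stsem{\OO}'_1 := \GG_1(\opsem{\OO}_1)$ is the state-based guest object of \Cref{fig:state-based-guest}; \Cref{thm:main1,thm:main2} then supply the weak simulations $\RR_1$ (guest simulates host) and $\RR_2$ (host simulates guest) relating $\opsem{\initstate}_1$ and $\stsem{\initstate}_{\GG_1}$, which by \Cref{def:similarity} is exactly $\opsem{\initstate}_1 \wsimulatedby \wsimulates \stsem{\initstate}_{\GG_1}$, so $\GG_1$ is a CRDT emulation in the sense of \Cref{def:emulation}. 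For $\GG_2$, take the state-to-op mapping and invoke the analogous pair of weak simulations from \Cref{appendix:other-emulation}, giving $\opsem{\OO}'_2 := \GG_2(\stsem{\OO}_2)$ together with $\stsem{\initstate}_2 \wsimulatedby \wsimulates \opsem{\initstate}_{\GG_2}$. Next I would expand each $\wsimulatedby \wsimulates$ into its two constituent one-way weak simulations and apply \Cref{thm:soundness} to each. Since that theorem is stated for \emph{arbitrary} CRDT systems $\kappa,\lambda$, I may instantiate it freely with op-based and state-based systems. From $\opsem{\initstate}_1 \wsimulatedby \stsem{\initstate}_{\GG_1}$ and \Cref{thm:soundness} (taking $\CC_\kappa = \opsem{\initstate}_1$, $\CC_\lambda = \stsem{\initstate}_{\GG_1}$) I get $\opsem{\initstate}_1 \rhd \state{\mu,p} \sqsubseteq \stsem{\initstate}_{\GG_1} \rhd \state{\mu,p}$ for every program state $\state{\mu,p}$; from the reverse simulation $\stsem{\initstate}_{\GG_1} \wsimulatedby \opsem{\initstate}_1$ I symmetrically get $\stsem{\initstate}_{\GG_1} \rhd \state{\mu,p} \sqsubseteq \opsem{\initstate}_1 \rhd \state{\mu,p}$, which is item~(1). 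Item~(2) is obtained identically from $\stsem{\initstate}_2 \wsimulatedby \wsimulates \opsem{\initstate}_{\GG_2}$.

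There is no genuine obstacle here; the only points worth a sentence of care are (i) that \Cref{def:emulation} was deliberately phrased with mutual weak simulation rather than weak bisimulation, which is precisely what forces us to invoke \Cref{thm:soundness} twice, once in each direction, rather than appealing to a single symmetric relation (and, per \Cref{ex:non-bisim}, a bisimulation is not available in general); and (ii) that the client-program transition relation $\stepsto_\kappa$ depends on the underlying CRDT only through its configuration-level transition relation $\leadsto_\kappa$, so \Cref{thm:soundness} applies uniformly whether the backend is op-based or state-based. Everything else is bookkeeping, and since \Cref{thm:soundness} is mechanized in Agda, so is this corollary once the emulation witnesses from \Cref{sec:emulation} are supplied.
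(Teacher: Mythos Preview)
Your proposal is correct and matches the paper's approach: the corollary is stated immediately after \Cref{thm:soundness} as a summary, with the paper merely noting that the emulations $\GG_1,\GG_2$ of \Cref{sec:emulation} (and \Cref{appendix:other-emulation}) supply the required weak similarities, whence \Cref{thm:soundness} yields the approximations. You have spelled out exactly this chain of instantiations, including the point that soundness must be invoked once per direction of each $\wsimulatedby\wsimulates$.
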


\cref{corr:final} concretely shows how our transition semantics of CRDTs in \cref{sec:coalg}
interfaces with the language we developed in this section. In words, it says that
for all client programs $p$ and all variable stores $\mu$, one can execute the
program state $\state{\mu, p}$ in an environment with either an op-based CRDT, or a state-based CRDT,
and so long as those CRDTs are related by an emulation $\GG$ (and we begin at the initial CRDT configurations),
termination behavior of $\state{\mu, p}$ in one execution environment implies termination behavior in the other,
and vice versa.

While cotermination is a fairly coarse form of equivalence, it is in some ways, stronger than one would expect.
For example, notice that in the \textsf{[Qry]} rule, interacting with the CRDT environment can change the variable store $\mu$.
Since programs $p$ are capable of infinite loops in our language, what \cref{corr:final} says is that,
if $p$ can terminate with $\OO$, then it still can with the $\GG(\OO)$ and vice-versa,
even if $p$ contains an arbitrarily large number of interactions with the CRDT via queries,
upon which termination might depend.
Of course, there is an element of non-determinism here: executing $\state{\mu, p}$ in one CRDT
environment and observing some behavior $b$ does not \emph{guarantee} that we will get $b$
in a different CRDT environment, only that $b$ is \emph{possible}, due to the inherent non-determinism of distributed systems.

\section{Related Work}
\label{sec:related}

As we discussed in \Cref{sec:introduction}, most existing research on CRDT verification has focused on verifying strong convergence 
and other safety\footnote{Notably, \citep{timany-trillium} also consider verification of \emph{liveness} properties, 
such as eventual delivery of messages.} 
properties for either  state-based~\cite{zeller-state-based-verification,gadduci-crdt-semantics,nair-state-based-verification,timany-trillium,nieto-aneris-state-based,laddad-crdt-synthesis} 
or op-based~\cite{gomes-verifying-sec,nagar-jagannathan-automated-crdt-verification,liu-lh-crdts,liang-feng-acc,nieto-aneris-op-based} CRDTs. 
One exception is the work of \citet{burckhardt-rdts-svo}, who give a framework for axiomatic specification and verification of 
both op-based and state-based CRDTs.

To our knowledge, our work is the first to make precise the sense in which state-based and op-based CRDTs emulate each other. 
However, the use of (bi)simulation relations in CRDT verification is not new.  For instance, \citet{burckhardt-rdts-svo}'s 
framework (discussed earlier in \Cref{subsec:refinement}) is based on \emph{replication-aware simulations}, and \citet{nair-state-based-verification} use a strong bisimulation 
argument to justify the use of simpler, easier-to-implement proof rules in an automated verification tool for state-based CRDTs, 
using the more complicated semantics as a reference implementation.

\citet{nieto-aneris-state-based} astutely observe that whether a CRDT is op-based or 
state-based is an implementation detail that should be hidden from clients.  They show that for a specific CRDT, the \emph{pn-counter},
a particular client program cannot distinguish between handwritten (rather than produced by emulation algorithms) op-based and state-based implementations of the CRDT. 
Although \citeauthor{nieto-aneris-state-based} do not consider the use of emulation algorithms, their work inspired our representation-independence result.  Rather than considering a specific pair of CRDT implementations 
and specific client program, though, we make precise the sense in which \emph{general} CRDT emulation 
algorithms result in the same observable behavior for the original and the emulating object.

Mechanized verification of CRDTs, both interactive~\cite{zeller-state-based-verification,gomes-verifying-sec,timany-trillium,nieto-aneris-op-based,nieto-aneris-state-based} 
and automated~\cite{nair-state-based-verification,nagar-jagannathan-automated-crdt-verification,deporre-verifx,laddad-crdt-synthesis}, is an active 
area of research with many exciting developments. In our work, we do not aim to provide user-ready verified implementations; 
our goal instead is to put existing efforts on a firm theoretical foundation, by making precise the sense in which results for 
op-based CRDTs can be said to transfer to state-based CRDTs and vice versa.


\section{Conclusion and Future Work}
\label{sec:conclusion}

In this paper, we have made precise the sense in which op-based and state-based CRDT systems emulate each others' behavior, using formal  simulation techniques. We have characterized which properties are preserved by CRDT emulation:
properties on weak traces. We have shown that CRDT emulation is, in some sense, sensitive to the underlying network semantics.
Nevertheless, we show that if two CRDT systems have starting configurations
that are related by a pair of simulations, then a client that interacts with either system cannot distinguish between the original system and the emulator, so long as the client interacts through an interface that does not expose details of the network semantics. We formalized and
proved this result correct in Agda, closing a long-standing gap in the CRDT literature.
Our results give researchers working with CRDTs a rigorous way to think about equivalence of state-based and op-based CRDTs,
both abstractly and theoretically in terms of simulation relations, and more concretely, in terms of interfaces to CRDTs.

\section*{Acknowledgements}\label{sec:acks}

We thank the anonymous reviewers of ICFP '25. Tobias Kapp\'e gave thoughtful feedback on an early draft. 
Stelios Tsampas acknowledges funding by the Deutsche Forschungsgemeinschaft (DFG, German Research Foundation), 
project number 527481841. 
This material is based upon work supported by the National Science Foundation under Grant No. 2145367. 
Any opinions, findings, and conclusions or recommendations expressed in this material 
are those of the author(s) and do not necessarily reflect the views of the National Science Foundation.

\bibliography{references}


\begin{thebibliography}{34}


\ifx \showCODEN    \undefined \def \showCODEN     #1{\unskip}     \fi
\ifx \showDOI      \undefined \def \showDOI       #1{#1}\fi
\ifx \showISBNx    \undefined \def \showISBNx     #1{\unskip}     \fi
\ifx \showISBNxiii \undefined \def \showISBNxiii  #1{\unskip}     \fi
\ifx \showISSN     \undefined \def \showISSN      #1{\unskip}     \fi
\ifx \showLCCN     \undefined \def \showLCCN      #1{\unskip}     \fi
\ifx \shownote     \undefined \def \shownote      #1{#1}          \fi
\ifx \showarticletitle \undefined \def \showarticletitle #1{#1}   \fi
\ifx \showURL      \undefined \def \showURL       {\relax}        \fi
\providecommand\bibfield[2]{#2}
\providecommand\bibinfo[2]{#2}
\providecommand\natexlab[1]{#1}
\providecommand\showeprint[2][]{arXiv:#2}

\bibitem[Almeida et~al\mbox{.}(2015)]%
        {almeida-delta-state-crdts}
\bibfield{author}{\bibinfo{person}{Paulo~S{\'e}rgio Almeida},
  \bibinfo{person}{Ali Shoker}, {and} \bibinfo{person}{Carlos Baquero}.}
  \bibinfo{year}{2015}\natexlab{}.
\newblock \showarticletitle{Efficient State-Based CRDTs by Delta-Mutation}. In
  \bibinfo{booktitle}{\emph{Networked Systems}},
  \bibfield{editor}{\bibinfo{person}{Ahmed Bouajjani} {and}
  \bibinfo{person}{Hugues Fauconnier}} (Eds.). \bibinfo{publisher}{Springer
  International Publishing}, \bibinfo{address}{Cham}, \bibinfo{pages}{62--76}.
\newblock
\showISBNx{978-3-319-26850-7}


\bibitem[Baquero et~al\mbox{.}(2014)]%
        {pure-op-based}
\bibfield{author}{\bibinfo{person}{Carlos Baquero},
  \bibinfo{person}{Paulo~S{\'e}rgio Almeida}, {and} \bibinfo{person}{Ali
  Shoker}.} \bibinfo{year}{2014}\natexlab{}.
\newblock \showarticletitle{Making Operation-Based CRDTs Operation-Based}. In
  \bibinfo{booktitle}{\emph{Distributed Applications and Interoperable
  Systems}}, \bibfield{editor}{\bibinfo{person}{Kostas Magoutis} {and}
  \bibinfo{person}{Peter Pietzuch}} (Eds.). \bibinfo{publisher}{Springer Berlin
  Heidelberg}, \bibinfo{address}{Berlin, Heidelberg},
  \bibinfo{pages}{126--140}.
\newblock
\showISBNx{978-3-662-43352-2}


\bibitem[Birman et~al\mbox{.}(1991)]%
        {birman-lightweight-cbcast}
\bibfield{author}{\bibinfo{person}{Kenneth Birman}, \bibinfo{person}{Andr\'{e}
  Schiper}, {and} \bibinfo{person}{Pat Stephenson}.}
  \bibinfo{year}{1991}\natexlab{}.
\newblock \showarticletitle{Lightweight Causal and Atomic Group Multicast}.
\newblock \bibinfo{journal}{\emph{ACM Trans. Comput. Syst.}}
  \bibinfo{volume}{9}, \bibinfo{number}{3} (\bibinfo{date}{Aug.}
  \bibinfo{year}{1991}), \bibinfo{pages}{272–314}.
\newblock
\showISSN{0734-2071}
\urldef\tempurl%
\url{https://doi.org/10.1145/128738.128742}
\showDOI{\tempurl}


\bibitem[Birman and Joseph(1987)]%
        {birman-reliable}
\bibfield{author}{\bibinfo{person}{Kenneth~P. Birman} {and}
  \bibinfo{person}{Thomas~A. Joseph}.} \bibinfo{year}{1987}\natexlab{}.
\newblock \showarticletitle{Reliable Communication in the Presence of
  Failures}.
\newblock \bibinfo{journal}{\emph{ACM Trans. Comput. Syst.}}
  \bibinfo{volume}{5}, \bibinfo{number}{1} (\bibinfo{date}{Jan.}
  \bibinfo{year}{1987}), \bibinfo{pages}{47–76}.
\newblock
\showISSN{0734-2071}
\urldef\tempurl%
\url{https://doi.org/10.1145/7351.7478}
\showDOI{\tempurl}


\bibitem[Burckhardt et~al\mbox{.}(2014)]%
        {burckhardt-rdts-svo}
\bibfield{author}{\bibinfo{person}{Sebastian Burckhardt},
  \bibinfo{person}{Alexey Gotsman}, \bibinfo{person}{Hongseok Yang}, {and}
  \bibinfo{person}{Marek Zawirski}.} \bibinfo{year}{2014}\natexlab{}.
\newblock \showarticletitle{Replicated Data Types: Specification, Verification,
  Optimality}. In \bibinfo{booktitle}{\emph{Proceedings of the 41st ACM
  SIGPLAN-SIGACT Symposium on Principles of Programming Languages}} (San Diego,
  California, USA) \emph{(\bibinfo{series}{POPL '14})}.
  \bibinfo{publisher}{Association for Computing Machinery},
  \bibinfo{address}{New York, NY, USA}, \bibinfo{pages}{271–284}.
\newblock
\showISBNx{9781450325448}
\urldef\tempurl%
\url{https://doi.org/10.1145/2535838.2535848}
\showDOI{\tempurl}


\bibitem[De~Porre et~al\mbox{.}(2023)]%
        {deporre-verifx}
\bibfield{author}{\bibinfo{person}{Kevin De~Porre}, \bibinfo{person}{Carla
  Ferreira}, {and} \bibinfo{person}{Elisa Gonzalez~Boix}.}
  \bibinfo{year}{2023}\natexlab{}.
\newblock \showarticletitle{{VeriFx: Correct Replicated Data Types for the
  Masses}}. In \bibinfo{booktitle}{\emph{37th European Conference on
  Object-Oriented Programming (ECOOP 2023)}} \emph{(\bibinfo{series}{Leibniz
  International Proceedings in Informatics (LIPIcs)},
  Vol.~\bibinfo{volume}{263})}, \bibfield{editor}{\bibinfo{person}{Karim Ali}
  {and} \bibinfo{person}{Guido Salvaneschi}} (Eds.).
  \bibinfo{publisher}{Schloss Dagstuhl -- Leibniz-Zentrum f{\"u}r Informatik},
  \bibinfo{address}{Dagstuhl, Germany}, \bibinfo{pages}{9:1--9:45}.
\newblock
\showISBNx{978-3-95977-281-5}
\showISSN{1868-8969}
\urldef\tempurl%
\url{https://doi.org/10.4230/LIPIcs.ECOOP.2023.9}
\showDOI{\tempurl}


\bibitem[Fidge(1988)]%
        {fidge1988timestamps}
\bibfield{author}{\bibinfo{person}{C.~J. Fidge}.}
  \bibinfo{year}{1988}\natexlab{}.
\newblock \showarticletitle{Timestamps in message-passing systems that preserve
  the partial ordering}.
\newblock \bibinfo{journal}{\emph{Proceedings of the 11th Australian Computer
  Science Conference}} \bibinfo{volume}{10}, \bibinfo{number}{1}
  (\bibinfo{year}{1988}), \bibinfo{pages}{56–66}.
\newblock
\urldef\tempurl%
\url{http://sky.scitech.qut.edu.au/~fidgec/Publications/fidge88a.pdf}
\showURL{%
\tempurl}


\bibitem[Gadducci et~al\mbox{.}(2018)]%
        {gadduci-crdt-semantics}
\bibfield{author}{\bibinfo{person}{Fabio Gadducci}, \bibinfo{person}{Hernán
  Melgratti}, {and} \bibinfo{person}{Christian Roldán}.}
  \bibinfo{year}{2018}\natexlab{}.
\newblock \showarticletitle{On the semantics and implementation of replicated
  data types}.
\newblock \bibinfo{journal}{\emph{Science of Computer Programming}}
  \bibinfo{volume}{167} (\bibinfo{year}{2018}), \bibinfo{pages}{91--113}.
\newblock
\showISSN{0167-6423}
\urldef\tempurl%
\url{https://doi.org/10.1016/j.scico.2018.06.003}
\showDOI{\tempurl}


\bibitem[Gilbert and Lynch(2002)]%
        {gilbert-lynch-cap}
\bibfield{author}{\bibinfo{person}{Seth Gilbert} {and} \bibinfo{person}{Nancy
  Lynch}.} \bibinfo{year}{2002}\natexlab{}.
\newblock \showarticletitle{Brewer's Conjecture and the Feasibility of
  Consistent, Available, Partition-Tolerant Web Services}.
\newblock \bibinfo{journal}{\emph{SIGACT News}} \bibinfo{volume}{33},
  \bibinfo{number}{2} (\bibinfo{date}{jun} \bibinfo{year}{2002}),
  \bibinfo{pages}{51–59}.
\newblock
\showISSN{0163-5700}
\urldef\tempurl%
\url{https://doi.org/10.1145/564585.564601}
\showDOI{\tempurl}


\bibitem[Gilbert and Lynch(2012)]%
        {gilbert-lynch-cap-perspectives}
\bibfield{author}{\bibinfo{person}{Seth Gilbert} {and} \bibinfo{person}{Nancy
  Lynch}.} \bibinfo{year}{2012}\natexlab{}.
\newblock \showarticletitle{Perspectives on the CAP Theorem}.
\newblock \bibinfo{journal}{\emph{Computer}} \bibinfo{volume}{45},
  \bibinfo{number}{2} (\bibinfo{year}{2012}), \bibinfo{pages}{30--36}.
\newblock
\urldef\tempurl%
\url{https://doi.org/10.1109/MC.2011.389}
\showDOI{\tempurl}


\bibitem[Gomes et~al\mbox{.}(2017)]%
        {gomes-verifying-sec}
\bibfield{author}{\bibinfo{person}{Victor B.~F. Gomes}, \bibinfo{person}{Martin
  Kleppmann}, \bibinfo{person}{Dominic~P. Mulligan}, {and}
  \bibinfo{person}{Alastair~R. Beresford}.} \bibinfo{year}{2017}\natexlab{}.
\newblock \showarticletitle{Verifying Strong Eventual Consistency in
  Distributed Systems}.
\newblock \bibinfo{journal}{\emph{Proc. ACM Program. Lang.}}
  \bibinfo{volume}{1}, \bibinfo{number}{OOPSLA}, Article
  \bibinfo{articleno}{109} (\bibinfo{date}{oct} \bibinfo{year}{2017}),
  \bibinfo{numpages}{28}~pages.
\newblock
\urldef\tempurl%
\url{https://doi.org/10.1145/3133933}
\showDOI{\tempurl}


\bibitem[Herlihy and Wing(1990)]%
        {herlihy-wing-linearizability}
\bibfield{author}{\bibinfo{person}{Maurice~P. Herlihy} {and}
  \bibinfo{person}{Jeannette~M. Wing}.} \bibinfo{year}{1990}\natexlab{}.
\newblock \showarticletitle{Linearizability: A Correctness Condition for
  Concurrent Objects}.
\newblock \bibinfo{journal}{\emph{ACM Trans. Program. Lang. Syst.}}
  \bibinfo{volume}{12}, \bibinfo{number}{3} (\bibinfo{date}{jul}
  \bibinfo{year}{1990}), \bibinfo{pages}{463–492}.
\newblock
\showISSN{0164-0925}
\urldef\tempurl%
\url{https://doi.org/10.1145/78969.78972}
\showDOI{\tempurl}


\bibitem[Jacobs(2016)]%
        {jacobs-book}
\bibfield{author}{\bibinfo{person}{Bart Jacobs}.}
  \bibinfo{year}{2016}\natexlab{}.
\newblock \bibinfo{booktitle}{\emph{Introduction to Coalgebra: Towards
  Mathematics of States and Observation}}. \bibinfo{series}{Cambridge Tracts in
  Theoretical Computer Science}, Vol.~\bibinfo{volume}{59}.
\newblock \bibinfo{publisher}{Cambridge University Press}.
\newblock
\showISBNx{9781316823187}
\urldef\tempurl%
\url{https://doi.org/10.1017/CBO9781316823187}
\showDOI{\tempurl}


\bibitem[Laddad et~al\mbox{.}(2022)]%
        {laddad-crdt-synthesis}
\bibfield{author}{\bibinfo{person}{Shadaj Laddad}, \bibinfo{person}{Conor
  Power}, \bibinfo{person}{Mae Milano}, \bibinfo{person}{Alvin Cheung}, {and}
  \bibinfo{person}{Joseph~M. Hellerstein}.} \bibinfo{year}{2022}\natexlab{}.
\newblock \showarticletitle{Katara: Synthesizing CRDTs with Verified Lifting}.
\newblock \bibinfo{journal}{\emph{Proc. ACM Program. Lang.}}
  \bibinfo{volume}{6}, \bibinfo{number}{OOPSLA2}, Article
  \bibinfo{articleno}{173} (\bibinfo{date}{oct} \bibinfo{year}{2022}),
  \bibinfo{numpages}{29}~pages.
\newblock
\urldef\tempurl%
\url{https://doi.org/10.1145/3563336}
\showDOI{\tempurl}


\bibitem[Lamport(1978)]%
        {lamport-clocks}
\bibfield{author}{\bibinfo{person}{Leslie Lamport}.}
  \bibinfo{year}{1978}\natexlab{}.
\newblock \showarticletitle{Time, Clocks, and the Ordering of Events in a
  Distributed System}.
\newblock \bibinfo{journal}{\emph{Commun. ACM}} \bibinfo{volume}{21},
  \bibinfo{number}{7} (\bibinfo{date}{July} \bibinfo{year}{1978}),
  \bibinfo{pages}{558--565}.
\newblock
\showISSN{0001-0782}
\urldef\tempurl%
\url{https://doi.org/10.1145/359545.359563}
\showDOI{\tempurl}


\bibitem[Lamport(1994)]%
        {lamport-TLA}
\bibfield{author}{\bibinfo{person}{Leslie Lamport}.}
  \bibinfo{year}{1994}\natexlab{}.
\newblock \showarticletitle{The temporal logic of actions}.
\newblock \bibinfo{journal}{\emph{ACM Trans. Program. Lang. Syst.}}
  \bibinfo{volume}{16}, \bibinfo{number}{3} (\bibinfo{date}{May}
  \bibinfo{year}{1994}), \bibinfo{pages}{872–923}.
\newblock
\showISSN{0164-0925}
\urldef\tempurl%
\url{https://doi.org/10.1145/177492.177726}
\showDOI{\tempurl}


\bibitem[Liang and Feng(2021)]%
        {liang-feng-acc}
\bibfield{author}{\bibinfo{person}{Hongjin Liang} {and} \bibinfo{person}{Xinyu
  Feng}.} \bibinfo{year}{2021}\natexlab{}.
\newblock \showarticletitle{Abstraction for Conflict-Free Replicated Data
  Types}. In \bibinfo{booktitle}{\emph{Proceedings of the 42nd ACM SIGPLAN
  International Conference on Programming Language Design and Implementation}}
  (Virtual, Canada) \emph{(\bibinfo{series}{PLDI 2021})}.
  \bibinfo{publisher}{Association for Computing Machinery},
  \bibinfo{address}{New York, NY, USA}, \bibinfo{pages}{636–650}.
\newblock
\showISBNx{9781450383912}
\urldef\tempurl%
\url{https://doi.org/10.1145/3453483.3454067}
\showDOI{\tempurl}


\bibitem[Liu et~al\mbox{.}(2020)]%
        {liu-lh-crdts}
\bibfield{author}{\bibinfo{person}{Yiyun Liu}, \bibinfo{person}{James Parker},
  \bibinfo{person}{Patrick Redmond}, \bibinfo{person}{Lindsey Kuper},
  \bibinfo{person}{Michael Hicks}, {and} \bibinfo{person}{Niki Vazou}.}
  \bibinfo{year}{2020}\natexlab{}.
\newblock \showarticletitle{Verifying Replicated Data Types with Typeclass
  Refinements in Liquid Haskell}.
\newblock \bibinfo{journal}{\emph{Proc. ACM Program. Lang.}}
  \bibinfo{volume}{4}, \bibinfo{number}{OOPSLA}, Article
  \bibinfo{articleno}{216} (\bibinfo{date}{nov} \bibinfo{year}{2020}),
  \bibinfo{numpages}{30}~pages.
\newblock
\urldef\tempurl%
\url{https://doi.org/10.1145/3428284}
\showDOI{\tempurl}


\bibitem[Lynch and Tuttle(1988)]%
        {lynch1988}
\bibfield{author}{\bibinfo{person}{Nancy~A Lynch} {and} \bibinfo{person}{Mark~R
  Tuttle}.} \bibinfo{year}{1988}\natexlab{}.
\newblock \bibinfo{booktitle}{\emph{An introduction to input/output automata}}.
\newblock


\bibitem[Mattern(2002)]%
        {mattern-vector-time}
\bibfield{author}{\bibinfo{person}{Friedemann Mattern}.}
  \bibinfo{year}{2002}\natexlab{}.
\newblock \showarticletitle{Virtual Time and Global States of Distributed
  Systems}.
\newblock
\urldef\tempurl%
\url{https://api.semanticscholar.org/CorpusID:7517210}
\showURL{%
\tempurl}


\bibitem[Milner(1982)]%
        {milner-ccs}
\bibfield{author}{\bibinfo{person}{R. Milner}.}
  \bibinfo{year}{1982}\natexlab{}.
\newblock \bibinfo{booktitle}{\emph{A Calculus of Communicating Systems}}.
\newblock \bibinfo{publisher}{Springer-Verlag}, \bibinfo{address}{Berlin,
  Heidelberg}.
\newblock
\showISBNx{0387102353}


\bibitem[Nagar and Jagannathan(2019)]%
        {nagar-jagannathan-automated-crdt-verification}
\bibfield{author}{\bibinfo{person}{Kartik Nagar} {and} \bibinfo{person}{Suresh
  Jagannathan}.} \bibinfo{year}{2019}\natexlab{}.
\newblock \showarticletitle{Automated Parameterized Verification of CRDTs}. In
  \bibinfo{booktitle}{\emph{Computer Aided Verification}},
  \bibfield{editor}{\bibinfo{person}{Isil Dillig} {and} \bibinfo{person}{Serdar
  Tasiran}} (Eds.). \bibinfo{publisher}{Springer International Publishing},
  \bibinfo{address}{Cham}, \bibinfo{pages}{459--477}.
\newblock
\showISBNx{978-3-030-25543-5}


\bibitem[Nair et~al\mbox{.}(2020)]%
        {nair-state-based-verification}
\bibfield{author}{\bibinfo{person}{Sreeja~S. Nair}, \bibinfo{person}{Gustavo
  Petri}, {and} \bibinfo{person}{Marc Shapiro}.}
  \bibinfo{year}{2020}\natexlab{}.
\newblock \showarticletitle{Proving the Safety of Highly-Available Distributed
  Objects}. In \bibinfo{booktitle}{\emph{Programming Languages and Systems}},
  \bibfield{editor}{\bibinfo{person}{Peter M{\"u}ller}} (Ed.).
  \bibinfo{publisher}{Springer International Publishing},
  \bibinfo{address}{Cham}, \bibinfo{pages}{544--571}.
\newblock
\showISBNx{978-3-030-44914-8}


\bibitem[Nieto et~al\mbox{.}(2023)]%
        {nieto-aneris-state-based}
\bibfield{author}{\bibinfo{person}{Abel Nieto}, \bibinfo{person}{Arnaud
  Daby-Seesaram}, \bibinfo{person}{L\'{e}on Gondelman}, \bibinfo{person}{Amin
  Timany}, {and} \bibinfo{person}{Lars Birkedal}.}
  \bibinfo{year}{2023}\natexlab{}.
\newblock \showarticletitle{{Modular Verification of State-Based CRDTs in
  Separation Logic}}. In \bibinfo{booktitle}{\emph{37th European Conference on
  Object-Oriented Programming (ECOOP 2023)}} \emph{(\bibinfo{series}{Leibniz
  International Proceedings in Informatics (LIPIcs)},
  Vol.~\bibinfo{volume}{263})}, \bibfield{editor}{\bibinfo{person}{Karim Ali}
  {and} \bibinfo{person}{Guido Salvaneschi}} (Eds.).
  \bibinfo{publisher}{Schloss Dagstuhl -- Leibniz-Zentrum f{\"u}r Informatik},
  \bibinfo{address}{Dagstuhl, Germany}, \bibinfo{pages}{22:1--22:27}.
\newblock
\showISBNx{978-3-95977-281-5}
\showISSN{1868-8969}
\urldef\tempurl%
\url{https://doi.org/10.4230/LIPIcs.ECOOP.2023.22}
\showDOI{\tempurl}


\bibitem[Nieto et~al\mbox{.}(2022)]%
        {nieto-aneris-op-based}
\bibfield{author}{\bibinfo{person}{Abel Nieto}, \bibinfo{person}{L\'{e}on
  Gondelman}, \bibinfo{person}{Alban Reynaud}, \bibinfo{person}{Amin Timany},
  {and} \bibinfo{person}{Lars Birkedal}.} \bibinfo{year}{2022}\natexlab{}.
\newblock \showarticletitle{Modular Verification of Op-Based CRDTs in
  Separation Logic}.
\newblock \bibinfo{journal}{\emph{Proc. ACM Program. Lang.}}
  \bibinfo{volume}{6}, \bibinfo{number}{OOPSLA2}, Article
  \bibinfo{articleno}{188} (\bibinfo{date}{oct} \bibinfo{year}{2022}),
  \bibinfo{numpages}{29}~pages.
\newblock
\urldef\tempurl%
\url{https://doi.org/10.1145/3563351}
\showDOI{\tempurl}


\bibitem[Pregui{\c{c}}a et~al\mbox{.}(2018)]%
        {preguica-crdts}
\bibfield{author}{\bibinfo{person}{Nuno Pregui{\c{c}}a},
  \bibinfo{person}{Carlos Baquero}, {and} \bibinfo{person}{Marc Shapiro}.}
  \bibinfo{year}{2018}\natexlab{}.
\newblock \bibinfo{booktitle}{\emph{Conflict-Free Replicated Data Types
  CRDTs}}.
\newblock \bibinfo{publisher}{Springer International Publishing},
  \bibinfo{address}{Cham}, \bibinfo{pages}{1--10}.
\newblock
\showISBNx{978-3-319-63962-8}
\urldef\tempurl%
\url{https://doi.org/10.1007/978-3-319-63962-8_185-1}
\showDOI{\tempurl}


\bibitem[Roh et~al\mbox{.}(2011)]%
        {roh-radts}
\bibfield{author}{\bibinfo{person}{Hyun-Gul Roh}, \bibinfo{person}{Myeongjae
  Jeon}, \bibinfo{person}{Jin-Soo Kim}, {and} \bibinfo{person}{Joonwon Lee}.}
  \bibinfo{year}{2011}\natexlab{}.
\newblock \showarticletitle{Replicated abstract data types: Building blocks for
  collaborative applications}.
\newblock \bibinfo{journal}{\emph{J. Parallel and Distrib. Comput.}}
  \bibinfo{volume}{71}, \bibinfo{number}{3} (\bibinfo{year}{2011}),
  \bibinfo{pages}{354--368}.
\newblock
\showISSN{0743-7315}
\urldef\tempurl%
\url{https://doi.org/10.1016/j.jpdc.2010.12.006}
\showDOI{\tempurl}


\bibitem[Sangiorgi(2011)]%
        {Sangiorgi_2011}
\bibfield{author}{\bibinfo{person}{Davide Sangiorgi}.}
  \bibinfo{year}{2011}\natexlab{}.
\newblock \bibinfo{booktitle}{\emph{Introduction to Bisimulation and
  Coinduction}}.
\newblock \bibinfo{publisher}{Cambridge University Press}.
\newblock


\bibitem[Schmuck(1988)]%
        {schmuck-dissertation}
\bibfield{author}{\bibinfo{person}{Frank~B Schmuck}.}
  \bibinfo{year}{1988}\natexlab{}.
\newblock \emph{\bibinfo{title}{The use of efficient broadcast protocols in
  asynchronous distributed systems}}.
\newblock \bibinfo{thesistype}{Ph.\,D. Dissertation}.
\newblock


\bibitem[Schneider(1990)]%
        {schneider-state-machine-approach}
\bibfield{author}{\bibinfo{person}{Fred~B. Schneider}.}
  \bibinfo{year}{1990}\natexlab{}.
\newblock \showarticletitle{Implementing Fault-Tolerant Services Using the
  State Machine Approach: A Tutorial}.
\newblock \bibinfo{journal}{\emph{ACM Comput. Surv.}} \bibinfo{volume}{22},
  \bibinfo{number}{4} (\bibinfo{date}{dec} \bibinfo{year}{1990}),
  \bibinfo{pages}{299–319}.
\newblock
\showISSN{0360-0300}
\urldef\tempurl%
\url{https://doi.org/10.1145/98163.98167}
\showDOI{\tempurl}


\bibitem[Shapiro et~al\mbox{.}(2011)]%
        {shapiro-crdts}
\bibfield{author}{\bibinfo{person}{Marc Shapiro}, \bibinfo{person}{Nuno
  Pregui{\c{c}}a}, \bibinfo{person}{Carlos Baquero}, {and}
  \bibinfo{person}{Marek Zawirski}.} \bibinfo{year}{2011}\natexlab{}.
\newblock \showarticletitle{Conflict-Free Replicated Data Types}. In
  \bibinfo{booktitle}{\emph{Stabilization, Safety, and Security of Distributed
  Systems}}, \bibfield{editor}{\bibinfo{person}{Xavier D{\'e}fago},
  \bibinfo{person}{Franck Petit}, {and} \bibinfo{person}{Vincent Villain}}
  (Eds.). \bibinfo{publisher}{Springer Berlin Heidelberg},
  \bibinfo{address}{Berlin, Heidelberg}, \bibinfo{pages}{386--400}.
\newblock
\showISBNx{978-3-642-24550-3}


\bibitem[Timany et~al\mbox{.}(2024)]%
        {timany-trillium}
\bibfield{author}{\bibinfo{person}{Amin Timany},
  \bibinfo{person}{Simon~Oddershede Gregersen}, \bibinfo{person}{L\'{e}o
  Stefanesco}, \bibinfo{person}{Jonas~Kastberg Hinrichsen},
  \bibinfo{person}{L\'{e}on Gondelman}, \bibinfo{person}{Abel Nieto}, {and}
  \bibinfo{person}{Lars Birkedal}.} \bibinfo{year}{2024}\natexlab{}.
\newblock \showarticletitle{Trillium: Higher-Order Concurrent and Distributed
  Separation Logic for Intensional Refinement}.
\newblock \bibinfo{journal}{\emph{Proc. ACM Program. Lang.}}
  \bibinfo{volume}{8}, \bibinfo{number}{POPL}, Article \bibinfo{articleno}{9}
  (\bibinfo{date}{jan} \bibinfo{year}{2024}), \bibinfo{numpages}{32}~pages.
\newblock
\urldef\tempurl%
\url{https://doi.org/10.1145/3632851}
\showDOI{\tempurl}


\bibitem[Wilcox et~al\mbox{.}(2015)]%
        {wilcox2015verdi}
\bibfield{author}{\bibinfo{person}{James~R Wilcox}, \bibinfo{person}{Doug
  Woos}, \bibinfo{person}{Pavel Panchekha}, \bibinfo{person}{Zachary Tatlock},
  \bibinfo{person}{Xi Wang}, \bibinfo{person}{Michael~D Ernst}, {and}
  \bibinfo{person}{Thomas Anderson}.} \bibinfo{year}{2015}\natexlab{}.
\newblock \showarticletitle{Verdi: A framework for formally verifying
  distributed system implementations}. In \bibinfo{booktitle}{\emph{Proceedings
  of the 2015 ACM SIGPLAN Conference on Programming Language Design and
  Implementation (PLDI), Portland, OR}}.
\newblock


\bibitem[Zeller et~al\mbox{.}(2014)]%
        {zeller-state-based-verification}
\bibfield{author}{\bibinfo{person}{Peter Zeller}, \bibinfo{person}{Annette
  Bieniusa}, {and} \bibinfo{person}{Arnd Poetzsch-Heffter}.}
  \bibinfo{year}{2014}\natexlab{}.
\newblock \showarticletitle{Formal Specification and Verification of CRDTs}. In
  \bibinfo{booktitle}{\emph{Formal Techniques for Distributed Objects,
  Components, and Systems}}, \bibfield{editor}{\bibinfo{person}{Erika
  {\'A}brah{\'a}m} {and} \bibinfo{person}{Catuscia Palamidessi}} (Eds.).
  \bibinfo{publisher}{Springer Berlin Heidelberg}, \bibinfo{address}{Berlin,
  Heidelberg}, \bibinfo{pages}{33--48}.
\newblock
\showISBNx{978-3-662-43613-4}


\end{thebibliography}

\appendix
\crefalias{section}{appendix}
\section{Emulation From State-Based CRDTs to Op-Based CRDTs}\label{appendix:other-emulation}

In the main text of this paper, we formalized the emulation from op-based CRDTs to state-based CRDTs.
In this section, we show the other direction.

\begin{figure}[h]
  \centering
    \begin{align*}
        \QQ_{1} = \{ & (\stconfigA, \opconfigA) \mid \\
        & \textit{--- configurations need to be reachable } \\
        & \quad \stconfigA, \opconfigA \text{  are well-formed  } \\
        & \text{--- Agreement on global states} \\
        & \land \quad \stsem{\S} = \opsem{\S} \\
        & \textit{--- the op-based replica can simulate a ``large'' merge by the state-based replica}\\
        & \land \quad \forall (\rid, s') \in \stsem{\B}, \exists C \in \mathsf{Mergeable}_{\opsem{\B}}(\rid)\, :\,
            \stsem{\S(\rid)} \sqcup s' = \textstyle \bigsqcup (C \cup \{\opsem{\S(\rid)}\})
        \}
    \end{align*}
  \caption{Weak simulation relation of the host CRDT by the guest CRDT.}\label{fig:Q1}
\end{figure}

\begin{assumption}
  Assume we are given a state-based CRDT object as in \cref{fig:st-based-object},
  with its corresponding sets $S$, $\Op$, $Q$ and $V$, related functions,
  and semantics as in \cref{fig:st-global-rules}. That means
  $S = (S, \sqcup)$ is a join-semilattice with some initial state $\stsem{s^{0}}$, and $\upmap$ is inflationary, i.e.,
  $\forall \op \in \Op$, we have
    \[ s \sqcup \upmap(\rid, s, \op) = \upmap(\rid, s, \op).\]
  For brevity, we just write $\sqcup$ for $\mmap$.
\end{assumption}

The candidate emulation we consider is a mapping $\GG$ which takes our hosting
state-based CRDT object $\stsem{O}$ and constructs an op-based guest CRDT object
$\opsem{O'} = \GG(\stsem{O})$ as specified in \cref{fig:op-based-guest}.

\begin{figure*}[h!]
  \begin{align*}
        & \textbf{Parameters :} \\
        & \quad S : \text{states,}\; \Op : \text{operations,}\;
        S : \text{messages,}\; Q : \text{queries},\; V : \text{values,}\; \\
        & \quad  \stsem{s^{0}} : \text{ initial state } \\
        & \textbf{ Functions :}\\
        & \quad \opprmap(\rid, s, \op) = \stupmap(\rid, s, \op)\\
        & \quad \opemap(s, m) = s \sqcup m \qquad (\text{since $m$ is a message, $m \in S$})\\
        & \quad \opqmap(q, s) = \stqmap(q, s).
  \end{align*}
  \caption{The state-based guest CRDT constructed by $\GG$ from a given op-based host CRDT.}
  \label{fig:op-based-guest}
\end{figure*}

\paragraph{The Guest CRDT Simulates the Host CRDT}

Note that the roles of op-based and state-based CRDTs are reversed now,
so the candidate weak simulation $\QQ_{1}$ we are about to introduce
is actually analogous to $\RR_{2}$ in \cref{fig:R1-R2-sim}. Along those lines,
the main thing we deal with is asserting that a ``larger'' merge can be simulated
by a series of ``smaller'' merges.

\begin{definition}\label{def:mergeable-set}
  In a set $\beta \in \Pow(\replicas \times S)$, we say a set
    \[ C = \{s_1, \dots, s_k\} \subseteq S\]
  is a \emph{mergeable set at replica $\rid$} and write $C \in \mathsf{Mergeable}_{\beta}(\rid)$
  if for all $j \in 1..k$,
  we have $(\rid, s_j) \in \beta$.
\end{definition}

We take as our candidate simulation $\QQ_{1}$ the relation,
which clearly includes the initial configurations $\stconfig{\varepsilon}{\lambda \rid\, .\,  s^{0}}{\varnothing}$
and $\opconfig{\varepsilon}{\lambda \rid\, .\, s^{0}}{\varnothing}$.

\begin{theorem}\label{thm:main3}
  $\QQ_1$ is a weak simulation of $\stleadsto$ by $\opleadsto \subG$, which contains the initial configurations.
\end{theorem}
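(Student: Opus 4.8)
The plan is a routine weak-simulation argument by case analysis on the state-based rule, with a single genuinely new ingredient. That $\QQ_1$ relates the initial configurations $\stconfig{\varepsilon}{\S^{0}}{\varnothing}$ and $\opconfig{\varepsilon}{\S^{0}}{\varnothing}$ is immediate: both are well-formed, have common global state $\lambda\rid.\,s^{0}$, and empty buffers, so the last clause of \cref{fig:Q1} is vacuous. So I would fix $(\stconfigA,\opconfigA)\in\QQ_1$ and a step $\stconfigA\xstleadsto{\alpha}\stconfigB$ and exhibit, over $\opleadsto\subG$, a run from $\opconfigA$ to some $\opconfigB$ with observable label $\alpha$ and $(\stconfigB,\opconfigB)\in\QQ_1$ (well-formedness of $\opconfigB$ is then automatic). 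The two client-facing cases are immediate. A \StUpdate step at $\rid$ is matched by one \OpUpdate step at $\rid$: since $\opprmap=\stupmap$ and $\opemap(s,m)=s\sqcup m$, inflationarity of $\upmap$ gives that both replicas move to $\upmap(\rid,s,\op)$, so $\stsem{\S}=\opsem{\S}$ is preserved; the last clause survives because the new broadcast message only \emph{enlarges} $\opsem{\B}$ (so mergeable sets stay mergeable) and the change at $\opsem{\S}(\rid)$ is absorbed by distributing $\sqcup$ over the old witness set. A \StQuery step is matched by \OpQuery, returning the same value because $\opqmap=\stqmap$ and the global states agree; nothing else changes.

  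For the two silent cases, I would first note that $\GG$ yields an op-based object whose message effects commute (they are defined through the join $\sqcup$), so by \cref{rem:cbcast} its semantics may be instantiated with the empty causal order; then the $\mathsf{enabled}(\G)$ side-condition of \OpDeliver amounts to ``$m$ not yet delivered at $\rid$'', which holds automatically for any $(\rid,m)\in\opsem{\B}$. Consequently any $C=\{m_1,\dots,m_k\}\in\mathsf{Mergeable}_{\opsem{\B}}(\rid)$ can be delivered in full, in any order. Now for a \StDeliver step in which $\rid$ merges a pending $(\rid,s')\in\stsem{\B}$: the last clause of $\QQ_1$ supplies $C\in\mathsf{Mergeable}_{\opsem{\B}}(\rid)$ with $\stsem{\S}(\rid)\sqcup s'=\bigsqcup(C\cup\{\opsem{\S}(\rid)\})$, and I would match with the $k$ deliveries of $m_1,\dots,m_k$ at $\rid$. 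Then $\opsem{\S'}(\rid)=\opsem{\S}(\rid)\sqcup\bigsqcup C=\stsem{\S}(\rid)\sqcup s'=\stsem{\S'}(\rid)$, so $\stsem{\S'}=\opsem{\S'}$. Re-checking the last clause is pure $\sqcup$-bookkeeping: a message pending at a replica $\rho'\neq\rid$ keeps its old witness (none of its witnessing messages were consumed); a message $(\rid,s'')$ still pending keeps the witness $C''\setminus\{m_1,\dots,m_k\}$, since the consumed $m_i$ now lie below $\opsem{\S'}(\rid)$, so removing them from the join does no harm.

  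The one new point is \StSend. Op-based CRDTs have no state-transmitting transition, so I would match a \StSend step by the \emph{empty} run $\opconfigA\quietleadstostar\opconfigA$ (legitimate since \StSend is silent) and re-establish $\QQ_1$ for $\stconfigB$, whose buffer has gained $(\rho,s)$ for each $\rho\neq\rid$, with $s=\stsem{\S}(\rid)$. The global states are unchanged and the old pending messages keep their witnesses, so only the new $(\rho,s)$ need attention, and here an auxiliary lemma is required: \emph{in every well-formed op-based guest configuration $\opconfigA$ and all $\rid\neq\rho$ there is $C\subseteq\{m\mid(\rho,m)\in\opsem{\B}\}$ with $\opsem{\S}(\rho)\sqcup\bigsqcup C=\opsem{\S}(\rho)\sqcup\opsem{\S}(\rid)$}. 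Granting the lemma, clause 2 ($\stsem{\S}=\opsem{\S}$) gives $\stsem{\S}(\rho)\sqcup s=\opsem{\S}(\rho)\sqcup\opsem{\S}(\rid)=\bigsqcup(C\cup\{\opsem{\S}(\rho)\})$, and $C$ is mergeable at $\rho$, closing the case.

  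I expect this lemma to be the main obstacle; everything else is routine. I would prove it by induction on the execution witnessing well-formedness, carrying the strengthened invariant $\opsem{\S}(\rid)=\bigsqcup\bigl(\{s^{0}\}\cup\mathsf{Delivered}(\rid,\opsem{\G})\bigr)$ for every $\rid$ --- which holds because each local step of the guest incorporates a message by $\sqcup$ and $\upmap$ is inflationary --- together with the fact that reliable broadcast inserts every generated message into the buffer of every other replica, so for $\rho\neq\rid$ each $m\in\mathsf{Delivered}(\rid,\opsem{\G})$ is either already in $\mathsf{Delivered}(\rho,\opsem{\G})$ or still pending at $\rho$. Taking $C=\mathsf{Delivered}(\rid,\opsem{\G})\cap\{m\mid(\rho,m)\in\opsem{\B}\}$ then yields $\mathsf{Delivered}(\rid,\opsem{\G})\subseteq\mathsf{Delivered}(\rho,\opsem{\G})\cup C$ and $C\subseteq\mathsf{Delivered}(\rid,\opsem{\G})$, and the invariant turns these two inclusions into the desired equality of joins.
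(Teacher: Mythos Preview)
Your proposal is correct and follows essentially the same strategy as the paper: case analysis on the state-based rule, with \StUpdate, \StQuery, and \StDeliver handled directly, and the \StSend case matched by the reflexive step together with the key observation that the op-based state $\opsem{\S}(\rid)$ decomposes as the join of messages each of which has been broadcast (hence is either delivered or still pending at every other replica). Your packaging of this last point as a separate lemma with invariant $\opsem{\S}(\rid)=\bigsqcup(\{s^{0}\}\cup\mathsf{Delivered}(\rid,\opsem{\G}))$ is exactly what the paper argues inline, and your explicit appeal to \cref{rem:cbcast} to discharge the $\mathsf{enabled}$ side-condition of \OpDeliver (along with your re-verification of the mergeable-set clause after \StUpdate and \StDeliver) is a point the paper leaves implicit.
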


\paragraph{The Host CRDT Simulates the Guest CRDT}

Because the underlying replica states are both elements in $S$,
simulating broadcast with the state-based host CRDT means
that at all times, the global states and network states
are identical at every step of the simulation. This makes this direction
rather straightforward. 

We take as our candidate simulation $\QQ_{2}$ the relation
clearly includes the initial configurations $\opconfig{\varepsilon}{\lambda \rid\, .\, s^{0}}{\varnothing}$
and $\stconfig{\varepsilon}{\lambda \rid\, .\, s^{0}}{\varnothing}$.

\begin{figure}[h!]
  \centering
    \begin{align*}
      \QQ_{2} = \{ & (\opconfigA, \stconfigA) \mid \\
      & \textit{--- configurations need to be reachable} \\
      & \quad \opconfigA, \stconfigA \text{  are well-formed } \\
      & \textit{--- agreement in global states, and network states}\\
      & \land \quad \opsem{\S} = \stsem{\S} \quad \land \quad \opsem{\B} = \stsem{\B}
      \}.
    \end{align*}
  \caption{Weak simulation relation of the guest CRDT by the host CRDT.}\label{fig:Q2}
\end{figure}
And our statement of the theorem.

\begin{theorem}\label{thm:main4}
    $\QQ_2$ is a weak simulation of $\opleadsto \subG$ by $\stleadsto$, which contains the initial configurations.
\end{theorem}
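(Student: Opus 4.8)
The plan is to verify the two clauses of \Cref{def:weaksim} for $\QQ_2$ directly. The first clause --- that $\QQ_2$ relates the initial configurations $\opconfig{\varepsilon}{\lambda \rid\, .\, s^{0}}{\varnothing}$ and $\stconfig{\varepsilon}{\lambda \rid\, .\, s^{0}}{\varnothing}$ --- is immediate: both are well-formed via the empty execution, they agree on replica states and on the (empty) buffer. For the second clause I would fix $(\opconfigA,\stconfigA)\in\QQ_2$, so that I may use $\opsem{\S}=\stsem{\S}$, $\opsem{\B}=\stsem{\B}$, and well-formedness of both sides, and then a guest step $\opconfigA \xopleadsto{\alpha}\subG \opconfigB$, and do a case analysis on which system rule fired. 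There are only three: \OpUpdate, \OpQuery, and \OpDeliver (the op-based guest has no standalone ``send'' rule, since it broadcasts inside an update). In each case I produce a matching weak $\alpha$-move of the host and re-check the three conjuncts of $\QQ_2$ for the successor pair.

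The only case carrying real content is \OpUpdate. Here $\alpha=(\rid,\upd[\op],\bot)$; with $s=\opsem{\S}(\rid)$ the guest forms the message $m=\opprmap(\rid, s,\op)=\stupmap(\rid, s,\op)$, updates $\rid$'s state to $s\sqcup m$ (the guest's effect function being $s\sqcup m$), and broadcasts $m$. The key observation is that inflationarity of $\upmap$ --- our standing assumption, $s\sqcup\stupmap(\rid,s,\op)=\stupmap(\rid,s,\op)$ --- gives $s\sqcup m=m$, so $\rid$'s new state is exactly $m=\stupmap(\rid,s,\op)$. I would match this by the host taking \StUpdate (which sets $\rid$'s state to $\stupmap(\rid,s,\op)=m$, emitting the label $(\rid,\upd[\op],\bot)$) followed by the silent \StSend (which broadcasts that very state $m$); together this is a weak $\alpha$-transition, and from $\opsem{\S}=\stsem{\S}$ and $\opsem{\B}=\stsem{\B}$ one reads off $\opsem{\S'}=\stsem{\S'}$ and $\opsem{\B'}=\stsem{\B'}$. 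The \OpQuery case is trivial: $v=\opqmap(q,\opsem{\S}(\rid))=\stqmap(q,\stsem{\S}(\rid))$, matched by \StQuery with the same label and no change to states or buffers. For \OpDeliver ($\alpha=\tau$) the guest delivers some $(\rid,m)\in\opsem{\B}$ and joins it in; since $\opsem{\B}=\stsem{\B}$ the same pair is in the host buffer, and I would match by firing \StDeliver on $(\rid,m)$, after which --- using associativity, commutativity, and idempotence of $\sqcup$ --- the new states and buffers again coincide. In all three cases both successors extend well-formed configurations by legal steps, hence are well-formed, so $(\opconfigB,\stconfigB)\in\QQ_2$.

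The main obstacle --- modest, but the one genuine gap --- is firing \StDeliver in the last case: that rule additionally requires $(\rid,\dlvr[m],-)\notin\stsem{\G}$, a condition on the host trace that $\QQ_2$ does not track, and the analogous guest-side premise ($\dlvr[m]$ enabled in $\opsem{\G}$) is about $\opsem{\G}$, not $\stsem{\G}$, so it does not transfer directly. I would close this with a small well-formedness lemma: in any well-formed state-based configuration, a message still in flight towards $\rid$ has not yet been delivered at $\rid$. This follows by induction on the execution witnessing well-formedness --- \StSend only ever enqueues fresh (uniquely tagged) states, never previously delivered, and \StDeliver removes a state from the buffer precisely when it records the corresponding delivery event --- and it makes \StDeliver enabled whenever $(\rid,m)\in\stsem{\B}$. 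Beyond this, the direction is genuinely easy for the reason flagged in the surrounding discussion: the guest's messages are themselves elements of $S$, so --- unlike in \Cref{thm:main2} --- no state ever has to be repackaged, and the equalities $\opsem{\S}=\stsem{\S}$ and $\opsem{\B}=\stsem{\B}$ are preserved verbatim at every step.
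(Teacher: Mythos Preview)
Your proposal is correct and follows essentially the same approach as the paper: match \OpUpdate by \StUpdate followed by \StSend (using inflationarity of $\upmap$ to collapse $s\sqcup\stupmap(\rid,s,\op)$ to $\stupmap(\rid,s,\op)$), match \OpDeliver by \StDeliver, and handle \OpQuery trivially. You are in fact more careful than the paper, which simply asserts that ``the equality $\opsem{\B}=\stsem{\B}$ implies that deliver events are always able to be simulated'' without addressing the premise $(\rid,\dlvr[m],-)\notin\stsem{\G}$ of \StDeliver; your well-formedness lemma is the right way to close this, though note that its justification should appeal to the standing uniqueness assumption on op-based messages (here, the states-as-messages produced by $\opprmap$) rather than to well-formedness of the host alone, since in a general state-based execution a replica could re-send a state already delivered elsewhere.
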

\section{Proofs}\label{appendix:proofs}

\subsection{Proof of \cref{thm:main1}}

We show the omitted cases. Recall the hypothesis:
    \[ \RR_1(\opconfigA, \stconfigB) \textit{ and for any $\alpha$, we have} \opconfigA \xopleadsto{\alpha} \opconfigB.\]

\begin{itemize}
    \item (\OpUpdate). Then $\alpha = (\rid , \upd[\op] , \bot)$, and we have the transition
            where
              \begin{gather*}
                \opconfigA \xopleadsto{(\rid , \upd[\op] , \bot)} \opconfigB,\\
                \opsem{\Gamma'} = \opsem{\Gamma \cdot (\rid, \upd[\op] , \send[m])}, \quad
                \opsem{\Sigma'} = \opsem{\Sigma[\rid \mapsto s']}, \quad
                \opsem{\beta'} = \opsem{\bcast(\rid, m)(\beta)}.
              \end{gather*}

            We match with\StUpdate followed by\StSend as follows. First we construct the local transitions,
            \begin{gather*}
              \stsem{\S(\rid)} \xststepsto{(\rid, \upd[\op], \bot)}
              \stsem{H'}
              \xststepsto{(\rid, \bot, \send[H'])} \stsem{H'},
            \end{gather*}
            where $\stsem{H'} = \stsem{\upmap(\rid, \op, \S(\rid))}$.
            This allows us to apply\StUpdate followed by\StSend so that we have
              \[ \stconfigA \xstleadsto{(\rid , \upd[\op] , \bot)}\subG \stconfigB \quietleadsto \subG \stconfigC, \]
            where
              \begin{gather*}
                \stsem{\G''} = \stsem{G} \cdot \stsem{(\rid, \upd[\op], \bot)} \cdot \stsem{(\rid, \bot, \send[H'])},\\
                \stsem{\S''} = \stsem{\S [ \rid \mapsto H' ]},
                \qquad
                \stsem{\B''} = \stsem{\bcast(\rid, H')(\B)}.
              \end{gather*}

            We need to show that $(\opconfigB, \stconfigC) \in \RR_{1}$ by checking the conditions in \cref{fig:R1-sim}.
            It suffices to consider just the $\rid$ which performed the update. The only two conditions of interest are:
            \[
              \opsem{\Sigma'(\rid)} = \stsem{\Sigma''(\rid)} \qquad \textit{and} \qquad
              \forall \rid \in \replicas\, :\, (\opsem{\rid}, m) \in \opsem{\beta'}
              \implies (\stsem{\rid}, (m)^{\darrow \opsem{\Gamma'}}) \in \stsem{\beta''}.
            \]

            We know $\opsem{\Gamma'} = \opsem{\Gamma \cdot (\rid, \upd[\op] , \send[m])}$,
            so by\OpUpdate  we know $m = \opsem{\prmap(\rid,\op,\S(\rid))}$
            was the most recently sent message in the op-based system.
            Using the hypothesis, and the definitions of\StUpdate and $\stsem{\upmap}$,
            we deduce that the corresponding state $\stsem{H'} = \stsem{\S''(\rid)}$ 
            sent in the state-based system is
            \[
              \stsem{\S''(\rid)}
              = \stsem{\S(\rid)} \cup \{ \opprmap(\stsem{\rid}, \op, \interp(\stsem{\S(\rid)})) \}
              = \stsem{\S(\rid)} \cup \{ \opsem{\prmap(\rid, \op, \S(\rid))} \}
              = \stsem{\S(\rid)} \cup \{ m \}.
            \]

            Note that $m$ is (by definition), the most recently broadcast message by replica $\opsem{\rid}$,
            and that $\opsem{\upmap}$ is implemented by $\opsem{\rid}$ by delivering $m$ to itself (via $\opsem{\emap}$).
            Thus, it follows from the fact $\stsem{\S(\rid)} = \mathsf{Delivered}(\rid, \opsem{\B})$
            that we have
              \[ \stsem{\S''(\rid)} = \stsem{\S(\rid)} \cup \{ m \} = \mathsf{Delivered}(\rid, \opsem{\G'}) = (m)^{\darrow \opsem{\G'}}.\]
            Since\StSend broadcast $\stsem{H'} = \stsem{\S''(\rid)}$ to each replica, we obtain
            \[
              \forall \rid \in \replicas\, :\, (\opsem{\rid}, m) \in \opsem{\beta'}
              \implies (\stsem{\rid}, (m)^{\darrow \opsem{\Gamma'}}) \in \stsem{\beta''},
            \]
            as desired. To finish, we now just need to check that $\opsem{\S'(\rid)} = \stsem{\S''(\rid)}$ is true.
            Note that since $m$ was the most recently broadcast message by $\opsem{\rid}$, there \emph{does not} exist any 
            $m' \in \mathsf{Sent}(\opsem{\Gamma'}) = \mathsf{Sent}(\stsem{\Gamma''})$ such that $m \prec m'$. From the
            definition of $\interp$ in conjunction with the hypothesis, the following series of equalities are immediate.
            \[
              \opsem{\S'(\rid)}
              = \opemap(m, \opsem{\S(\rid)})
              = \opemap(\opsem{m}, \interp(\stsem{\S(\rid)}))
              = \interp(\stsem{\S(\rid)} \cup \{ m\})
              = \stsem{\S''(\rid)}.  
            \]
           \stopcase

      \item (\OpQuery). Immediate from the hypothesis. \stopcase

\end{itemize}

\subsection{Proof of \cref{thm:main2}}

We show the omitted cases.
  \[ \text{Hypothesis:}\; \RR_{2}(\stconfigA, \opconfigA)\;
      \text{and we have}\; \stconfigA \xstleadsto{\alpha} \subG \stconfigB,\; \text{some}\; \alpha. \]
      \begin{itemize}
      
          \item (\StUpdate). Then $\alpha = (\rid , \upd[\op] , \bot)$.
                Let $\stsem{H'} = \stsem{\S(\rid)} \cup \{ \opprmap(\rid, \op, \interp(\stsem{\S(\rid)})) \}$. Then we have
                  \begin{gather*}
                    \stconfigA \xstleadsto{(\rid , \upd[\op] , \bot)} \subG \stconfigB \\
                    \stsem{\G'} = \stsem{\G \cdot (\rid, \upd[\op], \bot)}, \quad
                    \stsem{\S'} = \stsem{\S[ \rid \mapsto H]}, \quad
                    \stsem{\B'} = \stsem{\B}.
                  \end{gather*}

                  We match at op-based replica $\opsem{\rid}$ as follows.
                  Let $m = \opsem{\prmap(\rid, \op, \S(\rid))}$ and
                  \newline
                  $\opsem{s'} = \opsem{\upmap(\rid, \op, \S(\rid))}$, then construct the local transition
                  \[
                    \opsem{\S(\rid)} \xopstepsto{(\rid, \upd[\op], \send[m])} \opsem{s'}.
                  \]
                  We then invoke the\OpUpdate rule, and obtain
                  \begin{gather*}
                    \opconfigA \xopleadsto{(\rid , \upd[\op] , \bot)} \opconfigB \\
                    \opsem{\G'} = \opsem{\G} \cdot \opsem{(\rid, \upd[\op], \send[m])}, \qquad
                    \opsem{\S'} = \opsem{\S[\rid \mapsto s'] }, \qquad
                    \opsem{\B'} = \opsem{\bcast(\rid, m)(\B)}.
                  \end{gather*}
                  To see this was a simulating transition, we show $\RR_2(\stconfigB, \opconfigB)$.

                  From the hypothesis $\opsem{\S(\rid)} = \interp(\stsem{\S(\rid)})$,
                  it is immediate that
                    \[ m = \opprmap(\rid, \op, \interp(\stsem{H}))
                      \quad \textit{and} \quad
                      m = \opprmap(\rid, \op, \opsem{\S(\rid)}),\]
                  and therefore $\mathsf{Delivered}(\rid, \opsem{\G'}) = \stsem{\S(\rid)} \cup \{ m\} = \stsem{\S'(\rid)}$.
                  Moreover, by construction, $m$ was the most recently sent message, hence the set of causal predecessors
                  is
                  \[ \stsem{\S(\rid)} = \mathsf{Delivered}(\rid, \opsem{\G}) = (m)^{\darrow \opsem{\G}} \setminus \{m\},\]
                  which means $\forall m' \in \stsem{\S(\rid)}$, we have $m' \prec m$, and therefore
                  from the definition of $\interp$, we obtain
                  \[
                    \opsem{\S'(\rid)}
                    = \opemap(m, \opsem{\S(\rid)})
                    = \opemap(m, \interp(\stsem{\S(\rid)}))
                    = \interp(\stsem{\S(\rid)} \cup \{m\}),
                  \]
                  and since
                  \[ 
                    \interp(\stsem{\S(\rid)} \cup \{m\})
                    = \interp(\stsem{\S(\rid)} \cup \{ \opprmap(\rid, \op, \interp(\stsem{\S(\rid)})) \})
                    = \interp(\stsem{\S'(\rid)}),
                  \]
                  we obtain $\opsem{\S'(\rid)} = \interp(\stsem{\S'(\rid)})$ as desired.
                  The other conditions are immediate, hence we conclude $\RR_2(\stconfigB, \opconfigB)$. \stopcase

          \item (\StSend) Then, $\alpha = \quiet{\tau}$, and there is a replica $\stsem{\rid}$ which broadcast its
                state $\stsem{\S(\rid)}$ to all other replicas, whence
          \begin{gather*}
            \stconfigA \quietleadsto \subG \stconfigB \\
            \stsem{\G'} = \stsem{\G} \cdot \stsem{(\rid, \bot, \send[\S(\rid)])}, \qquad
            \stsem{\S'} = \stsem{\S}, \qquad
            \stsem{\B'} = \stsem{\bcast(\rid, \S(\rid))(\B)}.
          \end{gather*}
          We simulate with the reflexive step, i.e.,
            \[ \opconfigA \quietleadstostar \opconfigA. \]
          We show that $\RR_{2}(\stconfigB, \opconfigA)$.

          Most conditions of $\RR_{2}$ are immediate, so it suffices to just check that
          the most recently sent $\stsem{\S(\rid)}$ corresponds to a deliverable set $U$ at any other replica $\opsem{\rid'}$.
          That is,
          \[
            (\stsem{\rid'}, \stsem{\S(\rid)}) \in \stsem{\B'} 
            \implies \exists \opsem{U} \in \mathsf{deliverable}_{\opsem{\B}}(\opsem{\rid'})
          \]
          i.e., the op-based $\opsem{\rid'}$ can simulate the state-based $\stsem{\rid'}$ merging $\stsem{\S(\rid)}$.
          Critically, the hypothesis implies $\stsem{\S(\rid)} = \mathsf{Delivered}(\opsem{\rid}, \opsem{\G})$,
          meaning that all the messages contained in $\stsem{\S(\rid)}$ were either delivered externally by $\opsem{\rid}$,
          or generated locally and self-delivered during a local update (c.f. \cref{fig:op-based-object}).

          From the well-formedness of $\opconfigA$, each $m \in \mathsf{Delivered}(\opsem{\rid}, \opsem{\G})$ was generated and then
          broadcast by \emph{some} $\opsem{\rid''} \in \replicas$. Thus, every other $\opsem{\rid'}$ has either delivered $m$, or $m \in \opsem{\B(\rid')}$.
          Fixing $\opsem{\rid'}$, it follows that we can partition
          $\stsem{\S(\rid)}$ into sets $D_{\opsem{\rid'}}$ and $U_{\opsem{\rid'}}$ of delivered and undelivered
          messages (at $\opsem{\rid'}$).

          \paragraph{Claim} We can take $U = U_{\opsem{\rid'}}$ as our deliverable set by choosing an ordering
          $\{m_1, \dots, m_k\} = U$ (where $k = |U|$), and delivering each $m_i$ following this ordering.

          \paragraph{Proof of Claim} To witness, suppose $U$ is non-empty (otherwise the claim is immediate).
          Since $\mathsf{Delivered}(\opsem{\rid}, \opsem{\G})$ is the set of delivered messages, the well-formedness condition (\cref{def:well-formed})
          combined with \cref{prop:causal-order-enforcement} implies there exists a total order $<_{\opsem{\rid}}$ of the 
          messages in 
          \newline $\mathsf{Delivered}(\opsem{\rid}, \opsem{\G})$ such that
          $<_{\opsem{\rid}}$ is the \emph{delivery order} of $\opsem{\rid}$, and $<_{\opsem{\rid}}$ is \emph{consistent} with causality in the sense
          that
            \[ \forall m', m'' \in \mathsf{Delivered}(\opsem{\rid}, \opsem{\G})\, :\, \neg (m' <_{\opsem{\rid}} m'') \implies \neg (m' \prec m'').\]
          In other words, there is an enumeration (dependent on $\opsem{\rid}$) $m_1, m_2, \dots, m_N$ of 
          $\mathsf{Delivered}(\opsem{\rid}, \opsem{\G})$ that is consistent with the causal relation $\prec$.
          Since $D_{\opsem{\rid'}} \cup U_{\opsem{\rid'}} = \stsem{\S(\rid)} = \mathsf{Delivered}(\opsem{\rid}, \opsem{\G})$,
          it must be the case that there is a $p \leq N$ so that
          \[ \{m_1, \dots, m_{p-1}\} = D_{\opsem{\rid'}} \qquad \textit{and} \qquad \{m_{p}, \dots, m_{N}\} = U_{\opsem{\rid'}}\]
          (up to re-ordering of the concurrent messages).
          Since we know that delivering $\{m_{p}, \dots, m_{N}\}$ in the order prescribed by $j = p, p+1, \dots, N$ does not violate causality,
          we set $U = U_{\opsem{\rid'}}$. It follows that $U \in \mathsf{deliverable}_{\opsem{\B}}(\opsem{\rid'})$, 
          which is what we wanted to show. \stopcase

          By construction, delivering $U$ at replica $\opsem{\rid'}$ simulates a merge of $\stsem{\S(\rid)}$ at $\stsem{\S(\rid')}$, since
          \[ \stsem{\S(\rid')} \cup \stsem{\S(\rid)} 
            = \mathsf{Delivered}(\opsem{\rid'}, \opsem{\B}) \cup \stsem{\S(\rid)}
            = \mathsf{Delivered}(\opsem{\rid'}, \opsem{\B}) \cup U,
          \]
          using the fact that $\stsem{\S(\rid)} = D \cup U$ and $D \subseteq \mathsf{Delivered}(\opsem{\rid'}, \opsem{\B})$.
          \newline We conclude $\RR_{2}(\stconfigB, \opconfigA)$ in this case. \stopcase

          \item (\StQuery). Immediate from the hypothesis.

                \stopcase
      \end{itemize}

\subsection{Proof of \cref{thm:main3}}

We need to show that the op-based guest CRDT can simulate the state-based host CRDT.

\[\text{Hypothesis:}\; \QQ_{1}(\stconfigA, \opconfigA)\;
  \text{and we have}\; \stconfigA \xstleadsto{\alpha} \stconfigB,\; \text{some}\; \alpha.
  \]
We proceed by structural induction on the transition $\stconfigA \xstleadsto{\alpha} \stconfigB$.

\begin{itemize}
  \item (\StDeliver). Then $\alpha = \quiet{\tau}$ and for some state $s$,
        we have $\stconfigA \quietleadsto \stconfigB$ where
          \[  \stsem{\G'} = \stsem{\G \cdot (\rid, \dlvr[s], \bot)}, \quad
              \stsem{\S'} = \stsem{\S[ \rid \mapsto \mmap(\S(\rid), s)]}, \quad
              \stsem{\B'} = \stsem{\B \rmv{ (\rid, s) }},
          \]
        which means that state-based replica $\stsem{\rid}$ merged a state $s$ sent by some other replica.
        If $s \leq \stsem{\S(\rid)}$, then $\stsem{\mmap(\S(\rid), s) = \S(\rid)}$, and there is nothing to show,
        so suppose $\neg (s \leq \stsem{\S(\rid)})$. Since the state-space is a join-semilattice, by the hypothesis, there is
        a set $C \in \mathsf{mergeable}_{\opsem{\beta}}(\opsem{r})$ so that we have:
        \[ \stsem{\S(\rid)} \sqcup s = \bigsqcup (C \cup \{\opsem{\S(\rid)} \}).\]

        Let $C = \{s_1, \dots, s_k\}$ for some $k$. Since $C$ is mergeable,
        it follows each $s_i$ satisfies $(\opsem{\rid}, s_i) \in \opsem{\beta}$, and therefore we can
        match with $k$ applications of (\OpDeliver) with respect to replica $\opsem{\rid}$ giving:
          \begin{gather}
            \begin{gathered}
              \opconfigA \quietleadstostar \subG \opconfigB,\\
              \opsem{\G'} = \opsem{\G} \cdot \opsem{\langle (\rid, \dlvr[s_1], \bot), \dots, (\rid, \dlvr[s_k], \bot) \rangle},\\
              \opsem{\S'} = \opsem{\S[\rid \mapsto s]}, \quad
              \opsem{\B'} = \opsem{\B} \opsem{\rmv{(\rid, s_1), \dots, (\rid, s_k)}}.
            \end{gathered}
          \end{gather}
        It now follows immediately that the conditions in \cref{fig:Q1} hold, thus
        \newline
        $\stconfigB \QQ_{1} \opconfigB$ holds.
  \item (\StUpdate). This case is immediate, since if $\stsem{\rid}$ performs an update, then $\opsem{\rid}$ can perform
        an identical update with the (\OpUpdate) rule, and all the conditions in \cref{fig:Q1} still hold on the updated configurations.
  \item (\StSend). Then $\alpha = \quiet{\tau}$, and some replica $\stsem{\rid}$ broadcast its state $\stsem{\S(\rid)}$ to
        all other replicas $\stsem{\rid'} \neq \stsem{\rid}$, yielding
        $\stconfigA \quietleadsto \stconfigB$ where
          \[  \stsem{\G'} = \stsem{\G \cdot (\rid, \send[\S(\rid)], \bot)}, \quad
              \stsem{\S'} = \stsem{\S} \quad
              \stsem{\B'} = \bcast(\stsem{\rid}, \stsem{\S(\rid)})(\stsem{\B}).
          \]
        In this case, the op-based system matches with the reflexive step
        \[ \opconfigA \quietleadstostar \subG \opconfigA, \]
        and we need to show that
        \[ \stconfigB \QQ_1 \opconfigA. \]

        For this, it suffices to show that for any replica $\stsem{\rid'}$, we have the implication
        \begin{gather}
        (\stsem{\rid'}, \stsem{\S(\rid)}) \in \stsem{\B}\\
        \implies \exists C \in \mathsf{mergeable}_{\opsem{\B}}(\opsem{\rid}) 
          : \stsem{\S(\rid')} \sqcup \stsem{\S(\rid)} = \bigsqcup (C \cup \{\opsem{\S(\rid')} \}),
        \end{gather}
        as the other conditions are immediate.

        If $\S(\rid) \leq \S(\rid')$, then $\S(\rid) \sqcup \S(\rid') = \S(\rid')$, and in this case,
        we can take $C = \varnothing$. So, we consider when $\neg(\S(\rid) \leq \S(\rid'))$.
        From the hypothesis, we have that $\stconfigB$ and $\opconfigA$ are well-formed,
        thus we have
        \begin{gather}
          s^{0} \xopstepsto{(\rid, i_1, o_1)} s^{1} \xopstepsto{(\rid, i_2, o_2)} \cdots \xopstepsto{(\rid, i_k, o_k)} s^{k} = \opsem{\S(\rid)},
        \end{gather}
        i.e., the current state of $\opsem{\rid}$ is a result of a sequence of $k \geq 0$ events $\opsem{(\rid, i_1, o_1)}, \dots, \opsem{(\rid, i_k, o_k)}$.
        Without loss of generality, we can assume each event advances $\opsem{\rid}$'s state in the partial order
        (if not, we can just remove the event, then re-index). 
        Thus, each $j$ event has the form of either
        \[ \opsem{(\rid, \upd[\op_{j}], \send[s^{j}])} \quad \textit{or} \quad \opsem{(\rid, \dlvr[m_{j}], \bot)},\]
        where each message $m_{j}$ is a state, i.e., $m_j = s'$ for $s'$ dependent on $j$.

        Note that by construction, op-based deliver events invoke $\opemap$, which is implemented in terms of $\sqcup$:
          \[\forall s, m \in S\, :\, \opemap(s, m) = s \sqcup m, \]
        and op-based updates are composition of $\opprmap$ and $\opemap$, thus we have
        \[ \forall s \in S\,  \op \in \mathtt{Op} :\, \opupmap (\rid, \op, s) = \opemap(\opprmap(\rid, \op, s), s) = s \sqcup \opprmap(\rid, \op, s).\]

        In other words, we can view each $s^{j}$ as a join between $s^{j-1}$ and some message $m_j \in S$, where either $m_j$ was generated locally
        by an update, or delivered over the network. By induction:
          \[ s^{k} = s^{k-1} \sqcup m_{k} = s^{k-2} \sqcup m_{k-1} \sqcup m_{k} = \cdots = s^{0} \sqcup (m_1 \sqcup \cdots \sqcup m_{k-1} \sqcup m_{k}).\]

        Recall that $\opsem{\S(\rid)} = s^{k}$. Noting that $s^{0}$ is the bottom of the lattice, we obtain the equalities:
        \[ \opsem{\S(\rid)} = s^{0} \sqcup (m_1 \sqcup \cdots m_{k-1} \sqcup m_{k}) = m_1 \sqcup \cdots \sqcup m_{k-1} \sqcup m_{k}.\]
      
        Critically, each message $m_j$ was generated locally \emph{and then broadcast}, or it was \emph{delivered} (hence broadcast by some other replica).
        In other words, by well-formedness, any other replica $\opsem{\rid'}$ has `access' to each $m_j$. That is, each replica $\opsem{\rid'}$
        in the current configuration $\opconfigA$
        has delivered some (possibly empty) subset of $\{m_1, \dots, m_{k-1}, m_{k}\}$. Call this set $D$. The `undelivered' messages are 
        $U = \{m_1, \dots, m_{k-1}, m_{k}\} \setminus D$. It is clear that $U$ is a mergeable set, so we can take $C = U$. It follows from the fact that
        $(S, \sqcup)$ is a join-semilattice, and the hypothesis (in particular, that $\stsem{\S} = \opsem{\S}$) that we can write:
        \[ \bigsqcup (C \cup \{\opsem{\S(\rid')} \}) = \bigsqcup (\{m_1, \dots, m_k\} \cup \{\opsem{\S(\rid')}\})
        = \opsem{\S(\rid')} \sqcup (m_1 \sqcup \cdots \sqcup m_k)
        = \stsem{\S(\rid')} \sqcup \stsem{\S(\rid)},
        \]
        which is what we wanted to show. It follows immediately that \[ \stconfigB \QQ_1 \opconfigA. \] \stopcase

  \item (\StQuery). Immediate from the hypothesis. \stopcase
\end{itemize}

Having shown all cases, the proof is finished.

\subsection{Proof of \cref{thm:main4}}

Since $\QQ_2$ (defined in \cref{fig:Q2}) maintains $\opsem{\S} = \stsem{\S}$ and $\opsem{\B} = \stsem{\B}$,
the proof here is markedly simpler.

First, it is clear that there is a simulation of local updates. To see this,
note the two equalities:
  \[
    \opupmap(\rid, \op, s) = \opemap(\opprmap(\rid, \op, s), s)
    = s \sqcup \opprmap(\rid, \op, s),
  \]
and
  \[
    \opprmap(\rid, \op, s) = \stupmap(\rid, \op, s),
  \]
we thus obtain:
\[ \opupmap(\rid, \op, s) = s \sqcup \opprmap(\rid, \op, s) = s \sqcup \stupmap(\rid, \op, s) = \stupmap(\rid, \op, s)\]
using the fact that $\stupmap$ is inflationary. 

The equality $\opsem{\B} = \stsem{\B}$ implies that deliver events are always able to be simulated,
so the only thing we really need to check is that
indeed $\opsem{\B} = \stsem{\B}$ is preserved by the candidate simulation --- but this has to be the case,
since removing a delivered message from both $\opsem{\B}$ and $\stsem{\B}$ preserves the equality,
and state-based replica $\stsem{\rid}$ can always simulate an op-based update by broadcasting directly
after a local update.

\subsection{Proof of \cref{thm:bisim-possible}}

\def\XX{\mathcal{X}}


We give the candidate weak bisimulation relation. Recall
$(-)^{\darrow \opsem{\G}}$ constructs the down set of a message.

\begin{figure}[h!]
  \centering
    \begin{align*}
      \bowtie\, = \{ & (\opconfig{\G}{\S}{\B} , \stconfig{\G}{\S}{\B}) \mid \\
                    & \textit{ --- configurations agree on messages, and replicas agree on delivered/merged messages} \\
                    & \quad (\mathsf{Sent}(\opsem{\G}) = \textstyle \bigcup \mathsf{Sent}(\stsem{\G}))
                      \land
                      \forall \rid \in \replicas\, :\, \mathsf{Delivered}(\rid, \opsem{\G}) = \stsem{\S}(\rid) \\
                    & \textit{ --- replicas agree on their local states} \\
                    & \land \quad \forall \rid \in \replicas\, :\, \opsem{\S}(\rid) = \interp(\stsem{\S}(\rid)) \\
                    & \textit{--- a message is pending in the op-system iff there is a pending downset in the state-system} \\
                    & \land \quad \forall (\rid, m) \in \opsem{\B} \iff (\rid, (m)^{\darrow \opsem{\G}}) \in \stsem{\B}
                  \}.
    \end{align*}
  \caption{Weak simulation relation of the guest CRDT by the host CRDT.}\label{fig:bowtie}
\end{figure}

Recall that for $\bowtie$ to be a weak \emph{bisimulation},
we need to show for all configurations $\opconfigA$, $\stconfigA$,
and labels $\alpha$, if $\opconfigA \bowtie \stconfigA$, we have both

\begin{enumerate}
  \item $\opconfigA \xopleadsto{\alpha} 
  \opconfigB \implies \exists \stconfigB \textit{ s.t. } \stconfigA \stsem{\xleadstostar{\alpha}} \stconfigB$
  \newline
  and $\opconfigB \bowtie \stconfigB$;

  \item $\stconfigA \xstleadsto{\alpha} \stconfigB \implies \exists \opconfigB \textit{ s.t. } \opconfigA \opsem{\xleadstostar{\alpha}} \opconfigB$
  \newline
  and $\opconfigB \bowtie \stconfigB$;
\end{enumerate}

For the proof, suppose $\opconfigA \bowtie \stconfigA$ holds. For the case of an op-based transition
$\opconfigA \xopleadsto{\alpha} \opconfigB$, we can proceed identically to the proof of \cref{thm:main1}
by structural induction on the transition.

For the case of a state-based transition $\stconfigA \xstleadsto{\alpha} \stconfigB$, The proof follows by induction on the
transition, following the same pattern as the proof of \cref{thm:main2}, though there is no longer a \StSend rule to consider, as
now broadcasts happen atomically within a local update. We show how to prove this subcase in particular.

\begin{itemize}
  \item (\StUpdBC). Then, $\alpha = \stsem{(\rid, \upd[\op], \send[H])}$, where $\stsem{H} = \stsem{\upmap(\rid, \op, \S(\rid))}$.
        We match with (\OpUpdate) by setting $\opsem{s'} = \opsem{\upmap(\rid, \op, \S(\rid))}$ and constructing the local transition
          \[ \opsem{\S(\rid)} \xopstepsto{(\rid, \op, \send[m])} \opsem{s'}, \]
        where $\opsem{m} = \opsem{\prmap(\rid, \op, \S(\rid))}$, and therefore obtain
        \[
          \opsem{\G'} = \opsem{\G} \cdot \opsem{(\rid, \upd[\op], \send[m])}, \qquad
          \opsem{\S'} = \opsem{\S[\rid \mapsto s'] }, \qquad
          \opsem{\B'} = \opsem{\bcast(\rid, m)(\B)}.
        \]
        Following the details of the \StUpdate case in the proof of \cref{thm:main2}, one can show
        $\opsem{\S'(\rid)} = \interp(\stsem{\S'(\rid)})$,
        from which it follows that $\mathsf{Sent}(\opsem{\G'}) = \textstyle \bigcup \mathsf{Sent}(\stsem{\G'})$,
        and $\mathsf{Delivered}(\rid', \opsem{\G'}) = \stsem{\S'}(\rid')$, for all $\rid'$.
        It remains to show that
        \[ \forall (\rid, m) \in \opsem{\B'} \iff (\rid, (m)^{\darrow \opsem{\G'}}) \in \stsem{\B'}. \]
        It is sufficient to focus on the newly generated message $m$.

        Note that by construction, $m$ causally proceeds after all 
        $m' \in \mathsf{Delivered}(\opsem{\G'(\rid)})$ (including the `self-delivered' messages from local updates).
        Further, it must be the case that $\stsem{H} = \stsem{\S'(\rid)} = (m)^{\darrow \opsem{\G'}}$.
        This follows from the fact that $m$ was the most recently generated message, and op-based replicas deliver to themselves
        the messages generated from local updates.
        Thus,
        \[ \stsem{\S'(\rid)} = \mathsf{Delivered}(\rid, \opsem{\G'}) = (m)^{\darrow \opsem{\G'}}. \]
        It now immediately follows that $\forall (\rid, m) \in \opsem{\B'} \iff (\rid, (m)^{\darrow \opsem{\G'}}) \in \stsem{\B'}$,
        as desired. We conclude $\opconfigB \bowtie \stconfigB$. \stopcase
\end{itemize}

It now follows from \cref{def:bisimulation} that $\bowtie$ is a weak bisimulation.

\end{document}